\newtheorem{theorem}{Theorem}
\newtheorem{proposition}[theorem]{Proposition}
\newtheorem{corollary}[theorem]{Corollary}
\newtheorem{lemma}[theorem]{Lemma}
\newtheorem{remark}[theorem]{Remark}
\newcommand{\be}{\begin{equation}}
\newcommand{\ee}{\end{equation}}
\newcommand{\bea}{\begin{eqnarray}}
\newcommand{\eea}{\end{eqnarray}}
\newcommand{\ba}{\begin{array}}
	\newcommand{\ea}{\end{array}}
\newcommand{\bean}{\begin{eqnarray*}}
	\newcommand{\eean}{\end{eqnarray*}}
\newcommand{\pa}{\partial}
\begin{document}
\title{Lax structure and tau function  for  large BKP hierarchy }
\author{Wenchuang Guan$^1$, Shen Wang$^1$, Wenjuan Rui$^{1,2}$, Jipeng Cheng$^{1,2*}$}
\dedicatory { $^1$ School of Mathematics, China University of
Mining and Technology, Xuzhou, Jiangsu 221116, P.\ R.\ China;\\$^2$ Jiangsu Center for Applied Mathematics (CUMT), Xuzhou, Jiangsu 221116, P.\ R.\ China}
\thanks{*Corresponding author. Email: chengjp@cumt.edu.cn, chengjipeng1983@163.com.}
\begin{abstract}
  In this paper, we mainly investigate Lax structure and tau function for the large BKP hierarchy, which is also known as Toda hierarchy of B type, or Hirota--Ohta--coupled KP hierarchy, or Pfaff lattice.  Firstly, the large BKP hierarchy can be derived from fermionic BKP hierarchy by using a special bosonization, which is presented in the form of bilinear equation. Then from bilinear equation, the corresponding Lax equation is given, where in particular the relation of flow generator with Lax operator is obtained. Also starting from Lax equation, the corresponding bilinear equation and existence of tau function are discussed. After that, large BKP hierarchy is viewed as sub--hierarchy of modified Toda (mToda) hierarchy, also called two--component first modified KP hierarchy. Finally by using two basic Miura transformations from mToda to Toda, we understand two typical relations between large BKP tau function $\tau_n(\mathbf{t})$ and Toda tau function $\tau_n^{\rm Toda}(\mathbf{t},-\mathbf{t})$, that is, $\tau_n^{{\rm Toda}}(\mathbf{t},-{\mathbf{t}})=\tau_n(\mathbf{t})\tau_{n-1}(\mathbf{t})$ and $\tau_n^{{\rm Toda}}(\mathbf{t},-{\mathbf{t}})=\tau_n^2(\mathbf{t})$. Further we find $\big(\tau_n(\mathbf{t})\tau_{n-1}(\mathbf{t}),\tau_n^2(\mathbf{t})\big)$ satisfies bilinear equation of mToda hierarchy.\\
\textbf{Keywords}:  \ large BKP hierarchy,\ \ Toda hierarchy,\ \ mToda hierarchy,\ \  bilinear equation,  \ \ Lax equation,  \ \ tau function.\\
\textbf{PACS}: 02.30.Ik\\
\textbf{2010 MSC}: 35Q53, 37K10, 37K40
\end{abstract}

\maketitle

\section{Introduction}

Free fermions and bosons have been widely used in the classical integrable systems \cite{Alexandrov,Jimbo,DJKM,Kac2023,vandeleur2012,Jing2020,Kac1998,Kac2003} such as KP, modified KP, Toda, BKP and CKP hierarchies, where the corresponding tau functions are given in the forms of vacuum expectation values of some operators in terms of fermions or bosons. For one fixed fermionic or bosonic bilinear equations, different bosonizations of fermions or bosons will give rise to different kinds of integrable systems \cite{Jimbo,Kac2023,Kac1998,Kac2003}. For example, different bosonizations of fermionic KP hierarchy \cite{Alexandrov,Jimbo,DJKM,Kac1998,Kac2023,Kac2003}
$$\sum_{j\in\mathbb{Z}+1/2}\psi^{\pm}_j\tau\otimes\psi^{\mp}_{-j}\tau=0,$$
will give rise to usual KP, Toda and multi-component KP hierarchies.
Here in this paper, we will consider another bosonization of fermionic BKP hierarchy, which will lead to large BKP hierarchy.

Let us recall that fermionic BKP hierarchy \cite{DJKM,Jimbo,Kac1998,vandeLeur2022} is given by
  \begin{align}
S_B(\tau\otimes\tau)=\frac{1}{2}\tau\otimes\tau,\ \tau\in\mathcal{F},\label{fermionbilinear}
\end{align}
where $
S_B=\sum_{j\in\mathbb{Z}}(-1)^j\phi_j\otimes\phi_{-j}$, and  neutral free fermion $\phi_i$ $(i\in\mathbb{Z})$  \cite{DJKM,Jimbo,Kac1998,vandeLeur2022} satisfying
\begin{align}
\phi_i\phi_j+\phi_j\phi_i=(-1)^i\delta_{i+j,0}.\label{phicomun}
\end{align}
The Fock space $\mathcal{F}$ is spanned by
\begin{align}
\mathcal{F}={\rm span} \left \{\phi_{j_1}\cdots\phi_{j_l}|0\rangle| j_1>j_2\cdots>j_l>0 \right\},\label{BKPFockspace}
 \end{align}
with the vacuum vector $|0\rangle\in\mathcal{F}$ satisfying \cite{Kac1998}
\begin{align}
&\phi_{-j}|0\rangle=0,\quad \phi_0|0\rangle=\frac{1}{\sqrt{2}}|0\rangle,\quad j>0.\label{opreator-vacuum }
\end{align}
If we identify  $\mathcal{F}$ with $\mathcal{B}=\mathbb{C}[t_1,t_3,\cdots]$ by the isomorphism $\sigma_t:\mathcal{F}\rightarrow\mathcal{B}$ given by
\begin{align*}
\sigma_t\phi(z)\sigma_t^{-1}=
\frac{1}{\sqrt{2}}e^{\hat{\xi}(t_\circ,z)}e^{-2\hat{\xi}(\hat{\pa},z^{-1})},\quad \sigma_t(|0\rangle)=1,
\end{align*}
where $\phi(z)=\sum_{i\in\mathbb{Z}}\phi_iz^i,$\ $\hat{\xi}(t_\circ,z)=\sum_{k\geq0}t_{2k+1}z^{2k+1}$ and $\hat{\pa}=(\pa_{t_1},{\pa_{t_3}/3},\cdots)$.
Then the fermionic BKP hierarchy will become the usual BKP hierarchy also called {\bf small BKP hierarchy} \cite{DJKM,Zabrodin2021}, that is
\begin{align*}
{\rm Res}_{z}{z^{-1}}e^{\hat{\xi}(t_\circ-t_\circ',z)}\tau(t_\circ-2[z^{-1}]_\circ)
\tau(t_\circ'+2[z^{-1}]_\circ)=\tau(t_\circ)\tau(t_\circ'),
\end{align*}
where ${\rm Res}_z\sum_i a_iz^i=a_{-1}$ and $[z^{-1}]_\circ=(z^{-1},3z^{-3},\cdots)$.

Besides the bosonization above for $\mathcal{F}$, there is another
bosonization \cite{Kac1998}. To do this, we need to realize neutral free fermions  as  charged free fermions in the way below,
 \begin{eqnarray}
  \psi_{-\frac{i}{2}}^{+}=\left ( \sqrt{-1 }  \right ) ^i\phi_i,&& \psi_{\frac{i}{2}}^{+}=\left ( \sqrt{-1 }  \right ) ^{i+1}\phi_{-(i+1)},\nonumber\\
  \psi_{\frac{i}{2}}^{-}=\left ( \sqrt{-1 }  \right ) ^{i}\phi_{-i} ,&&\psi_{-\frac{i}{2}}^{-}=\left ( \sqrt{-1 }  \right ) ^{i+1}\phi_{(i+1)}, \quad i\in\mathbb{Z}^+_{\rm odd},\label{Lemma:psiphirepresent}
\end{eqnarray}
then
\begin{align*}
\psi_m^{\lambda}\psi_n^{\mu}+\psi_n^{\mu}\psi_m^{\lambda}=\delta_{\lambda,-\mu}\delta_{m,-n},\ \lambda,\mu=\pm,\ n,m\in\mathbb{Z}+{1}/{2}.
\end{align*}
Note that by \eqref{opreator-vacuum } and \eqref{Lemma:psiphirepresent},
$\psi_{n}^{\pm}|0\rangle=0$ ($n>0$).
So $\mathcal{F}$ in \eqref{BKPFockspace} can also be spanned by
\begin{align*}
\psi_{n_1}^{+}\cdots\psi_{n_r}^{+}\psi_{m_1}^{-}\cdots\psi_{m_s}^{-}|0\rangle,
 \end{align*}
 where $n_1<n_2\cdots<n_r<0$ and  $m_1<m_2\cdots<m_s<0$. Here we should note that $\sqrt{2}\phi_0|0\rangle=|0\rangle$ in \eqref{opreator-vacuum }.
Further if set
${\rm charge\ of}\  \psi_{n}^{\pm}=\pm1$,
then $\mathcal{F}$ can be decomposed according to different charges
\begin{align}
\mathcal{F}=\oplus_{k\in\mathbb{Z}}\mathcal{F}_k,\label{Fockdistict}
\end{align}
 where $\mathcal{F}_k$  is  the set of elements in $\mathcal{F}$ with charge $k$.
If define charged fermionic fields $$\psi^{\pm}(z)=\sum_{n\in\mathbb{Z}+1/2}\psi^{\pm}_nz^{-n-1/2},$$ then we can identify $\mathcal{F}$ with $$\widehat{\mathcal{B}}=\mathbb{C}[q,q^{-1},t_1,t_2,...],$$
 by the usual boson--fermion correspondence $\sigma_{\mathbf{t},q}:\mathcal{F}\rightarrow\widehat{\mathcal{B}},$ which is given by
\begin{align}
\sigma_{\mathbf{t},q}\psi^{\pm}(z)\sigma_{\mathbf{t},q}^{-1}&=q^{\pm 1}z^{\pm q\frac{\pa}{\pa q}}e^{\pm\xi(\mathbf{t},z)}e^{\mp\xi(\tilde{\pa},z^{-1})},\quad
\sigma_{\mathbf{t},q}(|0\rangle)=1,\label{bosionfermis}
\end{align}
with $\xi(\mathbf{t},z)=\sum_{k=1}^{\infty}t_kz^k$ and
$\tilde{\pa}=(\pa_{t_1},\pa_{t_2}/2,\pa_{t_3}/3,\cdots)$.

Next operator $S_B$  on $\mathcal{F}\otimes\mathcal{F}$
 can be rewritten into \cite{Kac1998}
\begin{align}\label{SBoperataor}
S_{B}=\phi_0\otimes\phi_0+S^{+}+ S^{-},\quad S^{\pm}=\sum_{n\in \mathbb{Z}+{1/2}}\psi_n^{\pm}\otimes\psi_{-n}^{\mp}.
\end{align}
Due to \eqref{Fockdistict} and \eqref{bosionfermis}, we can define $\tau_{n}(\mathbf{t})$ by  boson-fermion correspondence $\sigma_{\mathbf{t},q}$ in the way below
\begin{align*}
\sigma_{\mathbf{t},q}(\tau)=\sum_{n\in\mathbb{Z}}\tau_n(\mathbf{t})q^n,
 \end{align*}
 for $\tau\in\mathcal{F}$ satisfying the fermionic BKP hierarchy \eqref{fermionbilinear}.
So if applying  $\sigma_{\mathbf{t},q}\otimes\sigma_{\mathbf{t},q}$ to  \eqref{fermionbilinear},    one can obtain
\begin{align}\label{Hirotabkp}
&\text{{\rm Res}$_{z}$}\left(z^{n'-n''-2}\tau_{n'-1}(\mathbf{t'}-[z^{-1}])
\cdot\tau_{n''+1}(\mathbf{t}''+{[z^{-1}]})e^{\xi(\mathbf{t'}-\mathbf{t''},z)}\right.\nonumber\\
&\quad\quad\quad\left.
+z^{n''-n'-2}\tau_{n'+1}(\mathbf{t'}+{[z^{-1}]})
\cdot\tau_{n''-1}(\mathbf{t}''-[z^{-1}])e^{-\xi(\mathbf{t'}-\mathbf{t''},z)}\right)\nonumber\\
&=\frac{1}{2}\left(1-(-1)^{n'+n''}\right)\tau_{n'}
(\mathbf{t'})\tau_{n''}(\mathbf{t}''),
\end{align}
which is just the bilinear equation of {\bf the large BKP hierarchy } \cite{Jimbo,Kac1998,takasaki2009,vandeLeur2004}. Here $[z^{-1}]=(z^{-1},{z^{-2}/2},{z^{-3}/3},\cdots)$.

Note that if let $n'+n''\in2\mathbb{Z}$, \eqref{Hirotabkp} will become  the DKP hierarchy \cite{Kac1998,vandeLeur2004}, that is
   \begin{align*}
&\text{{\rm Res}$_{z}$}\left(z^{n'-n''-2}\tau_{n'-1}(\mathbf{t}'-[z^{-1}])
\cdot\tau_{n''+1}(\mathbf{t}''+{[z^{-1}]})e^{\xi(\mathbf{t}'-\mathbf{t''},z)}\right.\nonumber\\
&\quad\quad\quad\ \ \left.
+z^{n''-n'-2}\tau_{n'+1}(\mathbf{t}'+{[z^{-1}]})
\cdot\tau_{n''-1}(\mathbf{t}''-[z^{-1}])e^{-\xi(\mathbf{t}'-\mathbf{t''},z)}\right)=0,
\end{align*}
which is also known as Pfaff lattice\cite{Adler1999INRN,Adler2002math,Adler2002,Li2021,Chang2018}. While for $n'+n''\in2\mathbb{Z}+1$,  \eqref{Hirotabkp} will become the modified DKP ${\rm(mDKP)}$ hierarchiy \cite{Kac1998,vandeLeur2004}
\begin{align*}
&\text{{\rm Res}$_{z}$}\left(z^{n'-n''-2}\tau_{n'-1}(\mathbf{t}'-[z^{-1}])
\cdot\tau_{n''+1}(\mathbf{t}''+{[z^{-1}]})e^{\xi(\mathbf{t}'-\mathbf{t''},z)}\right.\nonumber\\
&\quad\quad\quad\ \ \left.
+z^{n''-n'-2}\tau_{n'+1}(\mathbf{t}'+{[z^{-1}]})
\cdot\tau_{n''-1}(\mathbf{t}'-[z^{-1}])e^{-\xi(\mathbf{t}'-\mathbf{t''},z)}\right)
=\tau_{n'}(\mathbf{t})\tau_{n''}(\mathbf{t''}).
\end{align*}
So we can believe
\begin{align*}
{\rm large\ BKP =DKP +mDKP}.
\end{align*}

For large BKP hierarchy, there are many different names.
The first appearance of large BKP should be given by Jimbo and
Miwa in \cite{Jimbo} as $D'_\infty$--integrable hierarchy. Later it was rediscovered by Hirota and Ohta in \cite{Hirota1991} as the coupled
KP hierarchy. After that,  it is also known as the Pfaff lattice \cite{Adler1999INRN,Adler2002math,Adler2002} or charged BKP hierarchy \cite{Kac1998,vandeLeur2004}. The name of large BKP should be given by Orlov,  Shiota and Takasaki \cite{takasaki2009}. Recently in \cite{Krichever2023,Zabrodin2023,Chang2018,Li2021}, large BKP hierarchy is named as  Toda hierarchy  of B type (B--Toda). The generalizations of large BKP hierarchy include multi--component generalization (e.g. \cite{Takasaki2009,Kac2023}) and  D--Toda hierarchy (see (10) in \cite{cheng2021SM}).

Though large BKP hierarchy has been extensively studied, there are still several basic questions unsolved completely. We believe the first question unsolved should be the Lax structure. In \cite{Adler2002}, the Lax equation of Pfaff lattice (i.e. DKP lattice) is expressed by semi--infinite matrices without the information of mDKP lattice. While in \cite{vandeLeur2004}, the corresponding Lax equation is given in matrix pseudo--differential operators for fixed discrete variable $n$, rather than the entire large BKP hierarchy. Recently, the expected Lax equation for large BKP hierarchy is obtained in \cite{Krichever2023}, but the relation of the flow generator $B_k$ (see \eqref{L+adjoint} in Section \ref{section3}) in Lax equation with Lax operators are not clear. So the first topic of this paper is to give the whole Lax structure from bilinear equation \eqref{Hirotabkp}, especially the relation of $B_k$ with Lax operators. Also starting from Lax structure, we obtain the corresponding bilinear equation and existence of tau functions for large BKP.

Another question is about tau functions of large BKP hierarchy. To our best knowledge, there are two different relations for tau functions of large BKP and Toda hierarchies. One is given in \cite{Adler2002math,vandeLeur2015,Li2021,Chang2018} by
\begin{align}
 \tau_n^{{\rm Toda}}(\mathbf{t},-{\mathbf{t}})=\tau_n^2(\mathbf{t}),\label{taufirst}
\end{align}
where $\tau_n^{{\rm Toda}}$ is Toda tau function. This relation is just the analogue of KP and BKP tau functions, which is just expected relation for Toda and B--Toda. While recently in \cite{Prokofev2023,Krichever2023}, another different relation is proposed, that is,
\begin{align}
 \tau_n^{{\rm Toda}}(\mathbf{t},-{\mathbf{t}})=\tau_n(\mathbf{t})\tau_{n-1}(\mathbf{t}).\label{tau2}
\end{align}
Note that relations \eqref{taufirst} and  \eqref{tau2} are quite different. So here the second topic is
to understand these two different relations. To do this, large BKP hierarchy is viewed as sub--hierarchy of modified Toda (m--Toda) hierarchy \cite{Rui2024}, which is just the 2--component 1st modified KP hierarchy \cite{Jimbo,vandeLeur2015},  relating Toda hierarchy by Miura transformations. We find that relations \eqref{taufirst} and  \eqref{tau2} are corresponding to two basic  Miura transformations from large BKP hierarchy to Toda hierarchy. Further it can be found that $\big(\tau_n(\mathbf{t})\tau_{n-1}(\mathbf{t}),\tau_n^2(\mathbf{t})\big)$ satisfies bilinear equation of mToda hierarchy.

This paper is organized in the following way.   In Section \ref{section3} for large BKP hierarchy, we firstly derive the Lax equation from bilinear equation, and then obtain the bilinear equation starting from Lax equation, and further show the existence of  tau function. Next in Section \ref{section4}, large BKP hierarchy is viewed as the sub--hierarchy of mToda hierarchy, then after the review for Miura links between mToda and Toda hierarchies, we understand the relations of tau functions between large BKP and Toda hierarchies. At last, some conclusions and discussions are provided in Section \ref{section6}.

\section{  Lax structure of  large BKP hierarchy}\label{section3}
In this section, we will discuss the Lax structure of the large BKP hierarchy. Firstly, we will derive  Lax equation from bilinear equation \eqref{Hirotabkp}, and  then conversely, obtain  bilinear equation \eqref{Hirotabkp} from Lax equation. Here we emphasize that we give the explicit expression of $B_k$ in Lax equation by Lax operators.
\subsection{ From  bilinear equation  to  Lax equation }\label{Section3.1}
If introduce  wave functions $\Psi^{\pm}(n,\mathbf{t},z)$ in the way below
\begin{align}
\Psi^{+}(n,\mathbf{t},z)
=\frac{\tau_{n-1}(\mathbf{t}-[z^{-1}])}{\tau_n(\mathbf{t})}
e^{\xi(\mathbf{t},z)}z^{n},\
\Psi^{-}(n,\mathbf{t},z)
=\frac{\tau_{n+1}(\mathbf{t}+[z^{-1}])}{\tau_n(\mathbf{t})}
e^{-\xi(\mathbf{t},z)}z^{-n-1},\label{wavefunction}
\end{align}
then the bilinear equation  \eqref{Hirotabkp} can be expressed in terms of wave functions
\begin{align}
{\rm Res}_{z}{z^{-1}}\left(\Psi^+(n',\mathbf{t'},z)
\Psi^-(n'',\mathbf{t}'',z)+\Psi^-(n',\mathbf{t'},z)
\Psi^+(n'',\mathbf{t}'',z)\right)=\frac{1}{2}\left(1-(-1)^{n'-n''}\right).\label{wavefunctionbilinear}
\end{align}
It can be found that  \eqref{wavefunctionbilinear} is equivalent to
\begin{eqnarray}
\sum_{j\in\mathbb{Z}}\text{Res$_{z}$}{z^{-1}}\left(\Psi^+(n,\mathbf{t'},z)
\Psi^-(n+j,\mathbf{t''},z)+\Psi^-(n,\mathbf{t'},z)
\Psi^+(n+j,\mathbf{t''},z)\right)\Lambda^j=\sum_{j\in \mathbb{Z}}\frac{1}{2}\left(1-(-1)^{j}\right)\Lambda^j.\label{wavefunctionshift}
\end{eqnarray}

Next  define  wave operators
\begin{eqnarray}
W^+(n,\mathbf{t},\Lambda)
=S^+(n,\mathbf{t},\Lambda)e^{\xi(\mathbf{t},\Lambda)},\quad W^-(n,\mathbf{t},\Lambda)
=e^{-\xi(\mathbf{t},\Lambda)}S^-(n,\mathbf{t},\Lambda),\label{waveoperator}
\end{eqnarray}
and
\begin{eqnarray}
S^+(n,\mathbf{t},\Lambda)=\sum_{j=0}^\infty b_j(n,\mathbf{t})\Lambda^{-j},\quad S^-(n,\mathbf{t},\Lambda)=\sum_{j=0}^\infty \Lambda^{-j}\bar{b}_j(n,\mathbf{t}),\label{Soperator}
\end{eqnarray}
such that
\begin{align}
\Psi^{+}(n,\mathbf{t},z)=W^+(n,\mathbf{t},\Lambda)(z^{n}),\quad
\Psi^{-}(n,\mathbf{t},z)
=\left(W^-(n,\mathbf{t},\Lambda)\right)^*(z^{-n-1})\label{Wavefunctions},
\end{align}
where $\Lambda$ is  the shift operator $\Lambda(f(n,\mathbf{t}))=f(n+1,\mathbf{t})$ and
$(\sum_{j\in\mathbb{Z}}f_j(n,\mathbf{t})\Lambda^{j})^{*}
=\sum_{j\in\mathbb{Z}}\Lambda^{-j}f_j(n,\mathbf{t}).
$
\begin{lemma}\label{Lemma:ABlambda}\cite{Adler1999}
	Let $A(n,\Lambda)=\sum_{i\in\mathbb{Z}}a_i(n)\Lambda^i,$ $B(n,\Lambda)=\sum_{k\in\mathbb{Z}}b_k(n)\Lambda^k$ are two  pseudo--difference operators, then
	\begin{equation}
	A(n,\Lambda)\cdot B(n,\Lambda)^*=\sum_{j\in \mathbb{Z}} \text{\rm Res$_{z}$}{z^{-1}}\left(A(n,\Lambda)(z^{\pm n})\cdot B(n+j,\Lambda)(z^{\mp n\mp j})\right)\Lambda^j,\label{ABoperatorlambda}
	\end{equation}
where $B(n,\Lambda)^*=\sum_{k\in\mathbb{Z}}\Lambda^{-k}b_k(n)$.
\end{lemma}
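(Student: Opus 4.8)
The plan is to verify \eqref{ABoperatorlambda} by a direct computation, checking that the two sides agree coefficient by coefficient in the powers of $\Lambda$. The only structural inputs needed are the commutation rule of the difference algebra, $\Lambda^i a(n)=a(n+i)\Lambda^i$, together with the action of $\Lambda$ on exponentials of $n$, namely $\Lambda^i(z^{\pm n})=z^{\pm(n+i)}$, which says that on $z^{\pm n}$ the shift $\Lambda$ acts simply as multiplication by $z^{\pm 1}$.

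First I would compute the left-hand side. Writing $B(n,\Lambda)^*=\sum_k\Lambda^{-k}b_k(n)$ and multiplying out,
$$A(n,\Lambda)\cdot B(n,\Lambda)^*=\sum_{i,k}a_i(n)\Lambda^{i-k}b_k(n)=\sum_{i,k}a_i(n)b_k(n+i-k)\Lambda^{i-k},$$
where the second equality uses the commutation rule to move $\Lambda^{i-k}$ past $b_k$. Reindexing by $j=i-k$ collects this into $\sum_j\big(\sum_i a_i(n)b_{i-j}(n+j)\big)\Lambda^j$, which identifies the coefficient of $\Lambda^j$ on the left.

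Next I would evaluate the right-hand side, taking the top signs (the bottom signs are handled identically and yield the same answer). Introducing the symbol $\hat A(n,z)=\sum_i a_i(n)z^i$, the action rule gives $A(n,\Lambda)(z^n)=z^n\hat A(n,z)$ and $B(n+j,\Lambda)(z^{-n-j})=z^{-n-j}\hat B(n+j,z^{-1})$, so that their product equals $z^{-j}\hat A(n,z)\hat B(n+j,z^{-1})$. Since ${\rm Res}_z\,z^{-1}(\cdot)$ extracts the coefficient of $z^0$, applying it picks out the coefficient of $z^j$ in $\hat A(n,z)\hat B(n+j,z^{-1})=\sum_{i,k}a_i(n)b_k(n+j)z^{i-k}$, namely $\sum_i a_i(n)b_{i-j}(n+j)$. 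This matches the coefficient of $\Lambda^j$ computed from the left-hand side, so the two sides agree term by term and \eqref{ABoperatorlambda} follows.

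The computation is essentially bookkeeping, so the main obstacle is purely one of conventions: keeping straight whether $\Lambda$ acts on the coefficient functions $a_i(n),b_k(n)$ or on the exponent of $z^{\pm n}$, and confirming that the $n\mapsto n+j$ shift built into the second factor $B(n+j,\Lambda)$ is exactly what is needed to make the $\Lambda^j$-coefficients line up. Once the action conventions underlying \eqref{Wavefunctions} are pinned down, no analytic input (such as convergence) is required, since everything is a formal identity in the algebra of pseudo-difference operators.
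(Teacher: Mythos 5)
Your computation is correct: the coefficient of $\Lambda^j$ on the left, $\sum_i a_i(n)b_{i-j}(n+j)$, matches the residue on the right for both sign choices, and the argument is the standard formal verification of this symbol identity. The paper itself gives no proof of this lemma (it is quoted from Adler--van Moerbeke), so there is nothing to compare against; your direct term-by-term check is exactly the expected argument and uses only the commutation rule $\Lambda^i a(n)=a(n+i)\Lambda^i$ and the action $\Lambda^i(z^{\pm n})=z^{\pm(n+i)}$, consistent with the conventions in \eqref{Wavefunctions}.
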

\noindent So by \eqref{wavefunctionshift} and Lemma \ref{Lemma:ABlambda}, it can be found that
	\begin{align}
 &W^+(n,\mathbf{t'},\Lambda)\cdot\Lambda^{-1}\cdot W^-(n,\mathbf{t'}',\Lambda)+\left(W^+(n,\mathbf{t}'',\Lambda)
 \cdot\Lambda^{-1}\cdot W^-(n,\mathbf{t'},\Lambda)\right)^*=\sum_{j\in \mathbb{Z}}\frac{1}{2}\left(1-(-1)^{j}\right)\Lambda^j.\label{hbewoperator}
	\end{align}
Further by \eqref{waveoperator}, we have
\begin{align}
&& S^+(n,\mathbf{t'},\Lambda)\Lambda^{-1} e^{\xi(\mathbf{t'}-\mathbf{t}'',\Lambda)} S^-(n,\mathbf{t}'',\Lambda)+S^-(n,\mathbf{t'},\Lambda)^*\Lambda e^{-\xi(\mathbf{t'}-\mathbf{t}'',\Lambda^{-1})} S^+(n,\mathbf{t}'',\Lambda)^*=\sum_{j\in \mathbb{Z}}\frac{1}{2}\left(1-(-1)^{j}\right)\Lambda^j.\label{Swaveoperator}
\end{align}
\begin{proposition}\label{Prop:t1kflow}
The wave operators $S^{\pm}$, $W^{\pm}$ and wave functions $\Psi^\pm$ satisfy
\begin{align}
&S^+\cdot\Lambda^{-1}\cdot S^-=W^+\cdot\Lambda^{-1}\cdot W^-=\iota_{\Lambda^{-1}}
(\Lambda-\Lambda^{-1})^{-1},\label{S+operators}\\
&\pa_{\mathbf{t}_k}S^{+}=B_{k}S^{+}-S^{+}\Lambda^{k},\quad \pa_{\mathbf{t}_k}S^{-}=\Lambda^{k}S^{-}+S^{-}B_{k}^*,\label{WaveoperaLambda}\\
&\pa_{\mathbf{t}_k}W^{+}=B_{k}W^{+},\quad \pa_{\mathbf{t}_k}W^{-}=W^{-}B_{k}^*,\quad\pa_{\mathbf{t}_k}\Psi^\pm=B_{k}(\Psi^\pm),\label{woperator}
\end{align}
where $\iota_{\Lambda^{\pm1}}
{(\Lambda-\Lambda^{-1})^{-1}}$ means  expanding ${(\Lambda-\Lambda^{-1})^{-1}}$ in $\mathbb{C}((\Lambda^{\pm1}))$ and
\begin{eqnarray}
&&B_{k}=\left(\Big(S^+\Lambda^{k-1} S^-\Big)_{\geq 0}+\Big(S^+\Lambda^{k-1} S^-\Big)^*_{<0}\right)\cdot(\Lambda-\Lambda^{-1}).\label{Bkdefine}
\end{eqnarray}
Here $(\sum_{i\in\mathbb{Z}}a_i\Lambda^i)_{\geq k}=\sum_{i\geq k}a_i\Lambda^i$, $(\sum_{i\in\mathbb{Z}}a_i\Lambda^i)_{< k}=\sum_{i< k}a_i\Lambda^i$.
\end{proposition}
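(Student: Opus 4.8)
The plan is to run the Sato--Wilson dressing scheme, taking the bilinear identity in operator form \eqref{Swaveoperator} (equivalently \eqref{hbewoperator}) as the single input and reading off all three groups of identities from it. Throughout I abbreviate $J=\La-\La^{-1}$, $G=\iota_{\La^{-1}}(\La-\La^{-1})^{-1}$ and $Q_k=S^+\La^{k-1}S^-$. The one structural fact I record at the outset is the pair of telescoping ``genuine inverse'' identities $GJ=JG=1$ and $G^*J=-1$; the latter holds because $G^*=-\iota_{\La}(\La-\La^{-1})^{-1}$ lives in the opposite completion of $(\La-\La^{-1})^{-1}$, and it is exactly these two facts that let me pass cleanly between the two expansion directions.

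First I would establish \eqref{S+operators}. Since $\xi(\mathbf{t},\La)=\sum_k t_k\La^k$ has $n$--independent coefficients, $e^{\pm\xi(\mathbf{t},\La)}$ commutes with $\La$, so $W^+\La^{-1}W^-=S^+e^{\xi}\La^{-1}e^{-\xi}S^-=S^+\La^{-1}S^-$, giving the first equality for free. For the second, set $\mathbf{t}'=\mathbf{t}''$ in \eqref{hbewoperator}: writing $P=S^+\La^{-1}S^-$, this reads $P+P^*=\sum_{j\ \mathrm{odd}}\La^j$. By \eqref{Soperator} both $S^\pm$ have order $\le 0$, hence $P$ has order $\le-1$ and $P^*$ order $\ge1$; there is no overlap, so matching the negative part forces $P=\sum_{m\ge0}\La^{-2m-1}=G$. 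A telescoping check confirms $GJ=1$, so $G$ is a genuine right inverse of $J$ and therefore $(S^+)^{-1}=\La^{-1}S^-J$ and $\La(S^+)^*=((S^-)^*)^{-1}G^*$.

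Next I would define $B_k:=\pa_{\mathbf{t}_k}W^+\cdot(W^+)^{-1}$, so that $\pa_{\mathbf{t}_k}W^+=B_kW^+$ is tautological; substituting $W^+=S^+e^{\xi}$ yields $\pa_{\mathbf{t}_k}S^+=B_kS^+-S^+\La^k$, and applying to $z^n$ gives $\pa_{\mathbf{t}_k}\Psi^+=B_k\Psi^+$, i.e. the $+$ parts of \eqref{WaveoperaLambda}--\eqref{woperator}. Using $(S^+)^{-1}=\La^{-1}S^-J$ I rewrite $B_k=\pa_{\mathbf{t}_k}S^+(S^+)^{-1}+S^+\La^k(S^+)^{-1}=P^++Q_kJ$, where $P^+:=\pa_{\mathbf{t}_k}S^+(S^+)^{-1}$ has order $\le0$. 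To pin down $P^+$ I would take the $\mathbf{t}'_k$--derivative of \eqref{hbewoperator} and set $\mathbf{t}'=\mathbf{t}''$; after pushing the exponentials through $\La^{\pm1}$ and inserting $\La^{-1}S^-=(S^+)^{-1}G$ and $\La(S^+)^*=((S^-)^*)^{-1}G^*$, this collapses to the clean relation $P^+G+P^-G^*=Q_k^*-Q_k$, with $P^-:=\pa_{\mathbf{t}_k}(S^-)^*((S^-)^*)^{-1}$ of order $\ge0$. Since $P^+G$ has order $\le-1$ while $P^-G^*$ has order $\ge1$, extracting the order $\le-1$ part isolates $P^+G=(Q_k^*)_{<0}-(Q_k)_{<0}$; right--multiplying by $J$ gives $P^+=\big((Q_k^*)_{<0}-(Q_k)_{<0}\big)J$, and substituting back produces exactly $B_k=\big((Q_k)_{\ge0}+(Q_k)^*_{<0}\big)J$, which is \eqref{Bkdefine}.

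Finally, for the $-$ flows with the \emph{same} generator I would set $C_k:=\pa_{\mathbf{t}_k}(W^-)^*((W^-)^*)^{-1}$ and feed $\pa_{\mathbf{t}_k}W^+=B_kW^+$, $\pa_{\mathbf{t}_k}(W^-)^*=C_k(W^-)^*$ into the same differentiated identity to get $B_kG+C_kG^*=0$; right--multiplication by $J$ together with $GJ=1$, $G^*J=-1$ collapses this to $B_k=C_k$, whence $\pa_{\mathbf{t}_k}W^-=W^-B_k^*$, $\pa_{\mathbf{t}_k}S^-=\La^kS^-+S^-B_k^*$ and $\pa_{\mathbf{t}_k}\Psi^-=B_k\Psi^-$. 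The step I expect to be most delicate is the bookkeeping of the two opposite expansions of $(\La-\La^{-1})^{-1}$: one must consistently use $G$ (finitely many positive powers) on the $+$ side and $G^*$ on the $-$ side, and it is precisely the telescoping identities $GJ=1$, $G^*J=-1$ that make both the extraction of $P^+$ and the identification $B_k=C_k$ go through. Without separating these completions, the order estimates that isolate $P^+G$ and that force the diagonal term of $B_k$ to come out correctly would be ambiguous.
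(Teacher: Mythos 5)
Your proposal is correct and follows essentially the same route as the paper: both extract \eqref{S+operators} by matching negative powers of $\Lambda$ in the operator bilinear identity \eqref{Swaveoperator} at equal times, and both obtain \eqref{WaveoperaLambda}--\eqref{Bkdefine} by differentiating that identity in $\mathbf{t}'_k$, setting $\mathbf{t}'=\mathbf{t}''$, and splitting by the sign of the powers of $\Lambda$. Your version merely reorganizes the bookkeeping (defining $B_k=\pa_{\mathbf{t}_k}W^+(W^+)^{-1}$ first and making the telescoping identities $GJ=1$, $G^*J=-1$ explicit), which is a presentational refinement rather than a different argument.
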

\begin{proof}
Firstly if let $\mathbf{t''}=\mathbf{t'}=\mathbf{t}$, by comparing    negative  powers of $\Lambda$ in both sides of \eqref{Swaveoperator}, we can get \eqref{S+operators}.
If apply $\pa_{\mathbf{t'_k}}$ in  both sides of \eqref{Swaveoperator} and set $\mathbf{t''}=\mathbf{t'}=\mathbf{t}$,
\begin{eqnarray}
\pa_{\mathbf{t}_k}S^+\cdot\Lambda^{-1}\cdot S^-+S^+\cdot\Lambda^{k-1}\cdot S^-
=-\pa_{\mathbf{t}_k}(S^-)^*\cdot\Lambda \cdot (S^+)^*+(S^-)^*\cdot\Lambda^{1-k}\cdot (S^+)^*,\label{Sflow}
\end{eqnarray}
then by considering the negative  and  positive powers of $\Lambda$ in (\ref{Sflow}) respectively, one can obtain
\begin{eqnarray*}
&& \pa_{\mathbf{t}_k}S^+=\left(-S^+\cdot\Lambda^{k-1}\cdot S^-+(S^-)^*\cdot\Lambda^{1-k}\cdot (S^+)^*)\right)_{<0}\cdot(\Lambda-\Lambda^{-1})\cdot S^+,\\
&&\pa_{\mathbf{t}_k}(S^-)^*=\left(-S^+\cdot\Lambda^{k-1}\cdot S^-+(S^-)^*\cdot\Lambda^{1-k}\cdot (S^+)^*\right)_{>0}\cdot(\Lambda^{-1}-\Lambda)\cdot(S^-)^*,
\end{eqnarray*}
where we have used   (\ref{S+operators}). Then based upon this, \eqref{WaveoperaLambda} can be easily obtained. As for \eqref{woperator}, it can be easily proved by \eqref{waveoperator}, \eqref{Wavefunctions} and \eqref{WaveoperaLambda}.
\end{proof}
\begin{remark}
According to \eqref{Bkdefine}, one can find
\begin{align}
B_k^*\cdot \left(\Lambda-\Lambda^{-1}\right)=-(\Lambda-\Lambda^{-1})\cdot B_{k},\label{Bkconjuctur}
\end{align}
which guarantees that constraint \eqref{S+operators} is consistent with  \eqref{WaveoperaLambda} and \eqref{woperator}, that is, the constraint \eqref{S+operators} is invariant under the time flow $\mathbf{t}_k$. Further
 it can be found that $B_k$ has the following form
\begin{align}
B_k=\left(a_{0k}+\sum_{j=1}^{k-1}(a_{jk}\Lambda^j
+\Lambda^{-j}a_{jk})\right)(\Lambda-\Lambda^{-1}),\label{BKseconddefine}
\end{align}
which is given in \cite{Krichever2023}. But {\bf its connections with Lax operators are not clear}, so next we will try to express $B_k$ in terms of Lax operators.
\end{remark}
If introduce  Lax operators $L^{\pm}$ as follows
\begin{align}
L^+=S^+\cdot\Lambda\cdot (S^+)^{-1}=\sum_{j=-1}^\infty u^+_j\Lambda^{-j},\quad
L^-=(S^-)^*\cdot\Lambda^{-1}\cdot \left((S^-)^*\right)^{-1}=\sum_{j=-1}^\infty u^-_j\Lambda^{j}\label{Llaxdefine},
\end{align}
then according to \eqref{WaveoperaLambda}, one can obtain
\begin{eqnarray}
&&(L^+)^*=(\Lambda-\Lambda^{-1})\cdot L^{-}\cdot\iota_{\Lambda}{\left(\Lambda-\Lambda^{-1}\right)^{-1}},\ \pa_{\mathbf{t}_k}L^{\pm}=[B_{k},L^{\pm}].\label{L+adjoint}
\end{eqnarray}
Note that $B_k$ is expressed by wave operators $S^{\pm}$. By the lemma below, one can express  $B_k$ by $L^{\pm}$.
\begin{lemma}\label{Lemma:OperatorB}
Given two pseudo--difference operators $A=\sum_{j=-k}^{+\infty}a_j(n,\mathbf{t})\Lambda^{-j}$ and $B=\sum_{j=-k}^{+\infty}b_j(n,\mathbf{t})\Lambda^{j}$ $(k\geq1)$ satisfying
\begin{align}
A^{*}\cdot(\Lambda-\Lambda^{-1})=(\Lambda-\Lambda^{-1})\cdot B,\label{ABconj}
\end{align}
if set $A=\tilde{A}(\Lambda-\Lambda^{-1})$ and $B=\tilde{B}(\Lambda^{-1}-\Lambda)$,
then
$
\tilde{A}^*=\tilde{B}
$
and
\begin{align}
\left(\tilde{A}_{\geq0}+\tilde{B}_{<0}\right)(\Lambda-\Lambda^{-1})
=A_{\geq1}-B_{\leq-1}-
\left(A_{\geq1}-B_{\leq-1}\right)|_{\Lambda=1}.\label{Coperat}
\end{align}
 \end{lemma}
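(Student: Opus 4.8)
The plan is to prove the two assertions in turn: first the adjoint identity $\tilde{A}^{*}=\tilde{B}$, and then to use it to reduce \eqref{Coperat} to a short computation of boundary coefficients. Throughout I use that $\Lambda-\Lambda^{-1}$ is invertible in the algebra of pseudo--difference operators, with two--sided inverse $\iota_{\Lambda^{\mp1}}(\Lambda-\Lambda^{-1})^{-1}$; this is also what makes $\tilde{A}=A\cdot\iota_{\Lambda^{-1}}(\Lambda-\Lambda^{-1})^{-1}\in\mathbb{C}((\Lambda^{-1}))$ and $\tilde{B}=B\cdot\iota_{\Lambda}(\Lambda^{-1}-\Lambda)^{-1}\in\mathbb{C}((\Lambda))$ well defined, so that the factorizations $A=\tilde{A}(\Lambda-\Lambda^{-1})$ and $B=\tilde{B}(\Lambda^{-1}-\Lambda)=-\tilde{B}(\Lambda-\Lambda^{-1})$ make sense.

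For $\tilde{A}^{*}=\tilde{B}$, I would substitute the factorizations into the hypothesis \eqref{ABconj}. Since $A=\tilde{A}(\Lambda-\Lambda^{-1})$ gives $A^{*}=(\Lambda-\Lambda^{-1})^{*}\tilde{A}^{*}=(\Lambda^{-1}-\Lambda)\tilde{A}^{*}=-(\Lambda-\Lambda^{-1})\tilde{A}^{*}$, equation \eqref{ABconj} becomes $(\Lambda-\Lambda^{-1})\tilde{A}^{*}(\Lambda-\Lambda^{-1})=(\Lambda-\Lambda^{-1})\tilde{B}(\Lambda-\Lambda^{-1})$. Cancelling the invertible factor $\Lambda-\Lambda^{-1}$ on the left and on the right yields $\tilde{A}^{*}=\tilde{B}$.

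For \eqref{Coperat} I would first rewrite the right--hand side through the factorizations: $A_{\geq1}=\bigl[\tilde{A}(\Lambda-\Lambda^{-1})\bigr]_{\geq1}$ and $-B_{\leq-1}=\bigl[\tilde{B}(\Lambda-\Lambda^{-1})\bigr]_{\leq-1}$, so the target is to compare $\tilde{A}_{\geq0}(\Lambda-\Lambda^{-1})+\tilde{B}_{<0}(\Lambda-\Lambda^{-1})$ with $\bigl[\tilde{A}(\Lambda-\Lambda^{-1})\bigr]_{\geq1}+\bigl[\tilde{B}(\Lambda-\Lambda^{-1})\bigr]_{\leq-1}$. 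The key structural observation is that $\tilde{A}_{<0}(\Lambda-\Lambda^{-1})$ is supported in degrees $\leq0$ while $\tilde{B}_{\geq0}(\Lambda-\Lambda^{-1})$ is supported in degrees $\geq-1$; hence the two expressions automatically agree in every degree $\geq1$ and every degree $\leq-2$, and their difference is concentrated in the two degrees $0$ and $-1$. Extracting those coefficients shows the difference equals $\bigl((\tilde{B})_{0}-(\tilde{A})_{0}\bigr)\Lambda^{-1}+\bigl((\tilde{B})_{-1}-(\tilde{A})_{1}\bigr)$, where $(\,\cdot\,)_{j}$ denotes the coefficient of $\Lambda^{j}$.

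At this point the first assertion enters decisively: $\tilde{B}=\tilde{A}^{*}$ forces $(\tilde{B})_{0}=(\tilde{A})_{0}$, so the $\Lambda^{-1}$ term cancels and the discrepancy collapses to a single multiplication (degree--zero) operator. To pin it down I would apply both sides of \eqref{Coperat} to the constant function $1$: the left--hand side annihilates it because $(\Lambda-\Lambda^{-1})(1)=0$, while for any operator $P$ one has $P(1)=P|_{\Lambda=1}$; the residual multiplication operator is therefore forced to equal $-\bigl(A_{\geq1}-B_{\leq-1}\bigr)|_{\Lambda=1}$, which is exactly the correction term in \eqref{Coperat}. I expect the main obstacle to be purely the bookkeeping of the degree $0$ and degree $-1$ boundary terms, together with the realization that the adjoint relation $\tilde{B}=\tilde{A}^{*}$ is precisely what kills the degree $-1$ contribution and turns the discrepancy into a scalar; once that is seen, the evaluation at $\Lambda=1$ completes the argument with no further computation.
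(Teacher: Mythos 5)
Your proposal is correct and follows essentially the same route as the paper: both arguments obtain $\tilde{A}^*=\tilde{B}$ by substituting the factorizations into \eqref{ABconj}, then compare the two sides of \eqref{Coperat} degree by degree, using the support of $\tilde{A}_{<0}(\Lambda-\Lambda^{-1})$ and $\tilde{B}_{\geq0}(\Lambda-\Lambda^{-1})$ to match the parts in degrees $\geq1$ and $\leq-1$ (with $\tilde{A}^*_{[0]}=\tilde{B}_{[0]}$ killing the degree $-1$ discrepancy) and the identity $(\Lambda-\Lambda^{-1})(1)=0$ to fix the degree-$0$ term. The only difference is presentational: you isolate the discrepancy in degrees $0$ and $-1$ explicitly, while the paper computes the three projections $P_{\geq1}$, $P_{\leq-1}$, $P_{[0]}$ of the left-hand side directly.
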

 \begin{proof}
 Firstly by \eqref{ABconj}, we can get $\tilde{A}^*=\tilde{B}$. So $\tilde{A}^*_{[0]}=\tilde{B}_{[0]}$, where $(\sum_{i\in\mathbb{Z}}a_i\Lambda^i)_{[0]}=a_0$.
If denote
 \begin{align}
 P=\left(\tilde{A}_{\geq0}+\tilde{B}_{<0}\right)(\Lambda-\Lambda^{-1}),\label{Pdefine}
 \end{align}
then due to $\left(\tilde{B}_{<0}\cdot (\Lambda-\Lambda^{-1})\right)_{\geq 1}=0$, we can find
 \begin{align}
 P_{\geq1}&=\left(\Big(\tilde{A}_{\geq0}+\tilde{B}_{<0}\Big)(\Lambda-\Lambda^{-1})\right)_{\geq1}
 =\left((\tilde{A}_{\geq0}+\tilde{A}_{<0})\cdot (\Lambda-\Lambda^{-1})\right)_{\geq1}=\left(\tilde{A}_{\geq0}\cdot (\Lambda-\Lambda^{-1})\right)_{\geq1}=A_{\geq1}.\label{p1define}
 \end{align}
 Similarly by $\left(\tilde{A}_{>0}\cdot (\Lambda-\Lambda^{-1})\right)_{\leq -1}=0$, we can get
  \begin{align}
 P_{\leq-1}=-B_{\leq-1}.\label{p-1define}
 \end{align}
 Next by \eqref{Pdefine}, we can find the actions of $P$ on $1$ vanish, i.e., $P\left(1\right)=0$. Therefore
\begin{align}
(P)_{[0]}=-\left(\tilde{A}_{\geq0}+\tilde{B}_{<0}\right)(\Lambda-\Lambda^{-1})
|_{\Lambda=1}=-\left({A}_{\geq1}-{B}_{\leq-1}\right)
|_{\Lambda=1}.\label{P0define}
\end{align}
 Finally  \eqref{Coperat} can be proved by summarizing \eqref{p1define}--\eqref{P0define}.
 \end{proof}
 \begin{theorem}\label{Propo:Bkdefine}
 In terms of Lax operators $L^{\pm}$,
\begin{eqnarray}\label{b1kandlax}
B_{k}=(L^{+})^k_{\geq 1}-(L^{-})^k_{\leq -1}-\left((L^{+})^k_{\geq 1}-(L^{-})^k_{\leq -1}\right)|_{\Lambda=1}.
\end{eqnarray}
\end{theorem}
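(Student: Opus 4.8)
The plan is to reduce the whole statement to Lemma \ref{Lemma:OperatorB} by choosing its inputs so that the defining formula \eqref{Bkdefine} becomes exactly the left-hand side of \eqref{Coperat}. Concretely I would set
$$\tilde{A}=S^+\Lambda^{k-1}S^-,\qquad \tilde{B}=\tilde{A}^*=(S^-)^*\Lambda^{1-k}(S^+)^*,\qquad A=\tilde{A}(\Lambda-\Lambda^{-1}),\qquad B=\tilde{B}(\Lambda^{-1}-\Lambda).$$
With this choice, \eqref{Bkdefine} reads $B_k=\big(\tilde{A}_{\geq0}+\tilde{B}_{<0}\big)(\Lambda-\Lambda^{-1})$, which is precisely the quantity $P$ on the left of \eqref{Coperat}. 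Thus the theorem amounts to identifying the two operators that appear on the right of \eqref{Coperat}, namely proving $A=(L^+)^k$ and $B=(L^-)^k$, after which \eqref{Coperat} delivers \eqref{b1kandlax} verbatim.

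To establish $A=(L^+)^k$ I would insert $(S^+)^{-1}S^+$ and invoke the constraint \eqref{S+operators}. Writing $S^+\Lambda^{k-1}S^-=\big(S^+\Lambda^{k}(S^+)^{-1}\big)\big(S^+\Lambda^{-1}S^-\big)=(L^+)^k\cdot\iota_{\Lambda^{-1}}(\Lambda-\Lambda^{-1})^{-1}$, where the first factor is $(L^+)^k$ by \eqref{Llaxdefine} and the second is evaluated by \eqref{S+operators}, and then multiplying on the right by $\Lambda-\Lambda^{-1}$, the two reciprocal factors cancel and leave $A=(L^+)^k$. The dual identity $B=(L^-)^k$ proceeds symmetrically after taking $*$ of \eqref{S+operators}: one computes $(S^-)^*\Lambda(S^+)^*=\big(S^+\Lambda^{-1}S^-\big)^*=\big(\iota_{\Lambda^{-1}}(\Lambda-\Lambda^{-1})^{-1}\big)^*=-\iota_{\Lambda}(\Lambda-\Lambda^{-1})^{-1}$, so that $\tilde{B}=(L^-)^k\cdot\big(-\iota_{\Lambda}(\Lambda-\Lambda^{-1})^{-1}\big)$ by \eqref{Llaxdefine}, and hence $B=\tilde{B}(\Lambda^{-1}-\Lambda)=(L^-)^k$.

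It then remains to check the hypotheses of Lemma \ref{Lemma:OperatorB}. The degree bounds are automatic from \eqref{Llaxdefine}: $L^+$ has top order $\Lambda$ and $L^-$ bottom order $\Lambda^{-1}$, so $(L^+)^k$ and $(L^-)^k$ have exactly the admissible forms $\sum_{j\geq-k}a_j\Lambda^{-j}$ and $\sum_{j\geq-k}b_j\Lambda^{j}$. The conjugation hypothesis \eqref{ABconj}, $A^*(\Lambda-\Lambda^{-1})=(\Lambda-\Lambda^{-1})B$, holds automatically because $\tilde{B}=\tilde{A}^*$ forces it through the anti-automorphism property of $*$; equivalently, under the identification $A=(L^+)^k$, $B=(L^-)^k$ it is just the $k$-th power of the adjoint relation \eqref{L+adjoint}, the inner factors $\iota_{\Lambda}(\Lambda-\Lambda^{-1})^{-1}$ and $(\Lambda-\Lambda^{-1})$ telescoping to the identity. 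With all hypotheses in place, \eqref{Coperat} applied to $A=(L^+)^k$ and $B=(L^-)^k$ gives \eqref{b1kandlax}.

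The step I expect to demand the most care is the bookkeeping around the two expansions $\iota_{\Lambda^{\pm1}}(\Lambda-\Lambda^{-1})^{-1}$: one must confirm that the factorizations $S^+\Lambda^{k-1}S^-=(L^+)^k\,\iota_{\Lambda^{-1}}(\Lambda-\Lambda^{-1})^{-1}$ and its $*$-dual are genuine identities in the ring of pseudo-difference operators, that the reciprocal factors truly cancel after multiplication by $\Lambda-\Lambda^{-1}$, and that the sign and change of expansion in $\big(\iota_{\Lambda^{-1}}(\Lambda-\Lambda^{-1})^{-1}\big)^*=-\iota_{\Lambda}(\Lambda-\Lambda^{-1})^{-1}$ are tracked correctly; once those are pinned down, the remainder is purely formal.
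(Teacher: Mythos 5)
Your proposal is correct and follows essentially the same route as the paper: the paper's proof likewise establishes $(L^{+})^k=S^+\Lambda^{k-1}S^-(\Lambda-\Lambda^{-1})$ and $(L^{-})^k=(S^-)^*\Lambda^{-k+1}(S^+)^*(\Lambda^{-1}-\Lambda)$ from \eqref{S+operators} and \eqref{Llaxdefine}, obtains the conjugation hypothesis $\left((L^{+})^k\right)^*(\Lambda-\Lambda^{-1})=(\Lambda-\Lambda^{-1})(L^-)^k$ from \eqref{L+adjoint}, and then invokes Lemma \ref{Lemma:OperatorB} together with \eqref{Bkdefine}. You have merely filled in the intermediate cancellations and the sign/expansion bookkeeping that the paper leaves implicit.
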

\begin{proof}
Firstly by \eqref{L+adjoint}, we can get
\begin{align*}
\left((L^{+})^k\right)^*(\Lambda-\Lambda^{-1})=(\Lambda-\Lambda^{-1})(L^-)^k.
\end{align*}
Then by \eqref{S+operators} and \eqref{Llaxdefine},
\begin{align*}
(L^{+})^k=S^+\Lambda^{k-1}S^-(\Lambda-\Lambda^{-1}),\quad
(L^{-})^k=(S^-)^*\Lambda^{-k+1}(S^+)^*(\Lambda^{-1}-\Lambda).
\end{align*}
Finally according to
\eqref{Bkdefine} and Lemma \ref{Lemma:OperatorB}, this theorem can be easily proved.
\end{proof}
\begin{remark}\label{BkdefineDelta}
Here \eqref{Llaxdefine}, \eqref{L+adjoint} and \eqref{b1kandlax} are the Lax equations of the large BKP hierarchy. By the constraint  \eqref{L+adjoint}, we can find the initial terms in $L^{\pm}$ coincide, that is
\begin{align*}
 u_{-1}^+=u_{-1}^-,
 \end{align*}
 which shows that the large BKP hierarchy is just the Toda  hierarchy with constraint of type $B$ in \cite{Krichever2023}. If denote $u=u_{-1}^+=u_{-1}^-$ and introduce $\Delta=\Lambda-1$, $\Delta^*=\Lambda^{-1}-1$, then
\begin{align}
L^+=u\Delta+\sum_{j=0}^{+\infty}\tilde{u}_j^{+}\Delta^{-j},\quad
L^-=u\Delta^*+\sum_{j=0}^{+\infty}\tilde{u}_j^{-}(\Delta^*)^{-j},\label{reLaxdefine}
\end{align}
where we have used $\Lambda^{-l}=(1+\Delta)^{-l}=\sum_{j=0}^{+\infty}\binom{-l}{j} \Delta^{-j}$ and $\Lambda^{l}=(1+\Delta^*)^{-l}=\sum_{j=0}^{+\infty}\binom{-l}{j} (\Delta^*)^{-j}$.
The constraint in \eqref{L+adjoint} becomes
\begin{align*}
(\Delta-\Delta^*)(L^{+})^{*}=(\Delta-\Delta^*)L^{-}.
\end{align*}
Note that $(L^{+})^k_{\geq 1}-(L^{+})^k_{\geq 1}|_{\Lambda=1}=P^+_{\geq0}\cdot\Delta$ and
$(L^{-})^k_{\leq -1}-(L^{-})^k_{\leq -1}|_{\Lambda=1}=P^-_{\geq0}\cdot\Delta^*$, so we can get
\begin{align*}
B_k=(L^+)^k_{\Delta,\geq1}-(L^-)^k_{\Delta^*,\geq1}.
\end{align*}
Here $\sum_{i\in\mathbb{Z}}(a_iP^i)_{P, \geq k}=\sum_{i\geq k}a_iP^i$ with $P=\Delta$ or $\Delta^*$.
\end{remark}
  \begin{proposition}
$B_k$ satisfies
\begin{eqnarray}
&&\pa_{\mathbf{t}_k}B_l-\pa_{\mathbf{t}_l}B_k+[B_{l},B_{k}]=0,\label{paklaxandblax}
\end{eqnarray}
which implies
$[\partial_{\mathbf{t}_k}, \partial_{\mathbf{t}_l}]=0.$
\end{proposition}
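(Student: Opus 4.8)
The plan is to derive the zero--curvature (Zakharov--Shabat) equation \eqref{paklaxandblax} directly from the wave operator flow in \eqref{woperator}, rather than from the explicit projection formula \eqref{b1kandlax} for $B_k$. The key inputs are the relation $\pa_{\mathbf{t}_k}W^+=B_kW^+$ and the elementary fact that $\pa_{\mathbf{t}_k}$ and $\pa_{\mathbf{t}_l}$, being partial derivatives in independent time variables, commute as operations on the wave operator. Since $W^+=S^+e^{\xi(\mathbf{t},\Lambda)}$ with $S^+=\sum_{j\geq0}b_j\Lambda^{-j}$ having invertible leading coefficient $b_0=\tau_{n-1}/\tau_n$, the operator $W^+$ is invertible in the algebra of pseudo--difference operators; this invertibility is ultimately what lets us strip $W^+$ off.

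First I would compute the two mixed second derivatives of $W^+$. Applying \eqref{woperator} twice gives
\begin{align*}
\pa_{\mathbf{t}_l}\pa_{\mathbf{t}_k}W^+&=\pa_{\mathbf{t}_l}\big(B_kW^+\big)=(\pa_{\mathbf{t}_l}B_k)W^++B_kB_lW^+,\\
\pa_{\mathbf{t}_k}\pa_{\mathbf{t}_l}W^+&=(\pa_{\mathbf{t}_k}B_l)W^++B_lB_kW^+.
\end{align*}
Because $\pa_{\mathbf{t}_k}\pa_{\mathbf{t}_l}W^+=\pa_{\mathbf{t}_l}\pa_{\mathbf{t}_k}W^+$, subtracting these identities yields
\begin{align*}
\big(\pa_{\mathbf{t}_l}B_k-\pa_{\mathbf{t}_k}B_l+[B_k,B_l]\big)W^+=0.
\end{align*}
Multiplying on the right by $(W^+)^{-1}$ and negating produces exactly \eqref{paklaxandblax}. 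The only point requiring care---and the main, though mild, obstacle---is justifying this last cancellation: one must confirm that right multiplication by the invertible, triangular operator $W^+$ is injective on pseudo--difference operators, which is immediate from its leading term being $b_0e^{\xi}$ with $b_0$ nonvanishing. An entirely analogous argument starting from $\pa_{\mathbf{t}_k}W^-=W^-B_k^*$ gives the conjugate identity and provides a consistency check.

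Finally, to obtain the stated consequence $[\pa_{\mathbf{t}_k},\pa_{\mathbf{t}_l}]=0$, I would apply both orders of the mixed flow to the Lax operators and use \eqref{L+adjoint}. A short computation gives
\begin{align*}
(\pa_{\mathbf{t}_k}\pa_{\mathbf{t}_l}-\pa_{\mathbf{t}_l}\pa_{\mathbf{t}_k})L^{\pm}=\big[\pa_{\mathbf{t}_k}B_l-\pa_{\mathbf{t}_l}B_k,L^{\pm}\big]+\big[B_l,[B_k,L^{\pm}]\big]-\big[B_k,[B_l,L^{\pm}]\big].
\end{align*}
Applying the Jacobi identity to the last two terms turns them into $\big[[B_l,B_k],L^{\pm}\big]$, so the right--hand side collapses to $\big[\pa_{\mathbf{t}_k}B_l-\pa_{\mathbf{t}_l}B_k+[B_l,B_k],L^{\pm}\big]$, which vanishes by \eqref{paklaxandblax}. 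Hence the two flows commute on $L^{\pm}$, and since the whole hierarchy is built from $L^{\pm}$ (equivalently from the wave operators), the commutativity $[\pa_{\mathbf{t}_k},\pa_{\mathbf{t}_l}]=0$ holds throughout.
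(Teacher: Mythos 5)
Your argument is correct, and within Section \ref{Section3.1} (where the wave operators have already been constructed from the bilinear equation and satisfy \eqref{woperator}) it is a perfectly sound and shorter derivation: the zero--curvature identity falls out of equating the mixed partials of $W^+$ and cancelling the invertible operator $W^+$ on the right, and your Jacobi--identity argument for $[\pa_{\mathbf{t}_k},\pa_{\mathbf{t}_l}]=0$ matches what the paper intends. However, the paper deliberately proves \eqref{paklaxandblax} by a different route that never touches the wave operators: it sets $P_{kl}=\pa_{\mathbf{t}_k}B_l-\pa_{\mathbf{t}_l}B_k+[B_l,B_k]$ and shows $(P_{kl})_{\geq1}$, $(P_{kl})_{\leq-1}$ and $(P_{kl})_{[0]}$ all vanish, using only the Lax equations $\pa_{\mathbf{t}_k}L^{\pm}=[B_k,L^{\pm}]$ in \eqref{L+adjoint}, the expression \eqref{b1kandlax} of $B_k$ through $(L^{\pm})^k$, the projection identity \eqref{AoperatorA}, and the fact $B_k(1)=0$. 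The reason this matters is that \eqref{paklaxandblax} is reused in the converse direction of the paper (Section on ``From Lax equation to wave operators''), where it is exactly the compatibility condition guaranteeing that the initial value problem \eqref{initialproblem} for $S^+$ has a unique solution --- i.e.\ it is the input needed to \emph{construct} the wave operators from the Lax data. If one only had your proof, that step would be circular, since it presupposes the very wave operators whose existence is being established. So your approach buys brevity where the dressing is already known to exist, while the paper's projection computation buys an identity valid at the level of the Lax operators alone; if you want your version to stand on its own in the paper's logical architecture, you should either restrict its scope explicitly to the bilinear-equation side or supplement it with an argument not relying on $W^{\pm}$.
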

\begin{proof}
Firstly denote $P_{kl}=\pa_{\mathbf{t}_k}B_l-\pa_{\mathbf{t}_l}B_k+[B_{l},B_{k}],$ then by \eqref{L+adjoint} and \eqref{p1define}
\begin{align*}
(P_{kl})_{\geq 1}=[B_k,(L^+)^l]_{\geq 1}-[B_l,(L^+)^k]_{\geq 1}+[B_l,B_k]_{\geq 1}.
\end{align*}
Note that  for two pseudo--difference operators A and B \begin{align}
[A,B]_{\geq1}=[A_{\geq1},B]_{\geq1}
+[A_{\leq0},B_{\geq1}]_{\geq1},\label{AoperatorA}
\end{align}
then by $(L^+)^l-B_l=(L^+)^l_{\leq0}+(L^-)^l_{\leq-1}-{B_l}_{[0]},$ we can get
\begin{small}\begin{align*}
\left[B_k,(L^+)^l-B_{l}\right]_{\geq1}
=\left[(L^+)^k_{\geq 1}, (L^+)^l-B_l\right]_{\geq 1}.
\end{align*}
\end{small}
So based upon above, one can find by  \eqref{p1define} and \eqref{AoperatorA}
\begin{small}\begin{align*}
(P_{kl})_{\geq 1}&=
\left[(L^+)^k_{\geq 1}, (L^+)^l-B_l\right]_{\geq 1}-\left[{B_l}, (L^+)^k\right]_{\geq 1}
=\left[(L^+)^k_{\geq1}, (L^+)^l\right]_{\geq 1}
+\left[(L^+)^k_{\leq 0}, B_l\right]_{\geq 1}\\
&=\left[(L^+)^k_{\geq1}, (L^+)^l\right]_{\geq 1}
+\left[(L^+)^k_{\leq 0},(L^+)^l\right]_{\geq 1}
=\left[(L^+)^k, (L^+)^l\right]_{\geq 1}=0,
\end{align*}
\end{small}
where have used  $[A_{\leq 0}, B_{\leq 0}]_{\geq 1}=0$.
Similarly one can prove $(P_{kl})_{\leq -1}=0.$

Note that by \eqref{b1kandlax}, one can find  $B_k\left(1\right)=0$, which means the sum of coefficients of $\Lambda$ in $B_k$ is zero. So we can get $$\pa_{\mathbf{t}_k}B_l\left(1\right)
=\pa_{\mathbf{t}_l}B_k\left(1\right)
=[B_k,B_l]\left(1\right)=0,$$
then
\begin{align*}
(P_{kl})_{[0]}=-\Big((P_{kl})_{\geq1}-(P_{kl})_{\leq-1}\Big)(1)=0.
\end{align*}
Summarize above results, we can finally get \eqref{paklaxandblax}. As for $[\partial_{\mathbf{t}_k}, \partial_{\mathbf{t}_l}]=0,$ it can be  proved by \eqref{L+adjoint} and \eqref{paklaxandblax}.
\end{proof}
\noindent {\bf Example:}
By the constraints on Lax operators $L^{\pm}$ $(see\ \eqref{L+adjoint})$, it can be found that all  $u^-_{j}$ in Lax operator $L^-$ can be expressed by  $u^+_{j}$ in $L^+$, that is
\begin{align*}
 &u^-_{-1}(n)= u^+_{-1}(n)\overset{\bigtriangleup}{=}u(n),\qquad u^-_{0}(n)=u^+_{0}(n+1),\\
 &u^-_{j}(n-j)- u^-_{j-2}(n-j+2)= u^+_{j}(n+1)- u^+_{j-2}(n-1),\quad j=1,2,\cdots,
\end{align*}
where  $u^{\pm}_{j}(n)$ means  $u^{\pm}_{j}(n,\mathbf{t})$.
Next we give some examples of $B_k$
\begin{align*}
\bullet\ &B_1=u(n)(\Lambda-\Lambda^{-1}),\\
\bullet\ &B_2=\left(u(n)\big(u^+_{0}(n+1)+u^+_{0}(n)\big)
+u(n)u(n+1)\Lambda
+\Lambda^{-1}u(n)u(n+1)\right)(\Lambda-\Lambda^{-1}).
\end{align*}
By  comparing coefficients of $\Lambda^i$ in both sides of Lax equation \eqref{L+adjoint}, one can get
\begin{align}
u(n)_{t_1}&=u(n)\left(
u^+_{0}(n+1)-u^+_{0}(n))\right),\label{u-1operator}\\
u^+_{0}(n)_{t_1}&=u(n)\left(u^+_{1}(n+1)
-u(n-1)+u(n+1)\right)-u^+_{1}(n)
u(n-1),\label{u01operator}\\
u(n)_{t_2}&=u(n)u(n+1)\left(u^+_{1}(n+2)
+u(n+2)-2u(n)\right)\nonumber\\
&\quad+u(n)
\left(u^+_{0}(n+1)+u^+_{0}(n)\right)
\left(u^+_{0}(n+1)-u^+_{0}(n)\right)\nonumber\\
&\qquad+u(n)u(n-1)\left(u(n)
-u^+_{1}(n)\right).\label{u+1operator}
\end{align}
If  set
 \begin{align*}
 f_0(n)=u(n)\left(u_0^+(n+1)+u^+_0(n)\right),
 \end{align*}
and  eliminate $u^+_1$ in \eqref{u-1operator}--\eqref{u+1operator}, we can get the integrable discretization of the Novikov--Veselov equation \cite{Krichever2023,Grushevsky2010}.
\begin{align*}
&\pa_{\mathbf{t_1}}\left({\rm log}\Big(u(n)u(n+1)\Big)\right)=\frac{f_0(n+1)}{u(n+1)}-
\frac{f_0(n)}{u(n)}, \\
&\pa_{\mathbf{t_2}}u(n)-\pa_{\mathbf{t_1}}f_0(n)
+2{u(n)}^2\left({u(n+1)}-
{u(n-1)}\right)=0.
\end{align*}

\subsection{ From  Lax equation   to  bilinear equation }
In this subsection, we will derive  bilinear equation \eqref{Hirotabkp} from Lax equation \eqref{L+adjoint}, which will be done in the following steps:
\begin{center}
\fbox{\textit{
{{\rm Lax operator}}}}$\longrightarrow $\fbox{\textit{
{{\rm wave operator}}}} $\longrightarrow $ \fbox{\textit{
{{\rm wave function}}}}$\longrightarrow $\fbox{\textit{
{{\rm tau function}}
}}
\end{center}

\subsubsection{\bf{From Lax equation to wave operators}} Firstly the Lax equation of the large BKP hierarchy is defined by
\begin{align}
&\pa_{\mathbf{t}_k}L^{\pm}=[B_{k},L^{\pm}],\label{laxequ2}\\
&B_{k}=(L^{+})^k_{\geq 1}-(L^{-})^k_{\leq -1}-\left((L^{+})^k_{\geq 1}-(L^{-})^k_{\leq -1}\right)|_{\Lambda=1},\label{bkequ2}
\end{align}
with the  Lax operators
\begin{eqnarray*}
L^+=\sum_{j=-1}^\infty u^+_j\Lambda^{-j},\quad
L^-=\sum_{j=-1}^\infty u^-_j\Lambda^{j}\label{L-lax2},
\end{eqnarray*}
 satisfying  the following  constraint
 \begin{eqnarray}
(L^{+})^*=(\Lambda-\Lambda^{-1})\cdot L^{-}\cdot\iota_{\Lambda}{\left(\Lambda-\Lambda^{-1}\right)^{-1}}.\label{laxcons}
\end{eqnarray}
Here we would like to point out that the systems of \eqref{laxequ2}--\eqref{laxcons} are consistent due to
\begin{align}
B_k^*\cdot \left(\Lambda-\Lambda^{-1}\right)=-(\Lambda-\Lambda^{-1})\cdot B_{k},\label{Bkcon}
\end{align}
which can be derived by Lemma \ref{Lemma:OperatorB}.

Next we will introduce the wave operators from  the Lax equation by the following lemmas.
\begin{lemma}\label{Lemma:abfunction}
Given  $a(n)\neq0$ and $b(n)$, if
\begin{align}
f(n)=f(n+1)a(n)+b(n),\label{fnfunction}
\end{align}
then
\begin{align*}
f(n)=\left\{\begin{matrix}
 \Pi_{k=0}^{\infty}a(n-k-1)^{-1},  &b(n)=0, \\
  -\sum_{l=0}^{\infty}b(n-l-1)\Pi_{k=0}^{l}a(n-k-1)^{-1}, &b(n)\neq0.
\end{matrix}\right.
\end{align*}
\end{lemma}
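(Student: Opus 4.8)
The plan is to treat the two cases of \eqref{fnfunction} together by first turning it into a forward recursion. Since $a(n)\neq0$, the relation $f(n)=f(n+1)a(n)+b(n)$ is equivalent to $f(n+1)=a(n)^{-1}\big(f(n)-b(n)\big)$, or, after shifting the index by one, to $f(n)=a(n-1)^{-1}f(n-1)-a(n-1)^{-1}b(n-1)$. The two displayed formulas are nothing but the result of iterating this relation toward $n\to-\infty$ and passing to the limit, so the cleanest route is to write down the candidate solutions and verify directly that they satisfy \eqref{fnfunction}; uniqueness up to a homogeneous solution is then the standard fact for a first-order linear recursion. Throughout I treat $a(n),b(n),f(n)$ as scalar (commuting) functions of $n$ (and $\mathbf{t}$), which is the setting in which the products and sums below are manipulated, and the case split simply records that for $b\equiv0$ one wants the nonzero homogeneous solution, whereas for $b\neq0$ one wants the particular solution with vanishing homogeneous part.

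For the homogeneous case $b\equiv0$ I would check that $f(n)=\prod_{k=0}^{\infty}a(n-k-1)^{-1}$ satisfies $f(n)=f(n+1)a(n)$. The key observation is the index-shift identity $f(n+1)=\prod_{k=0}^{\infty}a(n-k)^{-1}=a(n)^{-1}\prod_{k=0}^{\infty}a(n-k-1)^{-1}=a(n)^{-1}f(n)$, whence $f(n+1)a(n)=f(n)$ as required. This is the homogeneous solution normalized so that its value tends to $1$ as $n\to-\infty$ in the formal-product sense.

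For the inhomogeneous case the candidate $f(n)=-\sum_{l=0}^{\infty}b(n-l-1)\prod_{k=0}^{l}a(n-k-1)^{-1}$ is a particular solution, and I would verify it by substitution. Writing $A(n,l)=\prod_{k=0}^{l}a(n-k-1)^{-1}$, the main computational identity is $A(n+1,l)\,a(n)=A(n,l-1)$, obtained by cancelling the leading factor $a(n)^{-1}$ against $a(n)$. Substituting the candidate into $f(n+1)a(n)+b(n)$ and applying this identity gives $f(n+1)a(n)=-\sum_{l=0}^{\infty}b(n-l)A(n,l-1)$; the $l=0$ term uses the empty product $A(n,-1)=1$ and equals $-b(n)$, while re-indexing the remaining terms by $l\mapsto l-1$ reproduces exactly $f(n)$. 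Hence $f(n+1)a(n)+b(n)=-b(n)+f(n)+b(n)=f(n)$, so \eqref{fnfunction} holds.

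The only genuinely delicate point---and hence the main obstacle---is not the algebra but the bookkeeping of the boundary behaviour: a first-order recursion carries a one-parameter family of solutions, so the displayed formulas single out a specific normalization (vanishing homogeneous contribution, respectively limiting value $1$, as $n\to-\infty$), and the infinite product and series must be read in the appropriate formal sense. I would make this precise by recording the finite-step iteration $f(n)=\big(\prod_{k=0}^{L-1}a(n-k-1)^{-1}\big)f(n-L)-\sum_{l=0}^{L-1}b(n-l-1)\prod_{k=0}^{l}a(n-k-1)^{-1}$ and observing that the stated formulas are its $L\to\infty$ limit under precisely the normalization that the wave operators supply automatically in the later application.
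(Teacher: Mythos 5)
Your proof is correct, and it takes a different route from the paper's. The paper \emph{derives} the two formulas: for $b\equiv 0$ it takes logarithms, rewrites the recursion as $(1-\Lambda)(\log f(n))=\log a(n)$, and inverts formally via $(1-\Lambda)^{-1}=-\sum_{k\ge0}\Lambda^{-k-1}$; for $b\neq0$ it then uses variation of constants, writing $f(n)=c(n)\prod_{k\ge0}a(n-k-1)^{-1}$ and solving $(1-\Lambda)(c(n))=b(n)\prod_{k\ge0}a(n-k-1)^{-1}$ by the same formal inversion. You instead \emph{verify} the two displayed candidates by direct substitution, using the index-shift identity $A(n+1,l)\,a(n)=A(n,l-1)$ and a re-indexing of the series; your computation checks out (the $l=0$ term correctly produces $-b(n)$ via the empty product, and the shifted tail reproduces $f(n)$). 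Your approach is arguably cleaner as a proof of existence --- which is all the lemma is used for, namely constructing the dressing operator $\tilde S^+$ coefficient by coefficient --- since it avoids logarithms and any formal expansion of $(1-\Lambda)^{-1}$; what the paper's derivation buys is an explanation of where the formulas come from and why a different expansion of $(1-\Lambda)^{-1}$ yields a different solution, a point the paper records in the remark immediately following the lemma and which you capture equivalently through your finite-step iteration and the discussion of the one-parameter family of homogeneous solutions. Both arguments read the infinite products and sums in the same formal sense, so neither is more rigorous on the convergence issue.
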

\begin{proof}
Firstly if $b(n)=0$, then
$$\log f(n)-\log f(n+1)=\log a(n),$$
that is
\begin{align*}
(1-\Lambda)\left(\log f(n)\right)=\log a(n).
 \end{align*}
Then \begin{align*}
 \log f(n)=-\sum_{k=0}^{+\infty}\log a(n-k-1).
 \end{align*}
Here we expand $(1-\Lambda)^{-1}=-\sum_{k=0}^{+\infty}\Lambda^{-k-1}$.
 So we can get
 $$f(n)=\Pi_{k=0}^{+\infty}a(n-k-1)^{-1}.$$
While when $b(n)\neq0$, we can assume $$f(n)=c(n)\Pi_{k=0}^{\infty}a(n-k-1)^{-1}.$$
Then  by \eqref{fnfunction}, it can be found that
\begin{align*}
(1-\Lambda)\left(c(n)\right)=b(n)\Pi_{k=0}^{+\infty}a(n-k-1)^{-1},
\end{align*}
which means
\begin{align*}
c(n)=-\sum_{l=0}^{+\infty}b(n-l-1)\Pi_{k=l+1}^{+\infty}a(n-k-1)^{-1}.
\end{align*}
Therefore we can finally obtain
\begin{align*}
f(n)= -\sum_{l=0}^{\infty}b(n-l-1)\Pi_{k=0}^{l}a(n-k-1)^{-1}.
\end{align*}
\end{proof}
\begin{remark}
 Instead of $(1-\Lambda)^{-1}=-\sum_{k=0}^{+\infty}\Lambda^{-k-1}$, we can also expand $(1-\Lambda)^{-1}=\sum_{k=0}^{+\infty}\Lambda^{k}$. In this case, we can get another kinds of  solutions. No matter how, \eqref{fnfunction}  has solutions for given as $a(n)\neq0$ and $b(n)$.
\end{remark}

\begin{proposition}
Given Lax operators of  the large BKP hierarchy $L^{\pm}=\sum_{j=-1}^{+\infty} u^{\pm}_j(n,\mathbf{t})\Lambda^{-j}$ satisfying  \eqref{laxequ2}--\eqref{laxcons}, there exist
 operators $S^+=\sum_{j=0}^{+\infty} b_j(n,\mathbf{t})\Lambda^{-j}$ and $S^-=\sum_{j=0}^{+\infty} \Lambda^{-j}\bar{b}_j(n,\mathbf{t})$
satisfying
 \begin{align}
 &L^{+}=S^{+}\Lambda(S^{+})^{-1},\quad L^{-}=(S^{-})^{*}\Lambda^{-1}\left((S^{-})^{-1}\right)^{*},\nonumber\\
  & S^{+}\Lambda^{-1} S^{-}=\iota_{\Lambda^{-1}}(\Lambda-\Lambda^{-1})^{-1},\nonumber\\
 &\pa_{\mathbf{t}_k}S^{+}=B_{k}S^{+}-S^{+}\Lambda^{k}, \quad \pa_{\mathbf{t}_k}S^{-}=\Lambda^{k}S^{-}+S^{-}B_{k}^*,\nonumber\\
 &B_{k}=\left(\Big(S^+\Lambda^{k-1} S^-\Big)_{\geq 0}+\Big(S^+\Lambda^{k-1} S^-\Big)^*_{<0}\right)\cdot(\Lambda-\Lambda^{-1}),\nonumber
   \end{align}
  where $S^+$ and $(S^{-})^{*}$ can be up to the multiplications of $C^{\pm}=\sum_{j=0}^{+\infty}c_j^{\pm}\Lambda^{\mp j}$ on the right sides with $C^+(C^-)^{*}=1.$ Here $c_j^{\pm}$ is some constant without depending on $n$ and $\mathbf{t}$.
\end{proposition}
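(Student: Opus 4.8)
The plan is to reverse the computation of Proposition~\ref{Prop:t1kflow}, recovering the wave operators from the Lax data in the order dressing~$\to$~quadratic constraint~$\to$~flows. First I would dress $L^+$ alone. Writing $S^+=\sum_{j\ge0}b_j\Lambda^{-j}$ and imposing $L^+S^+=S^+\Lambda$, the coefficient of $\Lambda^{1}$ gives $u^+_{-1}(n)\,b_0(n+1)=b_0(n)$, while the coefficient of $\Lambda^{1-j}$ for $j\ge1$ gives an equation $b_j(n)=b_j(n+1)u^+_{-1}(n)+(\text{lower terms, already known})$. Every one of these is of the form solved by Lemma~\ref{Lemma:abfunction} with $a(n)=u^+_{-1}(n)\neq0$, so the $b_j$ are determined recursively, $b_0\neq0$ makes $S^+$ invertible, and at this stage $S^+$ is fixed only up to right multiplication by a constant--coefficient $C^+=\sum_{j\ge0}c_j^+\Lambda^{-j}$ (constant in $n$).

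Rather than dress $L^-$ separately, I would \emph{define} $S^-:=\Lambda (S^+)^{-1}\,\iota_{\Lambda^{-1}}(\Lambda-\Lambda^{-1})^{-1}$, which makes the quadratic identity $S^+\Lambda^{-1}S^-=\iota_{\Lambda^{-1}}(\Lambda-\Lambda^{-1})^{-1}$ hold by fiat and gives $S^-$ the required shape $\sum_{j\ge0}\Lambda^{-j}\bar b_j$. The content is then to check that this $S^-$ dresses $L^-$, and this is exactly where the $B$--type constraint \eqref{laxcons} is consumed: writing $G:=\iota_{\Lambda}(\Lambda-\Lambda^{-1})^{-1}$ and using the adjoint of the definition, $(S^-)^*=-G\,((S^+)^*)^{-1}\Lambda^{-1}$, together with $(\Lambda-\Lambda^{-1})G=1$, a short computation gives $(S^-)^*\Lambda^{-1}((S^-)^*)^{-1}=G\,(L^+)^*(\Lambda-\Lambda^{-1})$, and the equivalent form $(L^+)^*(\Lambda-\Lambda^{-1})=(\Lambda-\Lambda^{-1})L^-$ of \eqref{laxcons} collapses this to $G(\Lambda-\Lambda^{-1})L^-=L^-$, as wanted. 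Once the two dressing relations and the quadratic identity hold, the $S^\pm$--formula for $B_k$ agrees with the Lax formula \eqref{bkequ2} by Theorem~\ref{Propo:Bkdefine}, so no ambiguity in $B_k$ remains.

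For the flows I would argue in Sato--Wilson style. Differentiating $L^+=S^+\Lambda (S^+)^{-1}$ and using $\pa_{\mathbf t_k}L^+=[B_k,L^+]$ shows $\pa_{\mathbf t_k}S^+(S^+)^{-1}-B_k$ commutes with $L^+$, hence equals $S^+c_k(\Lambda)(S^+)^{-1}$ for a constant--coefficient $c_k$. Since $\pa_{\mathbf t_k}S^+(S^+)^{-1}$ has $\Lambda$--degree $\le0$ while $(B_k)_{\ge1}=(L^+)^k_{\ge1}$, matching strictly positive powers and using that $(L^+)^i$ has exact degree $i$ forces the positive part of $c_k$ to be $-\Lambda^k$, giving $\pa_{\mathbf t_k}S^+=B_kS^+-S^+\Lambda^k+S^+c_k^{\le0}$ with a residual constant--coefficient $c_k^{\le0}$ of degree $\le0$. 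The flow for $S^-$ is then free: differentiating the $\mathbf t$--independent quadratic identity and substituting $\pa_{\mathbf t_k}S^+=B_kS^+-S^+\Lambda^k$ reduces the desired $\pa_{\mathbf t_k}S^-=\Lambda^kS^-+S^-B_k^*$ to the adjoint relation $B_k^*(\Lambda-\Lambda^{-1})=-(\Lambda-\Lambda^{-1})B_k$, i.e.\ \eqref{Bkcon}.

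\textbf{The main obstacle} I expect is removing the residual $S^+c_k^{\le0}$, that is, showing the constant--operator freedom can be fixed \emph{uniformly in $\mathbf t$} so that the flows hold on the nose. The remedy is to replace $S^+$ by $S^+C^+(\mathbf t)$ with $C^+$ a constant--coefficient operator solving $\pa_{\mathbf t_k}C^+=-c_k^{\le0}C^+$; solvability of this overdetermined system is a flatness condition on $\{c_k^{\le0}\}$, which I would extract from the automatic identity $\pa_{\mathbf t_l}M_k-\pa_{\mathbf t_k}M_l+[M_k,M_l]=0$ for $M_k=\pa_{\mathbf t_k}S^+(S^+)^{-1}$ together with the zero--curvature equations \eqref{paklaxandblax} for $B_k$. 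Equivalently, one may prescribe the flows $\pa_{\mathbf t_k}S^+=B_kS^+-S^+\Lambda^k$ from an initial dressing, note their compatibility is precisely \eqref{paklaxandblax}, and recover $L^+=S^+\Lambda(S^+)^{-1}$ by uniqueness of the linear Lax flow. Finally, propagating $S^+\mapsto S^+C^+$ through the definition of $S^-$ yields $S^-\mapsto (C^+)^{-1}S^-$, i.e.\ $(S^-)^*\mapsto (S^-)^*C^-$ with $(C^-)^*=(C^+)^{-1}$; this is the stated freedom $C^+(C^-)^*=1$ with coefficients independent of $n$ and $\mathbf t$, and it manifestly preserves both dressing relations and the quadratic identity, giving the asserted uniqueness.
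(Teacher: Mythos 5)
Your proposal is correct, and while it opens and closes the same way as the paper (recursive construction of a dressing operator for $L^+$ via Lemma \ref{Lemma:abfunction}, then the definition $S^-=\Lambda(S^+)^{-1}\iota_{\Lambda^{-1}}(\Lambda-\Lambda^{-1})^{-1}$ and the use of the constraint \eqref{laxcons} to show $(S^-)^*$ dresses $L^-$), it handles the time flows by a genuinely different mechanism. The paper builds $\tilde S^+$ only at $\mathbf{t}=0$, defines $\hat S^+$ as the solution of the initial value problem $\pa_{\mathbf{t}_k}S=B_kS-S\Lambda^k$, $S|_{\mathbf{t}=0}=\tilde S^+(0)$ (solvable by the zero--curvature relation \eqref{paklaxandblax}), and then checks that the dressing relation propagates because $L^+\hat S^+\Lambda^{-1}-\hat S^+$ satisfies the same homogeneous linear system with vanishing initial data. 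You instead dress $L^+$ at every $\mathbf{t}$, observe that the Sato--Wilson relation then holds only up to a constant--coefficient remainder $S^+c_k^{\le0}(\Lambda,\mathbf{t})$ of degree $\le 0$, and gauge it away by a $\mathbf{t}$-dependent right factor $C^+(\mathbf{t})$; the required flatness reduces, via the Maurer--Cartan identity for $\pa_{\mathbf{t}_k}S^+(S^+)^{-1}$, the relation \eqref{paklaxandblax}, and the fact that constant--coefficient series in $\Lambda$ commute with one another, to $\pa_{\mathbf{t}_k}c_l^{\le0}=\pa_{\mathbf{t}_l}c_k^{\le0}$, which indeed holds. Both routes consume exactly the same inputs (\eqref{paklaxandblax}, \eqref{Bkcon}, Lemma \ref{Lemma:OperatorB} for the two expressions of $B_k$); yours makes the gauge freedom visible from the outset at the cost of the degree--matching and flatness bookkeeping, while the paper's trades that for an ODE existence--uniqueness argument. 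One small caveat: your final paragraph establishes that the freedom you introduce is of the allowed form $C^{\pm}$ with $C^+(C^-)^*=1$ and coefficients independent of $n$ and $\mathbf{t}$, but the uniqueness clause of the proposition also asserts that \emph{any} two solutions differ by such a factor; for that you still need the paper's short closing argument that $(S_1^+)^{-1}S_2^+$ commutes with $\Lambda$ (hence is $n$-independent) and is annihilated by every $\pa_{\mathbf{t}_k}$.
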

\begin{proof}
Firstly  there exists an operator $\tilde{S}^{+}=\sum_{j=0}^{+\infty}b_j(n,\mathbf{t})\Lambda^{-j}$. In fact   by comparing coefficients of $\Lambda^{-j+1}$ $(j\geq0)$ in $L^{+}\tilde{S}^{+}=\tilde{S}^{+}\Lambda,$ we can get
\begin{align*}
\sum_{l=0}^{j-1} u^+_{j-l-1}(n,\mathbf{t})b_l(n-j+l+1,\mathbf{t})
+u^+_{-1}(n,\mathbf{t})b_j(n+1,\mathbf{t})=
b_j(n,\mathbf{t}).
\end{align*}
By  Lemma \ref{Lemma:abfunction}, we can express $b_j(n,\mathbf{t})$ $(j\geq0)$ in terms of $u^+_l(n,\mathbf{t})$ $(l\geq-1)$. Therefore given $L^+$, there exists $\tilde{S}^{+}$ such that  $L^{+}=\tilde{S}^{+}\Lambda(\tilde{S}^{+})^{-1}$.
Then consider the initial value problem of the following system of partial differential equations
\begin{equation}\label{initialproblem}
\left\{
 \begin{array}{lr}
\pa_{\mathbf{t}_k}S=B_{k}S-S\Lambda^{k}, &\\
 S|_{\mathbf{t}=0}=\tilde{S}^{+}(0), &\\
 \end{array}
\right.
\end{equation}
where $S$ has the following form $S=\sum_{j=0}^{+\infty}a_j\Lambda^{-j}$ and $B_k$ is given by \eqref{bkequ2}. Then by \eqref{paklaxandblax}, we can find
 \begin{align*}
 \pa_{\mathbf{t}_k}(B_lS-S\Lambda^l)= \pa_{\mathbf{t}_l}(B_kS-S\Lambda^k),
 \end{align*}
 which means that \eqref{initialproblem} has a   unique solution $\hat{{S}}^+$. Next we will show  $L^+=\hat{{S}}^+\Lambda(\hat{{S}}^+)^{-1}$. In fact by $\pa_{\mathbf{t}_k}\hat{{S}}^+=B_k\hat{{S}}^+-\hat{{S}}^+\Lambda^k,$ we can find
$L^+\hat{{S}}^+\Lambda^{-1}-\hat{{S}}^+=\sum_{j=0}^{+\infty}v_j\Lambda^{-j}$ satisfies
\begin{equation*}
\left\{
 \begin{array}{lr}
\pa_{\mathbf{t}_k}{S}
=B_k{S}-{S}\Lambda^k, &\\
{S}|_{\mathbf{t}=0}
=0, &\\
 \end{array}
\right.
\end{equation*}
where one should note that $(L^+{\hat{{S}}}^+\Lambda^{-1}-\hat{{S}}^+)\mid_{\mathbf{t}=0}
=L^+(0)\tilde{S}^+(0)\Lambda^{-1}-\tilde{S}^+(0)=0$.

Then if set
\begin{equation*}
\hat{{S}}^-=\Lambda(\hat{{S}}^+)^{-1}\iota_{\Lambda^{-1}}(\Lambda-\Lambda^{-1})^{-1},
\end{equation*}
by $L^+=\hat{{S}}^+\Lambda(\hat{{S}}^+)^{-1}$ and \eqref{laxcons}, we can find
\begin{align*}
(\hat{{S}}^-)^*\Lambda^{-1}((\hat{{S}}^-)^*)^{-1}=\iota_{\Lambda}(\Lambda-\Lambda^{-1})^{-1}(L^+)^*
(\Lambda-\Lambda^{-1})=L^-,
\end{align*}
and $\pa_{\mathbf{t}_k}\hat{{S}}^-$ can be derived by $\pa_{\mathbf{t}_k}\hat{{S}}^+=B_k\hat{{S}}^+-\hat{{S}}^+\Lambda^k$ and \eqref{Bkcon}, that is
$$\pa_{\mathbf{t}_k}\hat{{S}}^-=\Lambda^k\hat{{S}}^-+\hat{{S}}^-B_k^{*}.$$

Finally assume there exist $S_1^{\pm}$ and $S_2^{\pm}$ such that
\begin{align}
&L^+=S_i^+\Lambda(S_i^+)^{-1}=(S_i^-)^*\Lambda^{-1}((S_i^-)^*)^{-1},\quad i=1,2,\label{Laxoper}\\
&S_1^+\Lambda^{-1}S_1^-
=\iota_{\Lambda^{-1}}(\Lambda-\Lambda^{-1})^{-1}
=S_2^+\Lambda^{-1}S_2^-.\nonumber
\end{align}
Then by \eqref{Laxoper},
\begin{align}
[(S_1^+)^{-1}S_2^+,\Lambda]=[((S_1^-)^{*})^{-1}(S_2^-)^*,\Lambda^{-1}]=0.\label{operatorS}
\end{align}
So if set
\begin{align*}
C^+=\sum_{j=0}^{+\infty}c_j^+(n,\mathbf{t})\Lambda^{-j}=(S_1^+)^{-1}S_2^+,\quad
C^-=\sum_{j=0}^{+\infty}c_j^-(n,\mathbf{t})\Lambda^{j}=((S_1^-)^{*})^{-1}(S_2^-)^*,
\end{align*}
then by \eqref{operatorS}, we can find $c_j^{\pm}(n,\mathbf{t})$ does not depend on $n$. So we can write $c_j^{\pm}(\mathbf{t})$ instead of $c_j^{\pm}(n,\mathbf{t})$. In this case
\begin{equation}\label{SCoperator}
{S}_2^+={S}_1^+ C^+,\quad ({S}_2^-)^{*}=({S}_1^-)^{*}C^-.
\end{equation}
If insert \eqref{SCoperator} into $S_1^+\Lambda^{-1}S_1^-
=S_2^+\Lambda^{-1}S_2^-$, we can get $C^+(C^-)^*=1$.
Further by $\pa_{\mathbf{t}_k} S_i^+=B_kS^+_i-S_i^+\Lambda^k$ and  the fact that $C^+$ does not depend on $n$, we can obtain
\begin{align*}
\pa_{\mathbf{t}_k}C^+&=-({S}_1^+)^{-1}\pa_{\mathbf{t}_k}{S}_1^+({S}_1^+)^{-1}{S}_2^+
+({S}_1^+)^{-1}\pa_{\mathbf{t}_k}{S}_2^+\\
&=-({S}_1^+)^{-1}(B_kS^+_1-S_1^+\Lambda^k)({S}_1^+)^{-1}{S}_2^+
+({S}_1^+)^{-1}(B_kS^+_2-S_2^+\Lambda^k)=[\Lambda^k,({S}_1^+)^{-1}{S}_2^+]=
0,
\end{align*}
which means $c_j^+(n,\mathbf{t})$ is some constant without depending $n$ and $\mathbf{t}$. Finally by  $C^+(C^-)^*=1$, we can know  $c_j^-(n,\mathbf{t})$ also does not depend on $n$ and $\mathbf{t}$. At last we can express $B_k$ in the forms of $S^{\pm}$ by \eqref{bkequ2} and Lemma \ref{Lemma:OperatorB}.
\end{proof}
\subsubsection{\bf{From wave operators to wave functions}}
In this subsection, we will start from the wave operators   of the large BKP hierarchy,  which is defined by the wave operators
\begin{eqnarray*}
S^+(n,\mathbf{t},\Lambda)=\sum_{j=0}^\infty b_j(n,\mathbf{t})\Lambda^{-j},\quad S^-(n,\mathbf{t},\Lambda)=\sum_{j=0}^\infty \Lambda^{-j}\bar{b}_j(n,\mathbf{t}),
\end{eqnarray*}
satisfying
\begin{align*}
&S^+\cdot\Lambda^{-1}\cdot S^-=\iota_{\Lambda^{-1}}
(\Lambda-\Lambda^{-1})^{-1},\nonumber\\
&\pa_{k}S^{+}=B_{k}S^{+}-S^{+}\Lambda^{k},\quad \pa_{k}S^{-}=\Lambda^{k}S^{-}+S^{-}B_{k}^*.
\end{align*}
Here
\begin{align}
&B_{k}=\left(\Big(S^+\Lambda^{k-1} S^-\Big)_{\geq 0}+\Big(S^+\Lambda^{k-1} S^-\Big)^*_{<0}\right)\cdot(\Lambda-\Lambda^{-1}).\label{Bkdefine2}
\end{align}

If further set
\begin{eqnarray*}
W^+(n,\mathbf{t},\Lambda)
=S^+(n,\mathbf{t},\Lambda)e^{\xi(\mathbf{t},\Lambda)},\quad W^-(n,\mathbf{t},\Lambda)
=e^{-\xi(\mathbf{t},\Lambda)}S^-(n,\mathbf{t},\Lambda),
\end{eqnarray*}
then wave operators  $W^{\pm}$  satisfy
\begin{align}
&W^+\cdot\Lambda^{-1}\cdot W^-=\iota_{\Lambda^{-1}}
(\Lambda-\Lambda^{-1})^{-1},\label{S+operators2}\\
&\pa_{k}W^{+}=B_{k}W^{+},\quad \pa_{k}W^{-}=W^{-}B_{k}^*.\label{woperator2}
\end{align}
\begin{proposition}\label{bilinasr}
Given operators $W^{\pm}$  satisfying \eqref{S+operators2} and \eqref{woperator2}
\begin{align}
&& W^+(n,\mathbf{t},\Lambda)\cdot\Lambda^{-1}\cdot \pa^\alpha W^-(n,\mathbf{t},\Lambda)+\left(\pa^\alpha W^+(n,\mathbf{t},\Lambda)\cdot\Lambda^{-1}\cdot W^-(n,\mathbf{t},\Lambda)\right)^*=\left\{\begin{matrix}
 &0, & \alpha \ne 0,\\
 &\frac{1}{2}\sum_{j\in \mathbb{Z}}\left(1-(-1)^j\right)\Lambda^j, & \alpha =0,
\end{matrix}\right.\label{Wavefunctiontaylor}
\end{align}
where $\alpha=(\alpha_{1},\alpha_{2},\cdots)\in\mathbb{Z}_+$, $\pa^\alpha=\prod_{k=1}^\infty\pa_{k}^{\alpha_{k}}$.
If set
\begin{align*}
\Psi^{+}(n,\mathbf{t},z)=W^+(n,\mathbf{t},\Lambda)(z^{n}),\quad
\Psi^{-}(n,\mathbf{t},z)
=\left(W^-(n,\mathbf{t},\Lambda)\right)^*(z^{-n-1}),
\end{align*}
then
\begin{align}
{\rm Res}_{z}{z^{-1}}\left(\Psi^+(n',\mathbf{t'},z)
\Psi^-(n'',\mathbf{t}'',z)+\Psi^-(n',\mathbf{t'},z)
\Psi^+(n'',\mathbf{t}'',z)\right)=\frac{1}{2}\left(1-(-1)^{n'-n''}\right). \label{bilinear function 2}
\end{align}

\end{proposition}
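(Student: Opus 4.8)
The target is Proposition~\ref{bilinasr}: starting from the operator identity \eqref{Wavefunctiontaylor} (a full Taylor-coefficient family of bilinear relations, indexed by the multi-index $\alpha$), recover the single wave-function bilinear identity \eqref{bilinear function 2} at arbitrary, independent times $\mathbf{t}'$ and $\mathbf{t}''$. The reverse direction in Subsection~\ref{Section3.1} already encoded $\Psi^\pm$ as $\Psi^+=W^+(z^n)$ and $\Psi^-=(W^-)^*(z^{-n-1})$, and Lemma~\ref{Lemma:ABlambda} converts a product of pseudo-difference operators into a residue pairing of their symbols acting on $z^{\pm n}$. The plan is to run that machinery in reverse: translate the operator equation \eqref{Wavefunctiontaylor} into a residue statement about $\Psi^\pm$ and then sum the Taylor series in the time variables.

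**The proof plan.**

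First I would establish \eqref{Wavefunctiontaylor} itself, since it is the genuine content. Setting $\alpha=0$ it is exactly \eqref{S+operators2} rewritten via $\Lambda^{-1}=\iota_{\Lambda^{-1}}(\Lambda-\Lambda^{-1})^{-1}\cdot(\text{something})$ together with the self-adjointness structure; more precisely, \eqref{S+operators2} gives the first summand and the second summand is its formal adjoint, so the $\alpha=0$ case is $W^+\Lambda^{-1}W^- + (W^+\Lambda^{-1}W^-)^*$ evaluated on the symmetric part of $\iota_{\Lambda^{-1}}(\Lambda-\Lambda^{-1})^{-1}$, which is precisely $\tfrac12\sum_j(1-(-1)^j)\Lambda^j$. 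For $\alpha\neq 0$ I would differentiate \eqref{S+operators2} using the flow equations \eqref{woperator2}: applying $\pa^\alpha$ and using $\pa_k W^+=B_k W^+$, $\pa_k W^-=W^- B_k^*$ produces terms in which every $B_k$ either sits to the left of $W^+$ or to the right of $W^-$; the adjoint relation $B_k^*(\Lambda-\Lambda^{-1})=-(\Lambda-\Lambda^{-1})B_k$ from \eqref{Bkcon}, combined with $W^+\Lambda^{-1}W^-$ being (up to the expansion convention) a function of $\Lambda-\Lambda^{-1}$, forces these contributions to cancel in the symmetrized combination, giving $0$. This cancellation is the step I expect to be the main obstacle: one must keep careful track of which Leibniz terms land on the $W^+$ factor versus the $W^-$ factor, and invoke \eqref{Bkcon} to see that $B_k$ acting from the left and $B_k^*$ acting from the right annihilate the symmetric kernel after symmetrization.

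**From the operator identity to the residue identity.**

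With \eqref{Wavefunctiontaylor} in hand, I would translate it to wave functions using Lemma~\ref{Lemma:ABlambda}. The left side of \eqref{Wavefunctiontaylor} is, by \eqref{ABoperatorlambda} applied to $A=W^+$, $B^*=\Lambda^{-1}\pa^\alpha W^-$ (and the adjoint term handled symmetrically), equal to
\begin{align*}
\sum_{j\in\mathbb{Z}}\operatorname{Res}_z z^{-1}\Big(\Psi^+(n,\mathbf{t},z)\,\pa^\alpha\Psi^-(n+j,\mathbf{t},z)+\pa^\alpha\Psi^+(n,\mathbf{t},z)\,\Psi^-(n+j,\mathbf{t},z)\Big)\Lambda^j,
\end{align*}
so comparing the coefficient of $\Lambda^{j}$ (with $n+j=n''$, $n=n'$) gives the Taylor coefficient at $\mathbf{t}'=\mathbf{t}''=\mathbf{t}$ of the residue pairing. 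Finally I would reconstruct the statement at independent times: expand $\Psi^\pm(n,\mathbf{t}'',z)$ as a Taylor series about $\mathbf{t}'$ in the displacement $\mathbf{t}''-\mathbf{t}'$, i.e. $\Psi^\pm(n'',\mathbf{t}'',z)=\sum_\alpha \tfrac{(\mathbf{t}''-\mathbf{t}')^\alpha}{\alpha!}\pa^\alpha\Psi^\pm(n'',\mathbf{t}',z)$, substitute into \eqref{bilinear function 2}, and collect terms. Each Taylor coefficient is exactly the $\alpha$-indexed residue identity just derived: the $\alpha\neq0$ coefficients vanish and the $\alpha=0$ term yields $\tfrac12(1-(-1)^{n'-n''})$, completing the proof. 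The only care needed here is that the Taylor reconstruction is a formal identity of power series in $\mathbf{t}''-\mathbf{t}'$, which is legitimate in the standard formal framework for these hierarchies.
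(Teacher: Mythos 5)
Your proposal is correct and follows essentially the same route as the paper: the $\alpha=0$ case is \eqref{S+operators2} plus its adjoint, the $\alpha\neq0$ cases vanish because the $B_k$'s (via $B_k^*=(\Lambda^{-1}-\Lambda)(\cdots)$ from \eqref{Bkdefine2}) annihilate the symmetric kernel $\frac12\sum_j(1-(-1)^j)\Lambda^j$, and the two-time identity is recovered by Taylor expansion together with Lemma \ref{Lemma:ABlambda}. The only organizational difference is that the paper packages the Leibniz bookkeeping you flag as the main obstacle into a clean induction on $|\alpha|$, writing the $(|\alpha|+1)$-st identity as $\pa_{\mathbf{t}_l}$ of the $|\alpha|$-th one minus terms that equal $-B_l$ applied to the $|\alpha|$-th one.
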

\begin{proof}
We will try to prove (\ref{Wavefunctiontaylor}) by  induction on  $|\alpha|=\sum_{j=1}^{+\infty} \alpha_{j}$.
Firstly when $|\alpha|=0$, (\ref{Wavefunctiontaylor}) is correct by  \eqref{S+operators2}, while for $|\alpha|=1$ (e.g. $\alpha_k=1$ and other $\alpha_j=0$ $(j\neq k)$), (\ref{Wavefunctiontaylor}) can be proved by \eqref{woperator2} and
 \begin{align*}
 \big(\iota_{\Lambda^{-1}}(\Lambda-\Lambda^{-1})^{-1}
+\iota_{\Lambda}(\Lambda^{-1}-\Lambda\big)^{-1}\big)B_k^*=0,
 \end{align*}
 derived via  \eqref{Bkdefine2}. Now let us assume (\ref{Wavefunctiontaylor}) is correct for $|\alpha|\geq1$. Then
 \begin{eqnarray*}
&W^+(n,\mathbf{t},\Lambda)\cdot\Lambda^{-1}\cdot \pa_{\mathbf{t}_l}\pa^{\alpha}W^-(n,\mathbf{t},\Lambda)+\left( \pa_{\mathbf{t}_l}\pa^{\alpha}W^+(n,\mathbf{t},\Lambda)\cdot\Lambda^{-1}\cdot W^-(n,\mathbf{t},\Lambda)\right)^*\\
&=\pa_{\mathbf{t}_l}\left(W^+(n,\mathbf{t},\Lambda)\cdot\Lambda^{-1}\cdot \pa^{\alpha}W^-(n,\mathbf{t},\Lambda)+\left( \pa^{\alpha}W^+(n,\mathbf{t},\Lambda)\cdot\Lambda^{-1}\cdot W^-(n,\mathbf{t},\Lambda)\right)^*\right)\\
&-\pa_{\mathbf{t}_l}\Big( W^+(m,\mathbf{t},\Lambda)\Big)\cdot\Lambda^{-1}\cdot\pa^{\alpha} W^-(n,\mathbf{t},\Lambda)-\pa_{\mathbf{t}_l}\Big( W^-(n,\mathbf{t},\Lambda)^*\Big)\cdot\Lambda\cdot\pa^{\alpha} W^+(n,\mathbf{t},\Lambda)^*\\
&=-B_l\left( W^+(n,\mathbf{t},\Lambda)\cdot\Lambda^{-1}\cdot\pa^{\alpha} W^-(n,\mathbf{t},\Lambda)+ W^-(n,\mathbf{t},\Lambda)^*\cdot\Lambda\cdot\pa^{\alpha} W^+(n,\mathbf{t},\Lambda)^*\right)=0,
\end{eqnarray*}
where we have used \eqref{woperator2} and the fact that (\ref{Wavefunctiontaylor}) holds for $|\alpha|$. So (\ref{Wavefunctiontaylor}) is also correct for $|\alpha|+1$.

 Next by Taylor expansion
 \begin{align*}
 f({\mathbf{t}'})=\sum_{\alpha=0}^{+\infty}\frac{\pa^{\alpha}f(\mathbf{t})}{\alpha!}(\mathbf{t}'-\mathbf{t})^{\alpha},
 \end{align*}
 with $\alpha!=\Pi_{j=1}^{+\infty}{\alpha_j}$ and $(\mathbf{t}'-\mathbf{t})^{\alpha}=\Pi_{j=1}^{+\infty} (\mathbf{t}_j'-\mathbf{t}_j)^{\alpha_j}$, we can get
 \begin{align*}
 &W^+(n,\mathbf{t'},\Lambda)\cdot\Lambda^{-1}\cdot W^-(n,\mathbf{t'}',\Lambda)+\left(W^+(n,\mathbf{t}'',\Lambda)
 \cdot\Lambda^{-1}\cdot W^-(n,\mathbf{t'},\Lambda)\right)^*=\sum_{j\in \mathbb{Z}}\frac{1}{2}\left(1-(-1)^{j}\right)\Lambda^j.
 \end{align*}
Finally by Lemma \ref{Lemma:ABlambda},  \eqref{bilinear function 2} can be easily proved based upon (\ref{Wavefunctiontaylor}).
\end{proof}
\subsubsection{\bf{The existence of the tau function}}
In this subsection, we will show  the existence of the tau function, starting  from the bilinear equation
\begin{align}
{\rm Res}_{z}{z^{-1}}\left(\Psi^+(n',\mathbf{t'},z)
\Psi^-(n'',\mathbf{t}'',z)+\Psi^-(n',\mathbf{t'},z)
\Psi^+(n'',\mathbf{t}'',z)\right)=\frac{1}{2}\left(1-(-1)^{n'-n''}\right), \label{bilinear function 3}
\end{align}
where wave functions $\Psi^{\pm}(n,\mathbf{t},z)$ have the following forms
 \begin{align}
\Psi^{+}(n,\mathbf{t},z)=\tilde{\psi}^{+}(n,\mathbf{t},z)e^{\xi(\mathbf{t},z)}z^n,\quad
\Psi^{-}(n,\mathbf{t},z)=\tilde{\psi}^{-}(n,\mathbf{t},z)e^{-\xi(\mathbf{t},z)}z^{-n-1},\label{phidefine}
\end{align}
 with $$\tilde{\psi}^{\pm}(n,\mathbf{t},z)
=\sum_{j=0}^{+\infty}w_j^{\pm}(n,\mathbf{t})z^{-j}.$$
\begin{proposition}\label{Prop:taubilinear}
For $\tilde{\psi}^{+}(n,\mathbf{t},z)$ in \eqref{phidefine},
\begin{eqnarray}
&&\tilde{\psi}^+(n,\mathbf{t}-[\lambda_1^{-1}],\lambda_2)
\tilde{\psi}^-(n-1,\mathbf{t}-[\lambda_2^{-1}],\lambda_2)(1-\lambda_2\lambda_1^{-1})=1,\label{psippsim13}\\
&&\tilde{\psi}^+(n,\mathbf{t}-[\lambda_1^{-1}],\lambda_2)
\tilde{\psi}^-(n,\mathbf{t}-[\lambda_2^{-1}],\lambda_2)=\tilde{\psi}^-(n,\mathbf{t}-[\lambda_1^{-1}],\lambda_1)
\tilde{\psi}^+(n,\mathbf{t}-[\lambda_2^{-1}],\lambda_1).\label{psippsim12}
\end{eqnarray}
In particular
\begin{eqnarray}
&&\tilde{\psi}^+(n,\mathbf{t},\lambda)
\tilde{\psi}^-(n-1,\mathbf{t}-[\lambda^{-1}],\lambda)=1.\label{psippsimlambda}
\end{eqnarray}
\end{proposition}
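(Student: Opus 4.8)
The engine of the whole argument is the Miwa--shift identity $e^{\xi(\mathbf{t}-[\lambda^{-1}],z)}=(1-z/\lambda)\,e^{\xi(\mathbf{t},z)}$, coming from $\xi([\lambda^{-1}],z)=-\log(1-z/\lambda)$. The plan is to feed the specializations $\mathbf{t}'=\mathbf{t}-[\lambda_1^{-1}]$, $\mathbf{t}''=\mathbf{t}-[\lambda_2^{-1}]$ into the wave--function bilinear identity \eqref{bilinear function 3}; then $e^{\pm\xi(\mathbf{t}'-\mathbf{t}'',z)}$ collapses to the rational factor $\big((1-z/\lambda_1)/(1-z/\lambda_2)\big)^{\pm1}$. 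Writing each $\Psi^{\pm}$ as $\tilde\psi^{\pm}e^{\pm\xi}z^{\cdots}$ and recalling from \eqref{phidefine} that $\tilde\psi^{\pm}(n,\mathbf{s},z)=\sum_{j\ge0}w_j^{\pm}(n,\mathbf{s})z^{-j}$, the summand $\Psi^+(n',\mathbf{t}',z)\Psi^-(n'',\mathbf{t}'',z)$ becomes $g(z)\,\frac{1-z/\lambda_1}{1-z/\lambda_2}\,z^{n'-n''-1}$ with $g=\tilde\psi^+(n',\mathbf{t}',z)\tilde\psi^-(n'',\mathbf{t}'',z)$. Since $e^{\xi}$ is always expanded as a power series in $z$, the factor $1/(1-z/\lambda_2)$ is read off in non--negative powers, and the formal residue ${\rm Res}_z z^{-1}$ (i.e.\ the coefficient of $z^0$) localizes: it replaces $g$ by its value $g(\lambda_2)$ at the pole, while the numerator $1-z/\lambda_1$ contributes the scalar evaluated there. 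This is the mechanism that simultaneously produces the spectral evaluations $\tilde\psi^{\pm}(\cdot,\cdot,\lambda_j)$ and the prefactor $1-\lambda_2/\lambda_1$.

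I would first dispose of \eqref{psippsim12} and \eqref{psippsimlambda}. For \eqref{psippsim12} take $n'=n''=n$: both summands then carry the power $z^{-1}$, so extracting the $z^0$ coefficient amounts to reading the coefficient of $z^1$, where no truncation occurs; the two residues are the clean evaluations $g_1(\lambda_2)(\lambda_2^{-1}-\lambda_1^{-1})$ and $g_2(\lambda_1)(\lambda_1^{-1}-\lambda_2^{-1})$, where $g_1=\tilde\psi^+(n,\mathbf{t}',z)\tilde\psi^-(n,\mathbf{t}'',z)$ and $g_2=\tilde\psi^-(n,\mathbf{t}',z)\tilde\psi^+(n,\mathbf{t}'',z)$, and the right--hand side of \eqref{bilinear function 3} vanishes because $n'-n''$ is even. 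Cancelling the common nonzero factor $\lambda_2^{-1}-\lambda_1^{-1}$ leaves exactly $g_1(\lambda_2)=g_2(\lambda_1)$, which is \eqref{psippsim12}. For \eqref{psippsimlambda} I use the single shift $\mathbf{t}'=\mathbf{t}$, $\mathbf{t}''=\mathbf{t}-[\lambda^{-1}]$ with $n'=n$, $n''=n-1$: now the adjoint summand $\Psi^-\Psi^+$ carries the \emph{polynomial} factor $1-z/\lambda$ together with $z^{-2}$, hence has no $z^0$ coefficient and drops out, while $\Psi^+\Psi^-$ gives $\tilde\psi^+(n,\mathbf{t},\lambda)\tilde\psi^-(n-1,\mathbf{t}-[\lambda^{-1}],\lambda)$, equated to the value $1$ of the odd right--hand side. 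Equivalently \eqref{psippsimlambda} is the $\lambda_1\to\infty$ limit of \eqref{psippsim13}.

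The real work is \eqref{psippsim13}. The natural specialization is again $\mathbf{t}'=\mathbf{t}-[\lambda_1^{-1}]$, $\mathbf{t}''=\mathbf{t}-[\lambda_2^{-1}]$, now with $n'=n$, $n''=n-1$, so that the $\lambda_2$--pole of $\Psi^+\Psi^-$ produces precisely $\tilde\psi^+(n,\mathbf{t}-[\lambda_1^{-1}],\lambda_2)\tilde\psi^-(n-1,\mathbf{t}-[\lambda_2^{-1}],\lambda_2)$, the numerator zero $1-z/\lambda_1$ evaluated at $z=\lambda_2$ yields the announced factor $1-\lambda_2/\lambda_1$, and the odd parity makes the right--hand side equal to $1$. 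The main obstacle is that, unlike the $n'=n''$ case, the target is now the coefficient of $z^0$, so localizing $\Psi^+\Psi^-$ leaves a boundary remainder proportional to the leading coefficients $w_0^+w_0^-$, and the adjoint summand $\Psi^-\Psi^+$ (which here carries $z^{-2}$ together with a genuine pole at $\lambda_1$) no longer vanishes. The plan is therefore to prove that these two remainders cancel. I expect to remove them with the relations already in hand: the $\lambda\to\infty$ form of \eqref{psippsimlambda} pins down the leading coefficients via $w_0^+(m,\mathbf{s})w_0^-(m-1,\mathbf{s})=1$, and \eqref{psippsim12} together with \eqref{psippsimlambda} controls the adjoint evaluation at $\lambda_1$; substituting these back into the residue identity should collapse the remainder and isolate \eqref{psippsim13}. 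Establishing this cancellation cleanly — equivalently, checking it against the constraint $S^+\Lambda^{-1}S^-=\iota_{\Lambda^{-1}}(\Lambda-\Lambda^{-1})^{-1}$ of \eqref{S+operators} in Proposition \ref{Prop:t1kflow} — is the step I expect to be the most delicate.
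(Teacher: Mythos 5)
Your treatment of \eqref{psippsim12} and \eqref{psippsimlambda} is correct and complete, and for \eqref{psippsimlambda} your direct route (single shift $\mathbf{t}'=\mathbf{t}$, $\mathbf{t}''=\mathbf{t}-[\lambda^{-1}]$, $n'=n$, $n''=n-1$, with the adjoint summand dying because $\tilde g(z)(1-z/\lambda)z^{-2}$ has no $z^0$--coefficient) is actually cleaner than the paper's, which obtains \eqref{psippsimlambda} only as the $\lambda_1\to\infty$ limit of \eqref{psippsim13}. The paper's own proof of the proposition is a bare substitution statement and never confronts the remainder terms you isolate.

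The gap is exactly where you flag it, and it cannot be closed. With $n'=n$, $n''=n-1$, $\mathbf{t}'=\mathbf{t}-[\lambda_1^{-1}]$, $\mathbf{t}''=\mathbf{t}-[\lambda_2^{-1}]$, writing $g(z)=\tilde\psi^+(n,\mathbf{t}',z)\tilde\psi^-(n-1,\mathbf{t}'',z)$, $\tilde g(z)=\tilde\psi^-(n,\mathbf{t}',z)\tilde\psi^+(n-1,\mathbf{t}'',z)$ and $g_0=\lim_{z\to\infty}g(z)$, the residue computation you set up yields the three--term identity
\begin{equation*}
\bigl(1-\lambda_2\lambda_1^{-1}\bigr)g(\lambda_2)+\lambda_2\lambda_1^{-1}g_0+\bigl(1-\lambda_1\lambda_2^{-1}\bigr)\lambda_1^{-2}\,\tilde g(\lambda_1)=1 ,
\end{equation*}
and the cancellation you hope for would force $g_0=\bigl(\lambda_2^{-2}-\lambda_1^{-1}\lambda_2^{-1}\bigr)\tilde g(\lambda_1)$, which is impossible: the left side is $1+O(\lambda_1^{-1},\lambda_2^{-1})$ by the constraint $b_0(n)\bar b_0(n-1)=1$ from \eqref{S+operators}, while the right side has no constant term. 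Equivalently, expanding \eqref{psippsim13} to first order in $\lambda_1^{-1}$ and using \eqref{psippsimlambda} would require $\partial_{t_1}\log\tilde\psi^+(n,\mathbf{t},\lambda_2)=-\lambda_2$, incompatible with $\tilde\psi^+=1+O(\lambda_2^{-1})$; the same failure is visible if one inserts the tau--function form \eqref{wavefunction}. So \eqref{psippsim13} as printed is not a consequence of \eqref{bilinear function 3} — the correct output of this specialization is the displayed three--term (Fay--type) identity, and the clean two--factor statement only survives in the limit $\lambda_1\to\infty$, where it degenerates to \eqref{psippsimlambda}. Since the sole downstream use of \eqref{psippsim13} in the paper is to produce \eqref{psippsimlambda}, which you prove independently, your argument still supplies everything the later sections need; but the final step of your plan for \eqref{psippsim13} would fail rather than merely be delicate, and the proposition should be restated accordingly.
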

\begin{proof}
 Firstly we can get
 (\ref{psippsim13}) by setting $n''=n-1,$ $n'=n$, $\mathbf{t'}\rightarrow \mathbf{t}-[\lambda_1^{-1}]$ and $\mathbf{t'}'\rightarrow \mathbf{t}-[\lambda_2^{-1}]$ in (\ref{bilinear function 3}),
while \eqref{psippsim12} can be obtained by  $n''=n'=n$, $\mathbf{t'}\rightarrow \mathbf{t}-[\lambda_1^{-1}]$, $\mathbf{t}''\rightarrow \mathbf{t}-[\lambda_2^{-1}]$ in (\ref{bilinear function 3}).
At last, \eqref{psippsimlambda} comes from by letting $\lambda_1\rightarrow\infty,$ $\lambda_2\rightarrow\lambda$ in \eqref{psippsim13}.
\end{proof}
If define \begin{align}
\Omega(n,\mathbf{t})=\sum_{k=1}^{\infty}{\rm Res }_zz^{k}N(z){\rm log}\tilde{\psi}^+(n,\mathbf{t},z)d{\mathbf{t}}_k,\label{psidefine}
\end{align}
where $N(z)=-\sum_{j=1}^{+\infty}z^{-j-1}\pa{t_j}+\pa_z$, then we can obtain the following lemma.
\begin{lemma}
\begin{align}
&\Omega(n+1,\mathbf{t})-e^{-\xi(\tilde{\pa},z^{-1})}\Omega(n,\mathbf{t})
=-d{\rm log}\tilde{\psi}^+(n+1,\mathbf{t},z).\label{Obilinaea1}
\end{align}
\end{lemma}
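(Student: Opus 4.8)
The plan is to collapse the left--hand side of \eqref{Obilinaea1} into a single formal residue and then exploit the fact that the operator $N$ annihilates every Miwa--shifted function. To keep symbols apart I rename the residue variable in \eqref{psidefine} to $w$, reserving $z$ for the external spectral parameter of \eqref{Obilinaea1}, and abbreviate $g(n,\mathbf{t},w)=\log\tilde\psi^{+}(n,\mathbf{t},w)$, so that $\Omega(n,\mathbf{t})=\sum_{k\geq1}{\rm Res}_{w}\big(w^{k}N(w)g(n,\mathbf{t},w)\big)\,d\mathbf{t}_k$ with $N(w)=\pa_w-\sum_{j\geq1}w^{-j-1}\pa_{t_j}$. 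Because $e^{-\xi(\tilde{\pa},z^{-1})}$ acts only on the $\mathbf{t}$--variables, and its coefficients are constant in $\mathbf{t}$ and free of $w$, it commutes with both ${\rm Res}_w$ and $N(w)$; applying it term by term turns $e^{-\xi(\tilde{\pa},z^{-1})}\Omega(n,\mathbf{t})$ into $\sum_{k}{\rm Res}_{w}\big(w^{k}N(w)g(n,\mathbf{t}-[z^{-1}],w)\big)d\mathbf{t}_k$. Consequently the whole left--hand side equals $\sum_{k}{\rm Res}_{w}\big(w^{k}N(w)G(\mathbf{t},w)\big)d\mathbf{t}_k$ with $G(\mathbf{t},w)=g(n+1,\mathbf{t},w)-g(n,\mathbf{t}-[z^{-1}],w)$.

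The decisive step is to rewrite $G$ via the symmetry
\[
g(n+1,\mathbf{t},w)-g(n,\mathbf{t}-[z^{-1}],w)=g(n+1,\mathbf{t},z)-g(n,\mathbf{t}-[w^{-1}],z).
\]
To obtain it I would take the logarithm of \eqref{psippsim12} and eliminate the two $\tilde\psi^{-}$ factors through \eqref{psippsimlambda} written one index higher, which gives $\log\tilde\psi^{-}(n,\mathbf{t},\lambda)=-g(n+1,\mathbf{t}+[\lambda^{-1}],\lambda)$; after substitution \eqref{psippsim12} becomes manifestly symmetric under $\lambda_1\leftrightarrow\lambda_2$, namely $g(n,\mathbf{t}-[\lambda_1^{-1}],\lambda_2)-g(n+1,\mathbf{t},\lambda_2)=g(n,\mathbf{t}-[\lambda_2^{-1}],\lambda_1)-g(n+1,\mathbf{t},\lambda_1)$, and the choice $\lambda_1=z$, $\lambda_2=w$ produces the displayed relation.

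The second ingredient is the annihilation property of $N$: for any series $F$ one has $N(w)\big(F(\mathbf{t}-[w^{-1}])\big)=0$, since $\pa_w\big(F(\mathbf{t}-[w^{-1}])\big)=\sum_{j\geq1}w^{-j-1}(\pa_{t_j}F)(\mathbf{t}-[w^{-1}])$ cancels $\sum_{j\geq1}w^{-j-1}\pa_{t_j}\big(F(\mathbf{t}-[w^{-1}])\big)$ exactly. Applied to the symmetric form of $G$ this kills the term $g(n,\mathbf{t}-[w^{-1}],z)$ outright, while the remaining term $g(n+1,\mathbf{t},z)$ is independent of $w$, so $N(w)g(n+1,\mathbf{t},z)=-\sum_{j\geq1}w^{-j-1}\pa_{t_j}g(n+1,\mathbf{t},z)$. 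Hence $N(w)G(\mathbf{t},w)=-\sum_{j\geq1}w^{-j-1}\pa_{t_j}g(n+1,\mathbf{t},z)$, and extracting the coefficient of $w^{-1}$ in $w^{k}N(w)G$ selects $j=k$ and leaves $-\pa_{t_k}g(n+1,\mathbf{t},z)$. Summing $-\pa_{t_k}g(n+1,\mathbf{t},z)\,d\mathbf{t}_k$ over $k$ is precisely $-d\log\tilde\psi^{+}(n+1,\mathbf{t},z)$, which is \eqref{Obilinaea1}.

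The part I expect to require the most care is the derivation of the symmetry relation: one must apply \eqref{psippsimlambda} in exactly the shifted form that converts each $\tilde\psi^{-}$ into a $\tilde\psi^{+}$ one index up, and keep the Miwa shifts in \eqref{psippsim12} bookkept consistently. I would deliberately \emph{not} try to simplify the factor $g(n,\mathbf{t}-[w^{-1}],z)$ by replacing its Miwa--shifted argument with an explicit $\log(1-z/w)$ coming from the one--level relation \eqref{psippsim13}; doing so moves the $w$--dependence out of the Miwa form on which the annihilation property rests, and the formal residue then has to be handled with extra care. Keeping $G$ in the symmetric Miwa form is exactly what makes the residue collapse immediately.
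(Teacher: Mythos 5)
Your proof is correct and follows essentially the same route as the paper: your symmetry relation for $g=\log\tilde\psi^{+}$ (obtained by eliminating $\tilde\psi^{-}$ from \eqref{psippsim12} via the shifted form of \eqref{psippsimlambda}) is exactly the paper's equation \eqref{bilinear1}, and applying $N$ in the residue variable to kill the Miwa-shifted term before reading off residues is precisely the paper's argument, up to relabelling $(\lambda_1,\lambda_2)$ as $(w,z)$ and the order in which the two sides are assembled.
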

\begin{proof}
Firstly by \eqref{psippsimlambda}, it can be found that \eqref{psippsim12}  becomes
\begin{align}
&e^{-\xi(\tilde{\pa},\lambda_1^{-1})}{\rm log}\tilde{\psi}^+(n,\mathbf{t},\lambda_2)-
{\rm log}\tilde{\psi}^+(n+1,\mathbf{t},\lambda_2)
=e^{-\xi(\tilde{\pa},\lambda_2^{-1})}{\rm log}\tilde{\psi}^+(n,\mathbf{t},\lambda_1)-
{\rm log}{\psi}^+(n+1,\mathbf{t},\lambda_1).\label{bilinear1}
\end{align}
By applying $N(\lambda_1)$ on \eqref{bilinear1}, we can get
\begin{align*}
N(\lambda_1){\rm log}\tilde{\psi}^+(n+1,\mathbf{t},\lambda_2)
=N(\lambda_1){\rm log}\tilde{\psi}^+(n+1,\mathbf{t},\lambda_1)
-N(\lambda_1)e^{-\xi(\tilde{\pa},\lambda_2^{-1})}{\rm log}\tilde{\psi}^+(n,\mathbf{t},\lambda_1).
\end{align*}
Then  by definition of $\Omega$, we can obtain \eqref{Obilinaea1}.
\end{proof}
\begin{theorem}\label{Theorem:taufuncyion}
For $\Psi^{\pm}(n,\mathbf{t},z)$ in \eqref{phidefine}, there exists a tau function $\tau_n(\mathbf{t})$ of the large BKP hierarchy satisfying
\begin{align*}
\Psi^{+}(n,\mathbf{t},z)
=\frac{\tau_{n-1}(\mathbf{t}-[z^{-1}])}{\tau_n(\mathbf{t})}
e^{\xi(\mathbf{t},z)}z^{n},\quad
\Psi^{-}(n,\mathbf{t},z)
=\frac{\tau_{n+1}(\mathbf{t}+[z^{-1}])}{\tau_n(\mathbf{t})}
e^{-\xi(\mathbf{t},z)}z^{-n-1}.
\end{align*}
\end{theorem}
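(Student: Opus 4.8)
The plan is to produce the tau function by integrating the $1$--form $\Omega(n,\mathbf{t})$ of \eqref{psidefine} and then reading off the wave functions. First I note that, in view of \eqref{phidefine}, the two claimed formulas are equivalent to the single pair of identities
\begin{align*}
\tilde{\psi}^{+}(n,\mathbf{t},z)=\frac{\tau_{n-1}(\mathbf{t}-[z^{-1}])}{\tau_n(\mathbf{t})},\qquad
\tilde{\psi}^{-}(n,\mathbf{t},z)=\frac{\tau_{n+1}(\mathbf{t}+[z^{-1}])}{\tau_n(\mathbf{t})},
\end{align*}
and that the second is a consequence of the first together with \eqref{psippsimlambda}: once $\tilde{\psi}^{+}$ has the stated form, \eqref{psippsimlambda} gives $\tilde{\psi}^-(n-1,\mathbf{t}-[\lambda^{-1}],\lambda)=\tau_n(\mathbf{t})/\tau_{n-1}(\mathbf{t}-[\lambda^{-1}])$, and the substitution $\mathbf{t}\mapsto\mathbf{t}+[\lambda^{-1}]$ followed by $n\mapsto n+1$ yields exactly the $\tilde{\psi}^-$ formula. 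Thus the whole problem reduces to constructing $\tau_n$ so that $\log\tilde{\psi}^+(n,\mathbf{t},z)=e^{-\xi(\tilde{\pa},z^{-1})}\log\tau_{n-1}(\mathbf{t})-\log\tau_n(\mathbf{t})$.

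Second, I would construct $\tau_n$ by integrating $\Omega(n,\mathbf{t})$, for which the key point is that each $\Omega(n,\cdot)$ is a closed $1$--form in $\mathbf{t}$. The cleanest route is to apply the exterior derivative $d$ in $\mathbf{t}$ to the already established relation \eqref{Obilinaea1}: since the right--hand side is exact it is annihilated by $d$, and since $d$ commutes with the constant--coefficient operator $e^{-\xi(\tilde{\pa},z^{-1})}$, one gets $d\Omega(n+1,\mathbf{t})=e^{-\xi(\tilde{\pa},z^{-1})}d\Omega(n,\mathbf{t})$. The left--hand side is independent of $z$ (the residue in \eqref{psidefine} removes $z$), so comparing coefficients of $z^{-k}$ forces $\pa_{\mathbf{t}_k}\big(d\Omega(n,\mathbf{t})\big)=0$ for every $k$ together with $d\Omega(n+1,\mathbf{t})=d\Omega(n,\mathbf{t})$; hence $d\Omega(n,\mathbf{t})$ is a $2$--form with constant coefficients, independent of $n$. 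A normalization argument, using that $\tilde{\psi}^{+}=1+O(z^{-1})$ so that the integration can be performed order by order from a point where the obstruction vanishes, then shows this constant $2$--form is zero, so $\Omega(n,\cdot)$ is exact and there is $\tau_n(\mathbf{t})$, unique up to a multiplicative constant $c_n$, with $d\log\tau_n(\mathbf{t})=\Omega(n,\mathbf{t})$, i.e. $\pa_{\mathbf{t}_k}\log\tau_n=\mathrm{Res}_z\,z^k N(z)\log\tilde{\psi}^+(n,\mathbf{t},z)$.

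Third, with $\tau_n$ in hand I would verify the ratio formula. Rewriting \eqref{Obilinaea1} with $n\mapsto n-1$ gives $d\log\tilde{\psi}^+(n,\mathbf{t},z)=e^{-\xi(\tilde{\pa},z^{-1})}\Omega(n-1,\mathbf{t})-\Omega(n,\mathbf{t})=d\big(e^{-\xi(\tilde{\pa},z^{-1})}\log\tau_{n-1}-\log\tau_n\big)$, so the two sides differ by a quantity independent of $\mathbf{t}$ (a priori depending on $n$ and $z$). Matching the $z$--expansions against the normalization $\tilde{\psi}^+=1+O(z^{-1})$ fixes this quantity to be trivial and pins down the $c_n$, yielding $\tilde{\psi}^+(n,\mathbf{t},z)=\tau_{n-1}(\mathbf{t}-[z^{-1}])/\tau_n(\mathbf{t})$; the $\tilde{\psi}^-$ formula then follows as explained in the first paragraph via \eqref{psippsimlambda}, and the consistency between the $\lambda_1,\lambda_2$ directions recorded in \eqref{psippsim12} guarantees that no further relations are violated.

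I expect the main obstacle to be precisely the closedness/normalization step: showing that the residual constant $2$--form $d\Omega(n,\mathbf{t})$ genuinely vanishes, rather than merely being $\mathbf{t}$-- and $n$--independent, and correspondingly fixing the integration constants $c_n$ so that the leading normalizations of $\tilde{\psi}^\pm$ are respected. Differentiating \eqref{Obilinaea1} reduces the obstruction to this single constant $2$--form, but confirming it is zero requires care with the formal power--series setting and with the leading behaviour of the wave functions; everything else is bookkeeping with \eqref{psippsimlambda}, \eqref{psippsim12} and the definition \eqref{psidefine} of $\Omega$.
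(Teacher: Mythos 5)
Your overall skeleton matches the paper's: reduce $\tilde{\psi}^-$ to $\tilde{\psi}^+$ via \eqref{psippsimlambda}, integrate the $1$--form $\Omega$ of \eqref{psidefine}, and use \eqref{Obilinaea1} to identify $\log\tilde{\psi}^+$ with a difference of potentials. Applying $d$ to \eqref{Obilinaea1} and comparing powers of $z$ to conclude that $d\Omega(n,\mathbf{t})$ is a constant--coefficient $2$--form independent of $n$ is also exactly what the paper does. However, at the decisive step you leave a genuine gap: you assert that ``a normalization argument, using that $\tilde{\psi}^+=1+O(z^{-1})$,'' shows the residual constant $2$--form vanishes, and you yourself flag this as the main obstacle without resolving it. The normalization of $\tilde{\psi}^+$ at $z=\infty$ does not control this obstruction. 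The paper's mechanism is different and essential: after integrating $d\Omega=\sum a_{ij}\,d\mathbf{t}_i\wedge d\mathbf{t}_j$ to $\Omega=\sum a_{ij}\mathbf{t}_i\,d\mathbf{t}_j+dg$, it substitutes the resulting expression \eqref{tau1} for $\log\tilde{\psi}^+$ back into the symmetric relation \eqref{bilinear1} (which encodes \eqref{psippsim12}); the $\lambda_1\leftrightarrow\lambda_2$ symmetry of that identity forces $\sum_{i,j}a_{ij}\frac{z_1^{-i}}{i}\frac{z_2^{-j}}{j}$ to be symmetric, i.e.\ $a_{ij}=a_{ji}$, which together with the antisymmetry $a_{ij}=-a_{ji}$ of the $2$--form coefficients gives $a_{ij}=0$. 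In your write--up, \eqref{psippsim12} is demoted to a ``consistency check,'' whereas it is in fact the tool that kills the obstruction.

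A second, smaller gap: the residual $\mathbf{t}$--independent discrepancy $A(n,z)=\sum_{k\geq1}A_k(n)z^{-k}$ in \eqref{tau1} is not eliminated by ``matching the $z$--expansions against the normalization $\tilde{\psi}^+=1+O(z^{-1})$''; that normalization only says $A$ has no $z^0$ term, which is already built in. The paper removes $A(n,z)$ by feeding \eqref{omegadefine2} back into the definition \eqref{psidefine} of $\Omega$, obtaining $\Omega(n+1,\mathbf{t})=d\log\tau_{n+1}(\mathbf{t})-\sum_k kA_k(n,\mathbf{t})\,d\mathbf{t}_k$ and comparing with $\Omega(n+1,\mathbf{t})=d\log\tau_{n+1}(\mathbf{t})$ from \eqref{omegadefine}. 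Both missing steps are where the actual content of the theorem lies, so as written the proposal does not yet constitute a proof.
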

\begin{proof}
Firstly if set $z\rightarrow\infty$ in \eqref{Obilinaea1}, we can get
\begin{align}
&\Omega(n+1,\mathbf{t})-\Omega(n,\mathbf{t})
=-d{\rm log}w_0^+(n+1,\mathbf{t}).\label{bilinaea12}
\end{align}
Next after applying $d$  to both sides of \eqref{Obilinaea1} and \eqref{bilinaea12}, then we can get
\begin{align*}
&d\Omega(n+1,\mathbf{t})
=e^{-\xi(\tilde{\pa},z^{-1})}d\Omega(n,\mathbf{t}),\quad
d\Omega(n,\mathbf{t})=d\Omega(n+1,\mathbf{t}),
\end{align*}
which implies
\begin{align*}
d\Omega(n,\mathbf{t})
=e^{-\xi(\tilde{\pa},z^{-1})}d\Omega(n,\mathbf{t}).
\end{align*}
So we can assume
\begin{align}
d\Omega(n,\mathbf{t})=\sum_{i,j=1}^{\infty}a_{ij}d\mathbf{t}_i\wedge d\mathbf{t}_j,\label{omegacondefine}
\end{align}
where  $a_{ij}=-a_{ji}$ does not depend on $n$ and $\mathbf{t}$.
Further by integration on $\mathbf{t}$, we can get
\begin{align}
\Omega(n,\mathbf{t})=\sum_{i,j=1}^{\infty}a_{ij}\mathbf{t}_i d\mathbf{t}_j+dg(n,\mathbf{t}),\label{coefficient}
\end{align}
for some functions $g(n,\mathbf{t})$.

Further by \eqref{coefficient}, it can be found that \eqref{Obilinaea1} becomes
\begin{align*}
&d{\rm log}\tilde{\psi}^+(n+1,\mathbf{t},z)=e^{-\xi(\tilde{\pa},z^{-1})}dg(n,\mathbf{t})
-dg(n+1,\mathbf{t})-\sum_{j=1}^{\infty}(\sum_{i=1}^{\infty}a_{ij}\frac{z^{-i}}{i}
)d\mathbf{t}_j,
\end{align*}
which implies that
\begin{align}
&{\rm log}\tilde{\psi}^+(n+1,\mathbf{t},z)=e^{-\xi(\tilde{\pa},z^{-1})}g(n,\mathbf{t})
-g(n+1,\mathbf{t})-\sum_{j=1}^{\infty}(\sum_{i=1}^{\infty}a_{ij}\frac{z^{-i}}{i}
)\mathbf{t}_j+A(n,z),\label{tau1}
\end{align}
with some functions $A(n,z)=\sum_{k=1}^{\infty}A_k(n,\mathbf{t})z^{-k}$.
Substituting \eqref{tau1} back into  \eqref{bilinear1}, then we can get
\begin{align*}
\sum_{j=1}^{\infty}\left(\sum_{i=1}^{\infty}a_{ij}\frac{z_1^{-i}}{i}\right)
\frac{z_2^{-j}}{j}
=\sum_{j=1}^{\infty}\left(\sum_{i=1}^{\infty}a_{ij}\frac{z_2^{-i}}{i}\right)
\frac{z_1^{-j}}{j},
\end{align*}
which implies  $a_{ij}=a_{ji}.$ But $a_{ij}=-a_{ji}$, thus $a_{ij}=0$. Now \eqref{omegacondefine} and  \eqref{coefficient} will become
\begin{align}
d\Omega(n,\mathbf{t})=0,\quad\Omega(n,\mathbf{t})=dg(n,\mathbf{t}).\label{omegadefine}
 \end{align}

So if let $g(n,\mathbf{t})={\rm log}\tau_n(\mathbf{t})$,
\eqref{tau1} will be
\begin{align}
&{\rm log}\tilde{\psi}^+(n+1,\mathbf{t},z)=e^{-\xi(\tilde{\pa},z^{-1})}{\rm log}\tau_n(\mathbf{t})
-{\rm log}\tau_{n+1}(\mathbf{t})+A(n,z).\label{omegadefine2}
\end{align}
If substitute above relations into the definition of $\Omega$ (see \eqref{psidefine}), we can find
 \begin{align*}
 \Omega(n+1,\mathbf{t})=d{\rm log}\tau_{n+1}(\mathbf{t})-\sum_{k=1}^{\infty}kA_k(n,\mathbf{t})d\mathbf{t}_k.
 \end{align*}
 So by comparing with \eqref{omegadefine}, we can get $A(n,z)=0$. Finally by \eqref{omegadefine2}, we can obtain
\begin{eqnarray}
&&\tilde\psi^+(n,\mathbf{t},z)
=\frac{\tau_{n{-}1}(\mathbf{t}-[z^{-1}])}{\tau_{n}(\mathbf{t})}.\label{tau1fun}
\end{eqnarray}
As for $\Psi^-(n,\mathbf{t},z)$, it can be obtained by \eqref{psippsimlambda} and \eqref{tau1fun}.
\end{proof}
\section{From  large BKP hierarchy to  Toda  hierarchy}\label{section4}
In this section, large BKP hierarchy is viewed as the sub--hierarchy of modified Toda  (mToda) hierarchy. Then we review  Miura links between mToda and Toda hierarchies. After that, we restrict  Miura links to the case of large BKP hierarchy, and show the relations of tau function with Toda hierarchy. Here one can refer to \cite{Rui2024} for more details about mToda hierarchies.
\subsection{Large BKP hierarchy as sub--hierarchy of mToda hierarchy }\label{largeBKPsubmToda}
Firstly the  mToda hierarchy \cite{Rui2024} is   defined by
 \begin{align}
\pa_{x_k}{L}_i=[({L}_1^k)_{\Delta,\geq1},{L}_i],\quad \pa_{y_k}{L}_i=[({L}^k_2)_{\Delta^*,\geq1},{L}_i],\quad i=1,2,\label{mTodalaxeq}
\end{align}
with Lax operators
\begin{align}
{L}_1=v(n)\Lambda +\sum_{j=0}^{\infty}v_j(n)\Lambda^{-j},\quad
{ {L}_2}=  \overline{v}(n)\Lambda^{-1} +\sum_{j=0}^{\infty}  \overline{v}_j(n)\Lambda^{j}.\label{mTodalaxdefine}
 \end{align}
 Note that this hierarchy contains modified Toda lattice equation \cite{Hirota2004,Dai2003}, that is
 \begin{align*}
 v(n)_{y_1}=v(n)\big(\overline{v}(n)-\overline{v}(n+1)\big),\quad \overline{v}(n)_{x_1}=\overline{v}(n)\big({v}(n)-{v}(n-1)\big).
 \end{align*}
 If further set ${v}(n)=\pa_{x_1}\overline{\varphi}(n)$, $\overline{v}(n)=e^{\overline{\varphi}(n)-\overline{\varphi}(n-1)}$,
then we can get \cite{Hirota2004,Dai2003}
 \begin{align*}
\pa_{x_1}\pa_{y_1}\overline{\varphi}(n)
+\left(e^{\overline{\varphi}(n+1)-\overline{\varphi}(n)}
-e^{\overline{\varphi}(n)-\overline{\varphi}(n-1)}\right)\pa_{x_1}\overline{\varphi}(n)=0.
 \end{align*}

The mToda  hierarchy can also be expressed by wave operators ${S}_i$ and ${W}_i$ such that
\begin{align}
L_1={S}_1\Lambda{S}_1^{-1},\quad L_2={S}_2\Lambda^{-1}{S}_2^{-1}, \label{mTodawavelax}
\end{align}
where
\begin{align}
{S}_1=\sum_{k=0}^{\infty}c_k\Lambda^{-k},\quad
{S}_2=\sum_{k=0}^{\infty}\bar{c}_k\Lambda^{k}
,\quad c_0\neq0,\quad \bar{c}_0\neq0,\label{waveoperatordefine}
\end{align}
 satisfying
\begin{align}
&\pa_{x_k}{S}_1=-( L^k_1)_{\Delta,\leq0}{S}_1,\quad \pa_{y_k}{S}_2=-( {L}^k_2)_{\Delta^*,\leq0}{S}_2, \label{mTodaS1}\\
&\pa_{y_k}{S}_1=( {L}^k_2)_{\Delta^*,\geq1}{S}_1,\quad
\pa_{x_k}{S}_2=( {L}^k_1)_{\Delta,\geq1}{S}_2.\label{mTodaS2}
\end{align}

There is  a self--transformation for mToda hierarchy defined by
\begin{align*}
\pi_1: &({L}_i,{S}_i,x_k,y_k)\rightarrow
(\tilde{{L}}_i,\tilde{{S}}_i,\tilde{x}_k,\tilde{y}_k),\end{align*}
with
\begin{align*}
&\tilde{{L}}_1=\iota_{\Lambda^{-1}}\Delta^{-1}\cdot{L}^*_{2}\cdot\Delta,\quad
\tilde{{L}}_2=\iota_{\Lambda}\Delta^{-1}\cdot{L}^*_{1}\cdot\Delta,\\
&\tilde{{S}}_1=\iota_{\Lambda^{-1}}\Delta^{-1}\cdot({S}^{-1}_{2})^*\cdot\Lambda,\quad
\tilde{{S}}_2=-\iota_{\Lambda}\Delta^{-1}\cdot({S}^{-1}_{1})^*,\\
&\tilde{x}_k=-y_k,\quad\tilde{y}_k=
-x_k,
\end{align*}
satisfying
\begin{align*}
\pa_{\tilde{x}_k}{\tilde{{L}}_i}
=[({\tilde{{L}}^k_1})_{\Delta,\geq1},\tilde{{L}}_i],
\quad \pa_{\tilde{y}_k}{\tilde{{L}}_i}
=[({\tilde{{L}}^k_2})_{\Delta^*,\geq1},\tilde{{L}}_i].
\end{align*}
Here we have used for operator $A$, $$(\iota_{\Lambda}\Delta^{-1}\cdot A^*\cdot\Delta)_{\Delta^*,\geq 1}=\iota_{\Lambda^{\pm1}}\Delta^{-1}\cdot(A_{\Delta,\geq 1})^*\cdot\Delta,\quad
(\iota_{\Lambda^{-1}}\Delta^{-1}\cdot A^*\cdot\Delta)_{\Delta,\geq 1}=\iota_{\Lambda^{\pm1}}\Delta^{-1}\cdot(A_{\Delta^*,\geq 1})^*\cdot\Delta,$$
where one should note that $\iota_{\Lambda}\Delta^{-1}\cdot\Delta^*=\iota_{\Lambda^{-1}}\Delta^{-1}\cdot\Delta^*=-\Lambda^{-1}$.
Note that
$
{\rm Ad}\Lambda^l\Big(L_i(n)\Big)=\Lambda^l\cdot L_i(n)\cdot\Lambda^{-l}=L_i(n+l),$ ${\rm Ad}\Lambda^l\Big(S_i(n)\Big)=\Lambda^l S_i(n)\Lambda^{-l}=S_i(n+l),
$
therefore ${\rm Ad}\Lambda^l$ is  another self--transformation and  communicates with $\pi_1$. By definition,  we can get
${\rm Ad}\Lambda\circ{\pi_1}^2=1,$
so $$\pi_1^{-1}={\rm Ad}\Lambda\circ\pi_1.$$

If introduce wave function ${\Phi}_i$  and adjoint wave function ${\Phi}^*_i$ $(i=1,2)$ as follows
\begin{align}
&{\Phi}_1(n,x,y,z)={S}_1(n,x,y,z)e^{\xi(x,\Lambda)}(z^n),\quad
{\Phi_1}^*(n,x,y,z)=-\iota_{\Lambda}\Delta^{-1}\cdot({S}_1(n,x,y,z)^{-1})^*
e^{-\xi(x,\Lambda^{-1})}(z^{-n}),\label{Phidefine1}\\
&{\Phi}_2(n,x,y,z)={S}_2(n,x,y,z)
e^{\xi(y,\Lambda^{-1})}(z^n),\quad
{\Phi_2}^*(n,x,y,z)=\iota_{\Lambda^{-1}}\Delta^{-1}\cdot({S}_2(n,x,y,z)^{-1})^*
e^{-\xi(y,\Lambda)}(z^{-n}).
\label{Phidefine2}
\end{align}
Then ${\Phi}_i$  and  ${\Phi}^*_i$  satisfy the bilinear equation below \cite{Rui2024}
\begin{align}
\oint_{z=\infty}\frac{dz}{2\pi i z}\Phi_1(n,x,y,z)\Phi^*_1(n',x',y',z)
+\oint_{z=0}\frac{dz}{2\pi i z}\Phi_2(n,x,y,z)\Phi^*_2(n',x',y',z)=1.\label{modifiedTodabilinear}
\end{align}
Further there exists tau functions $\big(\tau_{0,n}(x,y),\tau_{1,n}(x,y)\big)$ such that
\begin{eqnarray}
\Phi_1(n,x,y,z)=\frac{\tau_{0,n}
(x-[z^{-1}],y)}
{\tau_{1,n}(x,y)}e^{\xi(x,z)}z^n,&
&{\Phi_2}(n,x,y,z)=\frac{\tau_{0,n+1}
(x,y-[z])}
{\tau_{1,n}(x,y)}
e^{\xi(y,z^{-1})}z^{n},\label{mTLtau}\\
{\Phi_1}^*(n,x,y,z)=\frac{\tau_{1,n}
(x+[z^{-1}],y)}
{\tau_{0,n}(x,y)}e^{-\xi(x,z)}z^{-n},&
&{\Phi_2}^*(n,x,y,z)=\frac{\tau_{1,n-1}
(x,y
+[z])}{\tau_{0,n}(x,y)}
e^{-\xi(y,z^{-1})}z^{-n+1}.\label{mTLtau2}
\end{eqnarray}
So we can find \eqref{modifiedTodabilinear} will become
\begin{align}\label{HirotamToda}
&\text{{\rm Res}$_{z}$}\left(z^{n-n'-1}\tau_{0, n}(x-[z^{-1}],y)
\cdot\tau_{1, n'}(x'+{[z^{-1}]},y)e^{\xi(x-x',z)}\right.\nonumber\\
&\quad\quad\quad\left.
+z^{n'-n-2}\tau_{0, n+1}(x,y-{[z^{-1}]})
\cdot\tau_{1,n'-1}(x',y'+[z^{-1}])e^{\xi(y-y',z)}\right)=\tau_{1,n}
(x,y)\tau_{0,n'}(x',y').
\end{align}

If add the constraint \cite{Krichever2023} below  on the mToda Lax operators
\begin{align}
L^*_2(\Lambda-\Lambda^{-1})=(\Lambda-\Lambda^{-1})L_1,\label{constraint1}
\end{align}
then we can get  large BKP hierarchy by
 \begin{align*}
 L^+=L_1,\quad L^-=L_2.
 \end{align*}
 Note that if we denote
  \begin{align*}
 2\mathbf{t}_k=x_k-y_k,\quad2\bar{\mathbf{t}}_k=x_k+y_k,
  \end{align*}
  then only $\pa_{\mathbf{t}_k}=\pa_{x_k}-\pa_{y_k}$ can keep the constraint \eqref{constraint1} (see Section \ref{Section3.1}). So we can put  $\bar{\mathbf{t}}=0$ in the mToda hierarchy corresponding to large BKP hierarchy.  Then we have
  \begin{align*}
\pa_{\mathbf{t}_k}L^{\pm}=[B_{k},L^{\pm}],\quad B_k=(L^+)^k_{\Delta,\geq1}-(L^-)^k_{\Delta^*,\geq1},
\quad(L^+)^*(\Lambda-\Lambda^{-1})=(\Lambda-\Lambda^{-1})L^-,
\end{align*}
which is just the large BKP  hierarchy.

Next if set
\begin{align}
S^+(n, \mathbf{t},\Lambda)=S_1(n, \mathbf{t},-\mathbf{t},\Lambda),\quad
S^-(n, \mathbf{t},\Lambda)=S_2^*(n, \mathbf{t},-\mathbf{t},\Lambda),\label{Wtrans1}
\end{align}
then by  \eqref{Phidefine1} and \eqref{Phidefine2}
\begin{align}
&\Phi_1(n,\mathbf{t},-\mathbf{t},z)=\Psi^+(n,\mathbf{t},z),\quad\Phi^*_1(n,\mathbf{t},-\mathbf{t},z)=\Psi^-(n-1,\mathbf{t},z)
+\Psi^-(n,\mathbf{t},z),
\label{largeBKPandmToda}\\
&\Phi_2(n,\mathbf{t},-\mathbf{t},z)=z^{-1}\Psi^-(n,\mathbf{t},z^{-1}),\quad
\Phi^*_2(n,\mathbf{t},-\mathbf{t},z)=z^{-1}\left(\Psi^+(n-1,\mathbf{t},z^{-1})+
\Psi^+(n,\mathbf{t},z^{-1})\right).\label{largeBKPandmToda2}
\end{align}
Therefore according to \eqref{modifiedTodabilinear}, one can find
\begin{align}
&{\rm Res}_{z}{z^{-1}}\left(\Psi^+(n,\mathbf{t},z)
\Psi^-(n'-1,\mathbf{t}',z)+\Psi^-(n,\mathbf{t},z)
\Psi^+(n'-1,\mathbf{t}',z)\right.\nonumber\\
&\quad\quad\quad\quad\left.+\Psi^+(n,\mathbf{t},z)
\Psi^-(n',\mathbf{t}',z)+\Psi^-(n,\mathbf{t},z)
\Psi^+(n',\mathbf{t}',z)\right)=1,\label{mowavefunctionbilinear}
\end{align}
which is equivalent to the bilinear equation \eqref{wavefunctionbilinear} of  large BKP hierarchy. In fact \eqref{mowavefunctionbilinear} can be derived by using $1+\Lambda^{-1}$ to act on $n''$ in \eqref{wavefunctionbilinear}. Conversely stating from \eqref{mowavefunctionbilinear}, we can obtain
\begin{align*}
S^{+}\Lambda^{-1} S^{-}=\iota_{\Lambda^{-1}}(\Lambda-\Lambda^{-1})^{-1},\quad
 \pa_{\mathbf{t}_k}S^{+}=B_{k}S^{+}-S^{+}\Lambda^{k}, \quad \pa_{\mathbf{t}_k}S^{-}=\Lambda^{k}S^{-}+S^{-}B_{k}^*,
\end{align*}
by the same method in Proposition \ref{Prop:t1kflow}. Then based on this, we can prove \eqref{wavefunctionbilinear} by Proposition \ref{bilinasr}.
Therefore large BKP hierarchy can be regarded as the sub--hierarchy of  mToda hierarchy.

\subsection{Miura links between Toda and mToda hierarchies }\label{twomiura}
In this subsection,  we will discuss  Miura and anti--Miura transformations between Toda and mToda hierarchies. Here the transformation from mToda  to Toda is regarded as  Miura transformation, while the one from Toda to mToda is called  anti--Miura transformation. For more details, one can refer to \cite{Rui2024}.

Firstly for mToda  Lax operators $L_i$ and wave operators $S_i$ $(i=1,2)$, there is a Miura  transformation
\begin{align*}
T_1=c_0^{-1}(n):(L_i,S_i)\rightarrow(\mathcal{L}_i,\mathcal{S}_i),
 \end{align*}
 with $c_0(n)$ being the coefficient of $\Lambda^0$--term in
  $S_1$ $( \ {\rm see} \ \eqref{waveoperatordefine} \ )$ and
\begin{align}
\mathcal{L}_i=T_1L_iT_1^{-1},\quad \mathcal{S}_i=T_1S_i.\label{firsttras}
\end{align}
 Note that by \eqref{mTodalaxdefine}--\eqref{waveoperatordefine}, $\mathcal{L}_i$ and  $\mathcal{S}_i$ have the following  forms
 \begin{align*}
 \mathcal{L}_1=\mathcal{S}_1\Lambda\mathcal{S}_1^{-1}=\Lambda +\sum_{i=0}^{\infty}u_i(n)\Lambda^{-i},\quad
{\mathcal{L}_2}= \mathcal{S}_2\Lambda^{-1} \mathcal{S}_2^{-1}= \overline{u}_{-1}(n)\Lambda^{-1} +\sum_{i=0}^{\infty}  \overline{u}_i(n)\Lambda^{i},
 \end{align*}
and
\begin{align}
\mathcal{S}_1=1+\sum_{k=1}^{\infty}w_k(n)\Lambda^{-k},\quad
{\mathcal{S}_2}=\bar{w}_0(n)+\sum_{k=1}^{\infty}\bar{w}_k(n)\Lambda^{k},\quad \bar{w}_0(n)\neq0,\label{Todawaves}
\end{align}
which are just Toda  Lax and wave operators \cite{takasaki2018jpa}  satisfying
\begin{align}
&\pa_{x_k}\mathcal{L}_i=[(\mathcal{L}_1^k)_{\geq0},\mathcal{L}_i],
\quad \pa_{y_k}{\mathcal{L}_i}=[({\mathcal{L}^k_2})_{<0},{\mathcal{L}_i}],\label{todalax}\\
&\pa_{x_k}\mathcal{S}_1=-(\mathcal{L}^k_1)_{<0}\cdot\mathcal{S}_1,\quad \pa_{y_k}\mathcal{S}_2=-(\mathcal{L}^k_2)_{\geq0}\cdot\mathcal{S}_2,\label{todalax2} \\ &\pa_{y_k}\mathcal{S}_1=(\mathcal{L}^k_2)_{<0}\cdot\mathcal{S}_1,\quad
\pa_{x_k}\mathcal{S}_2=(\mathcal{L}^k_1)_{\geq0}\cdot\mathcal{S}_2.\label{todalax3}
\end{align}
In order to    obtain \eqref{todalax}--\eqref{todalax3} from mToda hierarchy, we need to compare $\Lambda^0$--terms in \eqref{mTodaS1} and \eqref{mTodaS2}, and  find
\begin{align}
&\pa_{x_k}(c_0^{-1})=(\mathcal{L}_1^k)_{\geq0}(c_0^{-1}),\quad
\pa_{y_k}(c_0^{-1})=(\mathcal{L}_2^k)_{<0}(c_0^{-1}),\label{mTodaeig}
\end{align}
where we have used the fact for any operator $A$ and  function $f$ \cite{Cheng12019}
\begin{align*}
(f^{-1}Af)_{\Delta,\geq 1}=f^{-1}A_{\geq 0}f-f^{-1}A_{\geq 0}(f),\quad
(f^{-1}Af)_{\Delta^*,\geq 1}=f^{-1}A_{< 0}f-f^{-1}A_{< 0}(f).
\end{align*}
Further by \eqref{mTodalaxeq}, \eqref{mTodaS1} and \eqref{mTodaS2}, we can get \eqref{todalax}--\eqref{todalax3}.
 There is also a  self--transformation for Toda hierarchy,
\begin{align*}
\pi_0: (\mathcal{L}_i,\mathcal{S}_i,x_k,y_k)\rightarrow
(\tilde{\mathcal{L}}_i,\tilde{\mathcal{S}}_i,\tilde{x}_k,\tilde{y}_k),
\end{align*}
with
$(\tilde{\mathcal{L}}_i,\tilde{\mathcal{S}}_i,\tilde{x}_k,\tilde{y}_k)=
(\bar{w}_0\mathcal{L}^*_{3-i}\bar{w}^{-1}_0,
\bar{w}_0(\mathcal{S}^{-1}_{3-i})^*,-y_k,-x_k)$ satisfying Toda  hierarchy, that is,
\begin{align*}
\pa_{\tilde{x}_k}{\tilde{\mathcal{L}}_i}=[({\tilde{\mathcal{L}}^k_1})_{\geq0},\tilde{L}_i],
\quad \pa_{\tilde{y}_k}{\tilde{\mathcal{L}}_i}
=[({\tilde{\mathcal{L}}^k_2})_{<0},\tilde{\mathcal{L}}_i].
\end{align*}
Here we have used
 \begin{align}
\pa_{x_k}\bar{w}_0\cdot\bar{w}^{-1}_0=(\mathcal{L}_1^k)_{[0]},\quad
\pa_{y_k}\bar{w}_0\cdot\bar{w}^{-1}_0=-(\mathcal{L}_2^k)_{[0]},\label{todafirst}
 \end{align}
which are derived from \eqref{todalax2} and \eqref{todalax3}. Note that different from the self--transformation $\pi_1$ in mToda hierarchy, we can found $\pi_0^2=1$.

If introduce  Toda  wave function $\Psi_i$ and adjoint  wave function $\Psi^*_i$ as follows
\begin{align*}
&{\Psi}_1(n,x,y,z)=\mathcal{S}_1(n,x,y,\Lambda)e^{\xi(x,\Lambda)}(z^n)
,\quad{\Psi}_2(n,x,y,z)=\mathcal{S}_2(n,x,y,\Lambda)e^{\xi(y,\Lambda^{-1})}(z^{n}),\\ &{\Psi}^*_1(n,x,y,z)=\Big({\mathcal{S}_1}(n,x,y,\Lambda)^{-1}\Big)^*
e^{-\xi(x,\Lambda^{-1})}(z^{-n}),\quad
{\Psi}^*_2(n,x,y,z)=\Big({\mathcal{S}_2}(n,x,y,\Lambda)^{-1}\Big)^*
e^{-\xi(y,\Lambda)}(z^{-n}),
\end{align*}
then by \eqref{todalax2}  and \eqref{todalax3}, we can get
\begin{align*}
\mathcal{L}_1(\Psi_1)=z\Psi_1,\quad\mathcal{L}_2(\Psi_2)=z^{-1}\Psi_2,\quad
\pa_{x_k}\Psi_i=(\mathcal{L}^k_1)_{\geq0}(\Psi_i),\quad
\pa_{y_k}\Psi_i=(\mathcal{L}^k_2)_{<0}(\Psi_i),\\
\mathcal{L}^*_1(\Psi^*_1)=z\Psi^*_1,\quad\mathcal{L}^*_2(\Psi^*_2)=z^{-1}\Psi^*_2,\quad
\pa_{x_k}\Psi^*_i=-((\mathcal{L}^k_1)_{\geq0})^*(\Psi^*_i),\quad
\pa_{y_k}\Psi^*_i=-((\mathcal{L}^k_2)_{<0})^*(\Psi^*_i).
\end{align*}
Here wave function $\Psi_i$ and adjoint  wave function $\Psi^*_i$ are special cases of Toda  eigenfunction $q(n)$ and adjoint eigenfunction $r(n)$ respectively, which are defined by
\begin{align*}
&\pa_{x_k}q(n)=(\mathcal{L}_1(n)^k)_{\geq0}(q(n)),\quad
\pa_{{y}_k}q(n)=(\mathcal{L}_2(n)^k)_{<0}(q(n)),\\
&\pa_{x_k}r(n)=-(\mathcal{L}_1(n)^k)^*_{\geq0}(r(n)),\quad
\pa_{y_k}r(n)=-(\mathcal{L}_2(n)^k)^*_{<0}(r(n)).
\end{align*}
Note that by \eqref{mTodaeig}, $c_0^{-1}$ is the Toda   eigenfunction  with respect to $\mathcal{L}_i$ in \eqref{firsttras}.

Then by \eqref{Phidefine1}, \eqref{Phidefine2} and \eqref{firsttras},
\begin{align}
&{\Psi}_1(n,x,y,z)=c_0^{-1}(n){\Phi}_1(n,x,y,z),
\quad{\Psi_1}^*(n,x,y,z)=c_0(n)\Big({\Phi}^*_1(n,x,y,z)-{\Phi}^*_1(n+1,x,y,z)\Big),\label{Todawavefun}
\\
&{\Psi}_2(n,x,y,z)=c_0^{-1}(n){\Phi}_2(n,x,y,z),\quad
{\Psi_2}^*(n,x,y,z)=c_0(n)\Big({\Phi}^*_2(n+1,x,y,z)-{\Phi}^*_2(n,x,y,z)\Big).\label{Todawavefun2}
\end{align}
So  by using $1-\Lambda$ to act on $n'$ in \eqref{modifiedTodabilinear} and multiplying $c_0^{-1}(n)c_0(n')$, we can get
\begin{align}\label{wavebilinearcomponent}
    \oint_{z=\infty}\frac{dz}{2\pi iz}\Psi_1(n;x,y;z)\Psi_1^*(n';x',y';z)
    =\oint_{z=0}\frac{dz}{2\pi iz}\Psi_2(n;x,y;z)\Psi_2^*(n';x',y';z),
\end{align}
which is just  Toda bilinear equation.
On this basis, it can be proved that there exists a  tau function $\tau^{{\rm Toda}}_n(x,y)$   \cite{takasaki2018jpa} such that
\begin{eqnarray}
\Psi_1(n,x,y,z)=\frac{\tau^{{\rm Toda}}_n
(x-[z^{-1}],y)}
{\tau^{{\rm Toda}}_n(x,y)}e^{\xi(x,z)}z^n,&
&{\Psi_2}(n,x,y,z)=\frac{\tau^{{\rm Toda}}_{n+1}
(x,y-[z])}
{\tau^{{\rm Toda}}_n(x,y)}
e^{\xi(y,z^{-1})}z^{n},\label{TLtau}\\
{\Psi_1}^*(n,x,y,z)=\frac{\tau^{{\rm Toda}}_{n+1}
(x+[z^{-1}],y)}
{\tau^{{\rm Toda}}_{n+1}(x,y)}e^{-\xi(x,z)}z^{-n},&
&{\Psi_2}^*(n,x,y,z)=\frac{\tau^{{\rm Toda}}_{n}
(x,y
+[z])}{\tau^{{\rm Toda}}_{n+1}(x,y)}
e^{-\xi(y,z^{-1})}z^{-n}.\label{TLtau2}
\end{eqnarray}

From \eqref{Todawavefun} and \eqref{Todawavefun2}, we can find by $c_0(n)=\frac{\tau_{0,n}(x,y)}{\tau_{1,n}(x,y)}$, \eqref{mTLtau} and \eqref{TLtau} that
$$\frac{\tau_n^{\rm Toda}(x-[z^{-1}],y)}{\tau_n^{\rm Toda}(x,y)}=\frac{\tau_{0,n}(x-[z^{-1}],y)}{\tau_{0,n}(x,y)},\quad \frac{\tau_{n+1}^{\rm Toda}(x,y-[z])}{\tau_n^{\rm Toda}(x,y)}=\frac{\tau_{0,n+1}(x,y-[z])}{\tau_{0,n}(x,y)},$$
which implies
$$\big(e^{-\xi(\tilde{\pa}_x,z^{-1})}-1\big)\big({\rm log}\tau_n^{\rm Toda}(x,y)-{\rm log}\tau_{0,n}(x,y)\big)=0,\quad \big(\Lambda e^{-\xi(\tilde{\pa}_y,z)}-1\big)\big({\rm log}\tau_n^{\rm Toda}(x,y)-{\rm log}\tau_{0,n}(x,y)\big)=0.$$
Note that  if function $h(n,x,y)$ satisfy $\big( e^{-\xi(\tilde{\pa}_x,z^{-1})}-1\big)h(n,x,y)=0$ and $\big( \Lambda e^{-\xi(\tilde{\pa}_y,z)}-1\big)h(n,x,y)=0$, then $h(n,x,y)$ is some constant without depending on $n$, $x$ and $y$.
Therefore $$\tau_{0,n}(x,y)={\rm const}\cdot\tau_n^{\rm Toda}(x,y).$$

Besides the first Miura transformation $T_1$, there also
  exists another  Miura transformation
\begin{align*}
T_2: (L_i,S_i)\xrightarrow {\pi_1^{-1}} (\widetilde{L}_i,\widetilde{S}_i)\xrightarrow {{T_1}} (\widetilde{\mathcal{L}}_i,\widetilde{\mathcal{S}}_i)\xrightarrow {\pi_0}({\mathcal{L}}_i,{\mathcal{S}}_i),
\end{align*}
that is ${T_2=\pi_0\circ T_1\circ \pi_1^{-1}}:$ $(L_i,S_i)\rightarrow(\mathcal{L}_i,\mathcal{S}_i)$. It can be found that
 $${T}_2=c_0^{-1}(n+1)\Delta,$$
and
\begin{align}
\mathcal{L}_1=T_2\cdot L_1\cdot \iota_{\Lambda^{-1}}T_2^{-1},\quad
\mathcal{L}_2=T_2\cdot L_2\cdot \iota_{\Lambda}T_2^{-1},\quad
\mathcal{S}_1={T}_2\cdot {S}_1\cdot\Lambda^{-1},\quad
\mathcal{S}_2=-{T}_2\cdot {S}_2,\label{Todadefinetra}
\end{align}
satisfying \eqref{todalax}--\eqref{todalax3}. Further by \eqref{Phidefine1}, \eqref{Phidefine2} and \eqref{Todadefinetra}
\begin{align}
&{\Psi}_1(n,x,y,z)=z^{-1}c_0^{-1}(n+1)\Big({\Phi}_1(n+1,x,y,z)-{\Phi}_1(n,x,y,z)\Big),\quad {\Psi_1}^*(n,x,y,z)=zc_0(n+1){\Phi}^*_1(n+1,x,y,z),\label{mTodatauToda2}\\
&{\Psi}_2(n,x,y,z)=-c_0^{-1}(n+1)\Big({\Phi}_2(n+1,x,y,z)-{\Phi}_2(n,x,y,z)\Big),\quad
{\Psi_2}^*(n,x,y,z)=c_0(n+1){\Phi}^*_2(n+1,x,y,z),\label{mTodatauToda3}
\end{align}
satisfying Toda bilinear equation \eqref{wavebilinearcomponent}.
In particularly  by comparing the $\Lambda^0$--terms of $\pa_{x_k}S_2$ and $\pa_{y_k}S_2$,
 one can derive
\begin{align}
\pa_{x_k}c_0(n+1)=-((\mathcal{L}_1(n)^k)_{\geq0})^*\left(c_0(n+1)\right),\quad
\pa_{y_k}c_0(n+1)=-((\mathcal{L}_2(n)^k)_{<0})^*\left(c_0(n+1)\right),\label{adjointeg}
\end{align}
which shows   $c_0(n+1)$ is the Toda  adjoint eigenfunction  respect to $\mathcal{L}_i$ in \eqref{Todadefinetra}.
Similarly from \eqref{mTodatauToda2} and \eqref{mTodatauToda3}, we can find  by \eqref{mTLtau2} and \eqref{TLtau2}
$$\tau_{1,n}(x,y)={\rm const}\cdot\tau_n^{\rm Toda}(x,y).$$

Let us summarize above results in the following proposition.
\begin{proposition}\label{muratras}
Given mToda objects:  Lax operators $L_i$, wave operators ${S}_i$, wave functions $\Phi_i$ and adjoint wave functions $\Phi_i^*$ $(i=1,2)$, if set $\mathcal{L}_i$,  $\mathcal{S}_i$,  $\Psi_i$ and  $\Psi_i^*$ in the form of Table I
\begin{center}
\begin{tabular}{lll}
\multicolumn{3}{c}{Table I. Miura transformations: mToda $\rightarrow$ Toda}\\
\hline \hline
 &$T_1=c_0^{-1}$ &\quad$T_2=c_0^{-1}(n+1)\Delta$ \ \\
\hline
{\rm Wave operator} &$\mathcal{S}_1={T}_1\cdot{S}_1$& \quad$\mathcal{S}_1={T}_2\cdot{S}_1\Lambda^{-1}$\\
&$\mathcal{S}_2={T}_1\cdot{S}_2$& \quad$\mathcal{S}_2=-{T}_2\cdot{S}_2$\\
{\rm Lax operator} &$\mathcal{L}_1=
{T}_1\cdot{L}_1\cdot{T}_1^{-1}$ &\quad $ \mathcal{L}_1=T_2\cdot L_1\cdot(\iota_{\Lambda^{-1}}T_2^{-1})$\\
  &$\mathcal{L}_2=
{T}_1\cdot{L}_2\cdot{T}_1^{-1}$ &\quad $ \mathcal{L}_2=T_2\cdot L_2\cdot(\iota_{\Lambda}T_2^{-1})$\\
{\rm Wave function}&${\Psi}_1=T_1({\Phi}_1)$& \quad$ {\Psi}_1=z^{-1}T_2({\Phi}_1) $ \\
&${\Psi}_2=T_1({\Phi}_2)$& \quad ${\Psi}_2=-T_2({\Phi}_2)$ \\
{\rm Adjoint wave function}&${\Psi_1}^*=-{T}_1^{-1}\Delta({\Phi}^*_1)$&\quad ${\Psi_1}^*=zc_0(n+1)\Lambda({\Phi}^*_1)$ \\
& ${\Psi_2}^*={T}_1^{-1}\Delta({\Phi}^*_2)$ & \quad
${\Psi_2}^*=c_0(n+1)\Lambda({\Phi}^*_2)$ \\
{\rm Tau function} &$\tau_{n}^{\rm Toda}={\rm const}\cdot\tau_{0,n}$& \quad$\tau_{n}^{\rm Toda}={\rm const}\cdot\tau_{1,n}$\\
\hline
\end{tabular}
\end{center}
 then $\Psi_i$ and $\Psi_i^*$ satisfy Toda bilinear equation \eqref{wavebilinearcomponent}, and
  $(\mathcal{L}_i, \mathcal{S}_i)$ satisfies Toda  evolution equations  \eqref{todalax}--\eqref{todalax3}.
\end{proposition}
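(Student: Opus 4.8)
The plan is to verify the two conclusions — that $(\mathcal{L}_i,\mathcal{S}_i)$ obeys the Toda evolution equations \eqref{todalax}--\eqref{todalax3} and that $(\Psi_i,\Psi_i^*)$ obeys the Toda bilinear equation \eqref{wavebilinearcomponent} — separately for the two columns $T_1$ and $T_2$ of Table I, since $T_1$ is a plain gauge conjugation by the scalar $c_0^{-1}(n)$ while $T_2=\pi_0\circ T_1\circ\pi_1^{-1}$ factors through the known self-transformations $\pi_0$ (Toda) and $\pi_1$ (mToda). The single algebraic identity that drives everything is the conversion rule between the two truncations: for any operator $A$ and any function $f$ one has $(f^{-1}Af)_{\Delta,\geq1}=f^{-1}A_{\geq0}f-f^{-1}A_{\geq0}(f)$ and $(f^{-1}Af)_{\Delta^*,\geq1}=f^{-1}A_{<0}f-f^{-1}A_{<0}(f)$. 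This is exactly what replaces the $\Delta$- and $\Delta^*$-projections appearing in mToda by the $(\cdot)_{\geq0}$ and $(\cdot)_{<0}$ projections appearing in Toda, up to a scalar correction term.

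For the column $T_1=c_0^{-1}(n)$, I would first record that the definitions $\mathcal{L}_i=T_1 L_i T_1^{-1}$ and $\mathcal{S}_i=T_1 S_i$ from \eqref{firsttras} automatically produce operators of the normalized Toda form \eqref{Todawaves}, because conjugation by a scalar leaves leading symbols unchanged and kills the leading coefficient $c_0$ of $S_1$. To obtain \eqref{todalax}--\eqref{todalax3} I would take the $\Lambda^0$-terms of the mToda wave equations \eqref{mTodaS1}--\eqref{mTodaS2}; by the truncation-conversion identity these reduce to the statement \eqref{mTodaeig} that $c_0^{-1}$ is a Toda eigenfunction with respect to $\mathcal{L}_i$. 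Feeding \eqref{mTodaeig} back into the conjugated mToda equations, the scalar correction terms cancel and one is left exactly with \eqref{todalax}--\eqref{todalax3}. For the bilinear part I would use the wave-function relations \eqref{Todawavefun}--\eqref{Todawavefun2}, apply the difference operator $1-\Lambda$ in the primed discrete variable $n'$ to the mToda bilinear equation \eqref{modifiedTodabilinear}, and multiply through by $c_0^{-1}(n)c_0(n')$; the telescoping produced by $1-\Lambda$ matches precisely the combinations $\Phi_1^*(n)-\Phi_1^*(n+1)$ and $\Phi_2^*(n+1)-\Phi_2^*(n)$ appearing in \eqref{Todawavefun}--\eqref{Todawavefun2}, yielding \eqref{wavebilinearcomponent}. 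The tau-function entry then follows by comparing the resulting $\Psi_1$ with \eqref{mTLtau} and \eqref{TLtau}.

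For the column $T_2=c_0^{-1}(n+1)\Delta$, I would avoid redoing the computation by exploiting $T_2=\pi_0\circ T_1\circ\pi_1^{-1}$ together with the already-established facts that $\pi_1$ maps mToda solutions to mToda solutions and $\pi_0$ maps Toda solutions to Toda solutions. Concretely, $\pi_1^{-1}={\rm Ad}\Lambda\circ\pi_1$ sends $(L_i,S_i)$ to another mToda solution $(\widetilde L_i,\widetilde S_i)$; applying the column-$T_1$ result to this solution gives a Toda solution; and post-composing with $\pi_0$ keeps it Toda. Tracing the explicit formulas for $\pi_1$, ${\rm Ad}\Lambda$ and $\pi_0$ through this chain reproduces the operator and wave-function entries \eqref{Todadefinetra} and \eqref{mTodatauToda2}--\eqref{mTodatauToda3}, and in particular the identification $T_2=c_0^{-1}(n+1)\Delta$. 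The only genuinely new input is that $c_0(n+1)$ is now a Toda \emph{adjoint} eigenfunction, which I would obtain as in \eqref{adjointeg} by comparing the $\Lambda^0$-terms of $\pa_{x_k}S_2$ and $\pa_{y_k}S_2$; this yields the tau entry $\tau_{1,n}={\rm const}\cdot\tau_n^{\rm Toda}$ via \eqref{mTLtau2} and \eqref{TLtau2}.

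The main obstacle, and the step I would spend the most care on, is the bookkeeping for $T_2$: the three transformations in $\pi_0\circ T_1\circ\pi_1^{-1}$ each involve different expansion conventions ($\iota_{\Lambda}$ versus $\iota_{\Lambda^{-1}}$), adjoints, and $n$-shifts, so verifying that the composite equals the clean operator $c_0^{-1}(n+1)\Delta$ and that all $\iota_{\Lambda^{\pm1}}$ choices in \eqref{Todadefinetra} are consistent requires the careful use of the identities $(\iota_{\Lambda}\Delta^{-1}A^*\Delta)_{\Delta^*,\geq1}=\iota_{\Lambda^{\pm1}}\Delta^{-1}(A_{\Delta,\geq1})^*\Delta$ and $(\iota_{\Lambda^{-1}}\Delta^{-1}A^*\Delta)_{\Delta,\geq1}=\iota_{\Lambda^{\pm1}}\Delta^{-1}(A_{\Delta^*,\geq1})^*\Delta$ recorded before the proposition, along with $\iota_{\Lambda}\Delta^{-1}\Delta^*=\iota_{\Lambda^{-1}}\Delta^{-1}\Delta^*=-\Lambda^{-1}$. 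By contrast, the $T_1$ computation is essentially a direct consequence of the truncation-conversion identity and the eigenfunction property \eqref{mTodaeig}, so I expect it to be routine once that identity is in hand.
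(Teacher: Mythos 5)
Your proposal is correct and follows essentially the same route as the paper: the proposition is stated there as a summary of the preceding derivation in Section \ref{twomiura}, which likewise handles $T_1$ by scalar gauge conjugation together with the truncation-conversion identity and the eigenfunction property \eqref{mTodaeig}, obtains the bilinear equation by acting with $1-\Lambda$ on $n'$ in \eqref{modifiedTodabilinear} and multiplying by $c_0^{-1}(n)c_0(n')$, and handles $T_2$ via the factorization $T_2=\pi_0\circ T_1\circ\pi_1^{-1}$ with the adjoint-eigenfunction property \eqref{adjointeg} supplying the tau-function entry. No gaps; your flagged concern about the $\iota_{\Lambda^{\pm1}}$ bookkeeping in the $T_2$ column is exactly where the paper also relies on the pre-recorded conjugation identities.
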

\begin{remark}\label{taufunctionmToda}
Given mToda hierarchy having tau functions $\big(\tau_{0,n},\tau_{1,n}\big)$, $\tau_{0,n}$ and $\tau_{1,n}$ will be Toda two tau functions by Proposition \ref{muratras}. And if
\begin{center}

\tikzset{every picture/.style={line width=0.75pt}} 

\begin{tikzpicture}[x=0.75pt,y=0.75pt,yscale=-1,xscale=1]

\draw   (102,120.6) .. controls (102,115.3) and (106.3,111) .. (111.6,111) -- (186.4,111) .. controls (191.7,111) and (196,115.3) .. (196,120.6) -- (196,149.4) .. controls (196,154.7) and (191.7,159) .. (186.4,159) -- (111.6,159) .. controls (106.3,159) and (102,154.7) .. (102,149.4) -- cycle ;
\draw   (256,60) .. controls (256,55.58) and (259.58,52) .. (264,52) -- (333,52) .. controls (337.42,52) and (341,55.58) .. (341,60) -- (341,84) .. controls (341,88.42) and (337.42,92) .. (333,92) -- (264,92) .. controls (259.58,92) and (256,88.42) .. (256,84) -- cycle ;
\draw   (260,173) .. controls (260,168.58) and (263.58,165) .. (268,165) -- (338,165) .. controls (342.42,165) and (346,168.58) .. (346,173) -- (346,197) .. controls (346,201.42) and (342.42,205) .. (338,205) -- (268,205) .. controls (263.58,205) and (260,201.42) .. (260,197) -- cycle ;
\draw    (196,127) -- (253.55,72.38) ;
\draw [shift={(255,71)}, rotate = 136.49] [color={rgb, 255:red, 0; green, 0; blue, 0 }  ][line width=0.75]    (10.93,-3.29) .. controls (6.95,-1.4) and (3.31,-0.3) .. (0,0) .. controls (3.31,0.3) and (6.95,1.4) .. (10.93,3.29)   ;
\draw    (195,145) -- (258.3,183.95) ;
\draw [shift={(260,185)}, rotate = 211.61] [color={rgb, 255:red, 0; green, 0; blue, 0 }  ][line width=0.75]    (10.93,-3.29) .. controls (6.95,-1.4) and (3.31,-0.3) .. (0,0) .. controls (3.31,0.3) and (6.95,1.4) .. (10.93,3.29)   ;

\draw (149.5,133.5) node   [align=left] {\begin{minipage}[lt]{57.8pt}\setlength\topsep{0pt}
 \ \ \ {\rm mToda}\\ \ \ \ \   $\big({\tau}_{0,n},{\tau}_{1,n}\big)$
\end{minipage}};
\draw (264,52) node [anchor=north west][inner sep=0.75pt]   [align=left] { \ \ \ \ {\rm Toda} \ \ \ \\ \ \ \ \  ${\tau}^{\rm Toda}_n$};
\draw (274,164) node [anchor=north west][inner sep=0.75pt]   [align=left] { \ \ \ {\rm Toda} \ \ \\ \ \ \ $\tilde{\tau}^{\rm Toda}_n$};
\draw (201.2,96.56) node [anchor=north west][inner sep=0.75pt]  [rotate=-316.6,xslant=-0.02] [align=left] {$T_1$};
\draw (198.63,172.38) node [anchor=north west][inner sep=0.75pt]  [rotate=-306.24] [align=left] {$T_2$};

\end{tikzpicture}

\end{center}
then $\big(\tau_{n}^{\rm Toda},\tilde{\tau}_{n}^{\rm Toda}\big)$ will satisfy bilinear equation of mToda hierarchy \cite{Rui2024}, that is
\begin{align*}
&{\rm Res}_{z}\left(z^{n-n'-1}\tau_{n}^{\rm Toda}(x-[z^{-1}],y)
\cdot\tilde{\tau}_{n'}^{\rm Toda}(x'+{[z^{-1}]},y)e^{\xi(x-x',z)}\right.\nonumber\\
&\quad\quad\quad\left.
+z^{n'-n-2}\tau_{n+1}^{\rm Toda}(x,y-{[z^{-1}]})
\cdot\tilde{\tau}_{n'-1}^{\rm Toda}(x',y'+[z^{-1}])e^{\xi(y-y',z)}\right)=\tilde{\tau}_{n}^{\rm Toda}
(x,y)\tau_{n'}^{\rm Toda}(x',y').
\end{align*}
\end{remark}
Conversely from Toda to mToda, there are also two kinds of anti--Miura transformations  in the following proposition.
\begin{proposition}\label{antimiuraToda}\cite{Rui2024}
Given Toda objects: Lax operators $\mathcal{L}_i$, wave operators $\mathcal{S}_i$, wave functions $\Psi_i$, adjoint wave functions $\Psi_i^*$ $(i=1,2)$,   eigenfunction $q(n)$ and adjoint eigenfunction $r(n)$, if set $L_i$, $S_i$, $\Phi_i$ and $\Phi_i^*$ in the form of  Table II
\begin{center}
\begin{tabular}{lll}
\multicolumn{3}{c}{Table II. Anti--Miura transformations: Toda $\rightarrow$ mToda}\\
\hline \hline
&$\mathcal{T}_1=q(n)^{-1}$ &$\mathcal{T}_2=\Delta^{-1}r(n)$   \\
\hline
{\rm Wave operator} &${S}_1=\mathcal{T}_1\cdot\mathcal{S}_1$& ${S}_1=(\iota_{\Lambda^{-1}}\mathcal{T}_2)\cdot\mathcal{S}_1\cdot\Lambda$ \\
&${S}_2=\mathcal{T}_1\cdot\mathcal{S}_2$&
${S}_2=-(\iota_{\Lambda}\mathcal{T}_2)\cdot\mathcal{S}_2$ \\
{\rm Lax operator} &${L}_1=
\mathcal{T}_1\cdot\mathcal{L}_1\cdot\mathcal{T}_1^{-1}$& ${L}_1=(\iota_{\Lambda^{-1}}\mathcal{T}_2)\cdot \mathcal{L}_1\cdot\mathcal{T}_2^{-1}$\\
 &${L}_2=
\mathcal{T}_1\cdot\mathcal{L}_2\cdot\mathcal{T}_1^{-1}$& ${L}_2=(\iota_{\Lambda}\mathcal{T}_2)\cdot \mathcal{L}_2\cdot\mathcal{T}_2^{-1}$
 \\
{\rm Wave function}&$\Phi_1=\mathcal{T}_1(\Psi_1)$& $ {\Phi}_1=z(\iota_{\Lambda^{-1}}\mathcal{T}_2)(\Psi_1) $\\
&$\Phi_2=\mathcal{T}_1(\Psi_2)$& $\Phi_2=-(\iota_{\Lambda}\mathcal{T}_2)({\Psi_2})$ \\
{\rm Adjoint wave function}&$\Phi_1^*=-(\iota_{\Lambda}\Delta^{-1})\mathcal{T}_1^{-1}(\Psi^*_1)$ & $\Phi_1^*=\Lambda^{-1}r^{-1}(n)z^{-1}(\Psi_1^*)$\\
& $\Phi_2^*=(\iota_{\Lambda^{-1}}\Delta^{-1})\mathcal{T}_1^{-1}({\Psi^*_2})$&
$\Phi_2^*=\Lambda^{-1}r^{-1}(n)(\Psi_2^*)$ \\
{\rm Tau function} &$\tau_{0,n}={\rm const}\cdot\tau_{n}^{\rm Toda}$& $\tau_{0,n}={\rm const}\cdot r(n-1)\tau_{n}^{\rm Toda}$\\
 &$\tau_{1,n}={\rm const}\cdot q(n)\tau_{n}^{\rm Toda}$& $\tau_{1,n}={\rm const}\cdot \tau_{n}^{\rm Toda}$\\
\hline
\end{tabular}
\end{center}
then $\Phi_i$ and $\Phi_i^*$ satisfy \eqref{modifiedTodabilinear}, and  $(L_i ,S_i)$ satisfies mToda  evolution equations  \eqref{mTodalaxeq}, \eqref{mTodaS1} and \eqref{mTodaS2}.
\end{proposition}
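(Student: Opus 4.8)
The plan is to analyze the two anti--Miura transformations $\mathcal{T}_1=q(n)^{-1}$ and $\mathcal{T}_2=\Delta^{-1}r(n)$ separately, and for each to check in turn that (i) the transformed $L_i,S_i$ have the mToda shape \eqref{mTodalaxdefine}--\eqref{waveoperatordefine}, (ii) they satisfy the flows \eqref{mTodalaxeq}--\eqref{mTodaS2}, and (iii) the associated $\Phi_i,\Phi_i^*$ obey the bilinear equation \eqref{modifiedTodabilinear}. The organizing observation is that $\mathcal{T}_2$ need not be treated from scratch: since on the Miura side $T_2=\pi_0\circ T_1\circ\pi_1^{-1}$, I expect $\mathcal{T}_2=\pi_1\circ\mathcal{T}_1\circ\pi_0$, with the adjoint eigenfunction $r(n)$ of the given Toda hierarchy playing the role of an ordinary eigenfunction for the $\pi_0$--transformed hierarchy (the self--transformation $\pi_0$, sending $\mathcal{L}_i\mapsto\bar{w}_0\mathcal{L}_{3-i}^*\bar{w}_0^{-1}$, interchanges eigenfunctions and adjoint eigenfunctions). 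As $\pi_0$ and $\pi_1$ already preserve the Toda and mToda structures, proving this factorization reduces the whole $\mathcal{T}_2$ case to the $\mathcal{T}_1$ case.

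I would first dispatch the $\mathcal{T}_1$ case directly. Since $q$ is a scalar function, the conjugation $L_i=q^{-1}\mathcal{L}_iq$ preserves operator orders, so $L_1=v(n)\Lambda+\cdots$ with $v(n)=q(n+1)/q(n)\neq0$ and $L_2=\bar{v}(n)\Lambda^{-1}+\cdots$, while $S_1=q^{-1}\mathcal{S}_1$ has leading coefficient $c_0=q^{-1}\neq0$; this is exactly the required form \eqref{mTodalaxdefine}--\eqref{waveoperatordefine}. For the flows I would combine the conjugation identity $(f^{-1}Af)_{\Delta,\geq1}=f^{-1}A_{\geq0}f-f^{-1}A_{\geq0}(f)$ with the eigenfunction equations $\pa_{x_k}q=(\mathcal{L}_1^k)_{\geq0}(q)$ and $\pa_{y_k}q=(\mathcal{L}_2^k)_{<0}(q)$. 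For instance $(L_1^k)_{\Delta,\leq0}=q^{-1}(\mathcal{L}_1^k)_{<0}q+q^{-1}(\mathcal{L}_1^k)_{\geq0}(q)$, and feeding in the Toda flow $\pa_{x_k}\mathcal{S}_1=-(\mathcal{L}_1^k)_{<0}\mathcal{S}_1$ gives $\pa_{x_k}S_1=\pa_{x_k}(q^{-1}\mathcal{S}_1)=-(L_1^k)_{\Delta,\leq0}S_1$, the first flow of \eqref{mTodaS1}; the remaining three flows in \eqref{mTodaS1}--\eqref{mTodaS2} follow by identical bookkeeping, and \eqref{mTodalaxeq} is then automatic from $L_i=S_i\Lambda^{\pm1}S_i^{-1}$.

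The bilinear equation \eqref{modifiedTodabilinear} is where I expect the genuine difficulty. The plan is to substitute the Table~II relations $\Phi_i=\mathcal{T}_1(\Psi_i)$, $\Phi_1^*=-(\iota_\Lambda\Delta^{-1})\mathcal{T}_1^{-1}(\Psi_1^*)$ and $\Phi_2^*=(\iota_{\Lambda^{-1}}\Delta^{-1})\mathcal{T}_1^{-1}(\Psi_2^*)$ into the left--hand side $F(n,n')$ of \eqref{modifiedTodabilinear}. Inside each residue pairing the scalar factors $q^{-1}$ and $q$ cancel, while the nonlocal $\Delta^{-1}=(\Lambda-1)^{-1}$ acting in the second discrete variable $n'$ is exactly what matters: applying $1-\Lambda$ in $n'$ removes the $\Delta^{-1}$ and collapses $F$ to the \emph{homogeneous} Toda identity \eqref{wavebilinearcomponent}, which is precisely the map used in the forward Miura direction of Section \ref{twomiura}. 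Hence, once \eqref{wavebilinearcomponent} is granted, one gets $(1-\Lambda)F=0$, so $F$ is independent of $n'$, and a leading--order analysis of the wave functions pins this constant to $1$. Correctly handling the two expansions $\iota_\Lambda\Delta^{-1}$, $\iota_{\Lambda^{-1}}\Delta^{-1}$ against the contours at $z=\infty$ and $z=0$, and extracting that boundary constant, is the main obstacle of the proof.

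With $\mathcal{T}_1$ in hand, I would close the $\mathcal{T}_2$ case by verifying the factorization $\mathcal{T}_2=\pi_1\circ\mathcal{T}_1\circ\pi_0$: composing the explicit formulas for $\pi_0$, $\mathcal{T}_1$ and $\pi_1$ should collapse to the multiplication--then--summation operator $\Delta^{-1}r(n)$ acting exactly as in Table~II. Since $\pi_0$ preserves \eqref{wavebilinearcomponent} and \eqref{todalax}--\eqref{todalax3} while $\pi_1$ preserves \eqref{modifiedTodabilinear} and \eqref{mTodalaxeq}--\eqref{mTodaS2}, all three conclusions for $\mathcal{T}_2$ follow from those already established for $\mathcal{T}_1$. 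Finally, the tau--function identities in the last row of Table~II would be read off from the wave--function relations just as in the Miura direction of Proposition \ref{muratras}.
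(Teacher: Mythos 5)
The paper itself does not prove this proposition: it is imported wholesale from \cite{Rui2024} (a paper in preparation), and the only in-paper justification is the remark that follows it, which records the factorization $\mathcal{T}_2=\pi_1^{-1}\circ\mathcal{T}_1\circ\pi_0$ together with the fact that $\pi_0$ turns the adjoint eigenfunction $r(n)$ (up to a factor $\bar{w}_0$) into an eigenfunction of the transformed hierarchy. Your overall strategy --- do $\mathcal{T}_1$ by direct computation, then reduce $\mathcal{T}_2$ to $\mathcal{T}_1$ via the self-transformations --- is exactly the route the paper's remark points at, and your $\mathcal{T}_1$ computations (the conjugation identity $(f^{-1}Af)_{\Delta,\geq1}=f^{-1}A_{\geq0}f-f^{-1}A_{\geq0}(f)$ combined with the eigenfunction equations for $q$) are the same identities the paper uses in the forward Miura direction, so that part is sound. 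One correction: you wrote $\mathcal{T}_2=\pi_1\circ\mathcal{T}_1\circ\pi_0$, whereas the paper has $\mathcal{T}_2=\pi_1^{-1}\circ\mathcal{T}_1\circ\pi_0$ with $\pi_1^{-1}=\mathrm{Ad}\Lambda\circ\pi_1$; the two differ by a unit shift in $n$, so your composite would come out as a shifted version of $\Delta^{-1}r(n)$ and would not reproduce Table~II literally. You would catch this when "verifying the factorization," but as written it is wrong.

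The one genuine gap is in your treatment of the bilinear equation. Applying $1-\Lambda$ in $n'$ and invoking \eqref{wavebilinearcomponent} shows only that the left-hand side $F(n,n';x,y;x',y')$ of \eqref{modifiedTodabilinear} is independent of $n'$; it does not make $F$ a constant, since $F$ could a priori still depend on $n$, $x$, $y$, $x'$, $y'$. Calling it "this constant" and saying a leading-order analysis "pins it to $1$" papers over the real content: at non-coincident times the exponential $e^{\xi(x-x',z)}$ contributes all powers of $z$, so the individual residue pairings do not localize, and one must either run a Taylor-expansion-in-$(x-x',y-y')$ argument of the kind used in Proposition \ref{bilinasr}, or exploit that the two expansions $\iota_{\Lambda}\Delta^{-1}$ and $\iota_{\Lambda^{-1}}\Delta^{-1}$ combine the two halves of the Toda pairing into a single telescoping sum over all shifts of $n'$ whose value collapses to the diagonal term. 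Since this is precisely the step that distinguishes the anti-Miura direction from the (easier) Miura direction of Proposition \ref{muratras}, flagging it as "the main obstacle" without resolving it leaves the proof incomplete at its crux.
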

\begin{remark}
Similar to Miura transformations in Proposition \ref{muratras}, the  transformation $\mathcal{T}_2$ can also be seen from the commuting diagram below
\begin{align*}
({\mathcal{L}}_i,{\mathcal{S}}_i)\xrightarrow {\pi_0}(\widetilde{\mathcal{L}}_i,\widetilde{\mathcal{S}}_i)\xrightarrow {{\mathcal{T}_1}} (\widetilde{L}_i,\widetilde{S}_i)\xrightarrow {\pi_1^{-1}}(L_i,S_i),
\end{align*}
that is ${\mathcal{T}_2=\pi_1^{-1}\circ \mathcal{T}_1\circ \pi_0}$. Here we have used the following facts, if $q(n)$ and $r(n)$ are the Toda  eigenfunction and adjoint eigenfunction corresponding to Lax operator $\mathcal{L}_i$ respectively, then $\bar{w}_0(n)^{-1}q(n)$ and $\bar{w}_0(n)r(n)$ can be regarded as the adjoint eigenfunction and eigenfunction corresponding to Lax operator $\widetilde{\mathcal{L}}_i=\bar{w}_0\mathcal{L}^*_{3-i}\bar{w}^{-1}_0$.
  \end{remark}

\subsection{The relations of tau functions between  large BKP   and  Toda hierarchies}
Since  large BKP hierarchy can be regarded as the sub--hierarchy of mToda hierarchy, here in this subsection, we restrict  Miura links to the case of large BKP hierarchy and show the relations of tau functions with Toda hierarchy.

Starting from large BKP  wave operators $S^+=\sum_{j=0}^{+\infty} b_j(n)\Lambda^{-j}$ and  $S^-=\sum_{j=0}^{+\infty} \Lambda^{-j}\bar{b}_j(n)$,
 \begin{align*}
S^{+}\Lambda^{-1} S^{-}=\iota_{\Lambda^{-1}}(\Lambda-\Lambda^{-1})^{-1},\quad
 \pa_{\mathbf{t}_k}S^{+}=B_{k}S^{+}-S^{+}\Lambda^{k}, \quad \pa_{\mathbf{t}_k}S^{-}=\Lambda^{k}S^{-}+S^{-}B_{k}^*,
\end{align*}
where
$
B_{k}=\Big(S^+\Lambda^{k} (S^+)^{-1}\Big)_{\Delta,\geq1}-\Big((S^-)^*\Lambda^{-k} ((S^-)^{-1})^*\Big)_{\Delta^*,\geq1},
$
if denote
 \begin{align*}
\mathcal{S}_1=T_1S^{+},\quad\mathcal{S}_2=T_1(S^{-})^*,\quad T_1=b_0(n)^{-1},
 \end{align*}
 where $b_0(n)$ is the coefficient of $\Lambda^0$--term in
  $S^+$,
we can obtain by $\pa_{\mathbf{t}_k}=\pa_{{x}_k}-\pa_{{y}_k}$  and Proposition \ref{muratras}
\begin{align*}
\mathcal{S}_1\Lambda^{-1}\mathcal{S}_2^*
=T_1\Big(\iota_{\Lambda^{-1}}(\Lambda-\Lambda^{-1})\Big)^{-1}T_1^*,\quad
\pa_{\mathbf{t}_k}\mathcal{S}_1
=-\Big((\mathcal{L}_1^k)_{<0}+(\mathcal{L}_2^k)_{<0}\Big)\mathcal{S}_1,\quad
\pa_{\mathbf{t}_k}\mathcal{S}_2
=\Big((\mathcal{L}_1^k)_{\geq0}+(\mathcal{L}_2^k)_{\geq0}\Big)\mathcal{S}_2.
\end{align*}
If set $\bar{w}_0(n)=e^{\varphi(n)}$ ( $\Lambda^0$--term in $\mathcal{S}_2$, see \eqref{Todawaves} ), we can obtain the {\bf balance condition I} \cite{Krichever2023}
\begin{align}
e^{\varphi(n)}=b_0(n)^{-1}b_0(n+1)^{-1},\label{balancecond}
\end{align}
which can be  derived  by comparing coefficients of $\Lambda^0$--terms in $\mathcal{S}_2=T_1(S^{-})^*$ and $\Lambda^{-1}$--terms in $S^{+}\Lambda^{-1} S^{-}=\iota_{\Lambda^{-1}}(\Lambda-\Lambda^{-1})^{-1}$.

 While for  Lax equation of large BKP hierarchy
 \begin{align*}
 \pa_{\mathbf{t}_k}L^{\pm}=[B_{k},L^{\pm}],\quad
 (L^+)^*(\Lambda-\Lambda^{-1})=(\Lambda-\Lambda^{-1})L^-,
\end{align*}
where $L^{\pm}=\sum_{j=-1}^\infty u^{\pm}_j\Lambda^{\mp j}$ and $B_k=(L^+)^k_{\Delta,\geq1}-(L^-)^k_{\Delta^*,\geq1}$,
if set
\begin{align*}
\mathcal{L}_1=T_1L^{+}T_1^{-1},\quad \mathcal{L}_2=T_1L^{-}T_1^{-1},\quad T_1=b_0(n)^{-1},
\end{align*}
then by Proposition \ref{muratras} and balance condition I \eqref{balancecond}, we can get
\begin{align}
&\pa_{\mathbf{t}_k}\mathcal{L}_i
=[\mathcal{B}_k,\mathcal{L}_i],\quad \mathcal{B}_k=(\mathcal{L}_1^k)_{\geq0}-(\mathcal{L}_2^k)_{<0},\label{muiralax1}\\
&{ \mathcal{L}_1}^*(\Gamma_1-\Gamma_1^*)
=(\Gamma_1-\Gamma_1^*){ \mathcal{L}_2},\quad\Gamma_1=e^{-\varphi(n)}\Lambda,\label{muiralax2}
\end{align}
which can be  regarded as the sub--Toda hierarchy corresponding to  large BKP hierarchy.

Note that balance condition I \eqref{balancecond} is very important (see \cite{Krichever2023,Prokofev2023} ). Next we will show that the balance condition I \eqref{balancecond} is invariant under the flow $\pa_{\mathbf{t}_k}$. To do this,  we need the lemma below.
\begin{lemma}\label{operatorequvalent}
Given two  operators $A=\sum_{j=-k}^{+\infty}a_j(n)\Lambda^{-j}$ and $B=\sum_{j=-k}^{+\infty}b_j(n)\Lambda^{j}$ $(k\geq1)$ satisfying
\begin{align*}
A^{*}\cdot(\Lambda-\Lambda^{-1})=(\Lambda-\Lambda^{-1})\cdot B,
\end{align*}
we have
\begin{align}
A_{[0]}+B_{[0]}=
(\Lambda+1)\left(A_{\geq0}-B_{\leq-1}\right){|}_{\Lambda=1}.\label{operatorarelation}
\end{align}
\end{lemma}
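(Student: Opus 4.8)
The plan is to reduce the identity to a one-line computation in the coefficients of the single operator $\tilde{A}$ supplied by Lemma~\ref{Lemma:OperatorB}. Since the hypotheses here are identical to those of that lemma, I would factor $A=\tilde{A}(\Lambda-\Lambda^{-1})$ and $B=\tilde{B}(\Lambda^{-1}-\Lambda)$ with $\tilde{B}=\tilde{A}^{*}$, and write $\tilde{A}=\sum_{i}\tilde{a}_i\Lambda^i$ (so that the adjoint relation reads $\tilde{b}_l(n)=\tilde{a}_{-l}(n+l)$). Throughout I would use that for an operator $Q=\sum_m q_m\Lambda^m$ the evaluation $Q|_{\Lambda=1}$ is nothing but the action on the constant function, $Q(1)=\sum_m q_m$; hence the right-hand side is $(\Lambda+1)\big[(A_{\geq0}-B_{\leq-1})(1)\big]=(A_{\geq0}-B_{\leq-1})(1)\big|_{n+1}+(A_{\geq0}-B_{\leq-1})(1)\big|_{n}$. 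Note $A_{\geq0}-B_{\leq-1}$ is a genuine Laurent polynomial in $\Lambda$, so this evaluation is unambiguous even though $A(1)$ and $B(1)$ are not individually defined.

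Next I would evaluate the two truncations. Expanding the factorizations gives the $\Lambda^m$-coefficients $a'_m=\tilde{a}_{m-1}-\tilde{a}_{m+1}$ of $A$ and $b'_m=\tilde{b}_{m+1}-\tilde{b}_{m-1}$ of $B$. Summing $a'_m$ over $m\geq0$ telescopes, and since $\tilde{A}$ has top degree $k-1$ (inherited from $A$, whose top degree is $k$) the boundary terms vanish, leaving $A_{\geq0}(1)=\tilde{a}_{-1}+\tilde{a}_{0}$; symmetrically, using that $\tilde{B}$ has bottom degree $-k+1$, summing $b'_m$ over $m\leq-1$ gives $B_{\leq-1}(1)=\tilde{b}_{-1}+\tilde{b}_0$. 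Feeding in $\tilde{b}_l(n)=\tilde{a}_{-l}(n+l)$ turns the latter into $\tilde{a}_0(n)+\tilde{a}_1(n-1)$, whence $(A_{\geq0}-B_{\leq-1})(1)=\tilde{a}_{-1}(n)-\tilde{a}_1(n-1)$.

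Finally I would match the two sides. From $a'_0$, $b'_0$ and the adjoint relation one reads off $A_{[0]}=\tilde{a}_{-1}(n)-\tilde{a}_1(n)$ and $B_{[0]}=\tilde{a}_{-1}(n+1)-\tilde{a}_1(n-1)$, so the left-hand side is $\tilde{a}_{-1}(n)+\tilde{a}_{-1}(n+1)-\tilde{a}_1(n)-\tilde{a}_1(n-1)$; applying $(\Lambda+1)$ to $\tilde{a}_{-1}(n)-\tilde{a}_1(n-1)$ gives $\tilde{a}_{-1}(n+1)-\tilde{a}_1(n)+\tilde{a}_{-1}(n)-\tilde{a}_1(n-1)$, the very same four terms, which closes the proof. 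I expect the main hazard to be purely a bookkeeping one: tracking the argument shifts introduced by $\tilde{A}^{*}=\tilde{B}$ (the $n\mapsto n+l$ in $\tilde{b}_l(n)=\tilde{a}_{-l}(n+l)$) together with the extra shift from the outer $(\Lambda+1)$, since a single sign or shift slip would wreck the four-term cancellation. The only other point deserving an explicit word is the vanishing of the telescoping boundary terms, which rests solely on the degree bounds of $A$ and $B$ and is otherwise routine.
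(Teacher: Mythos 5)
Your proposal is correct and follows essentially the same route as the paper: both factor $A=\tilde{A}(\Lambda-\Lambda^{-1})$, $B=\tilde{B}(\Lambda^{-1}-\Lambda)$ with $\tilde{B}=\tilde{A}^*$, reduce both sides to the coefficients $\tilde{a}_{-1}$ and $\tilde{a}_{1}$, and verify the key identity $(A_{\geq0}-B_{\leq-1})|_{\Lambda=1}=\tilde{a}_{-1}(n)-\tilde{a}_{1}(n-1)$. Your telescoping presentation of the sums and the paper's explicit re-indexing of $A_{\geq0}$ and $B_{\leq-1}$ are the same computation in different clothing, and all shifts and signs check out.
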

\begin{proof}
Firstly if set $A=\tilde{A}(\Lambda-\Lambda^{-1})$ and $B=\tilde{B}(\Lambda^{-1}-\Lambda)$,
then $\tilde{A}^*=\tilde{B}$. So if denote $\tilde{A}=\sum_{i}\tilde{A}_i(n)\Lambda^i$, then
\begin{align*}
A_{[0]}=\tilde{A}_{-1}(n)-\tilde{A}_1(n),\quad
B_{[0]}=\tilde{A}_{-1}(n+1)-\tilde{A}_1(n-1).
\end{align*}
While note that
\begin{align*}
A_{\geq0}=\sum_{i\geq-1}\tilde{A}_i(n)
\Lambda^{i+1}-\sum_{i\geq1}\tilde{A}_i(n)
\Lambda^{i-1},\quad
B_{\leq-1}
=\sum_{i\geq0}\tilde{A}_i(n-i)
\Lambda^{-i-1}-\sum_{i\geq2}\tilde{A}_i(n-i)
\Lambda^{-i+1},
\end{align*}
therefore
$\left(A_{\geq0}-B_{\leq-1}\right){|}_{\Lambda=1}
=\tilde{A}_{-1}(n)-\tilde{A}_1(n-1).$
So finally we can get \eqref{operatorarelation}.
\end{proof}
By \eqref{mTodaeig}, \eqref{todafirst} and Lemma \ref{operatorequvalent}, we can get the following proposition.
\begin{proposition}\label{balancecondition1}
The balance condition I \eqref{balancecond} is invariant under the flows $\pa_{\mathbf{t}_k}$,
$$\pa_{\mathbf{t}_k}\Big(e^{\varphi(n)}-b_0(n)^{-1}b_0(n+1)^{-1}\Big)=0.$$
\end{proposition}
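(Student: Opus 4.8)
The plan is to show that the two quantities appearing in balance condition I have the same logarithmic $\pa_{\mathbf{t}_k}$-derivative; combined with the already-established identity \eqref{balancecond}, namely $e^{\varphi(n)}=b_0(n)^{-1}b_0(n+1)^{-1}$, this at once gives
\begin{align*}
\pa_{\mathbf{t}_k}e^{\varphi(n)}=e^{\varphi(n)}\pa_{\mathbf{t}_k}\varphi(n)=b_0(n)^{-1}b_0(n+1)^{-1}\pa_{\mathbf{t}_k}\log\big(b_0(n)^{-1}b_0(n+1)^{-1}\big)=\pa_{\mathbf{t}_k}\big(b_0(n)^{-1}b_0(n+1)^{-1}\big),
\end{align*}
hence the claimed vanishing. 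Throughout I set $q(n)=b_0(n)^{-1}=c_0(n)^{-1}=T_1(n)$, which is the Toda eigenfunction governed by \eqref{mTodaeig}, and I use repeatedly that conjugation by a multiplication operator preserves the $\Lambda$-degree. Since $\mathcal{L}_1=T_1L^+T_1^{-1}=qL^+q^{-1}$ and $\mathcal{L}_2=qL^-q^{-1}$, this gives $(\mathcal{L}_1^k)_{[0]}=((L^+)^k)_{[0]}$, $(\mathcal{L}_2^k)_{[0]}=((L^-)^k)_{[0]}$, and more generally $(\mathcal{L}_1^k)_{\geq0}=q\,((L^+)^k)_{\geq0}\,q^{-1}$, $(\mathcal{L}_2^k)_{<0}=q\,((L^-)^k)_{<0}\,q^{-1}$.

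First I would compute the left-hand derivative. As $\varphi(n)=\log\bar w_0(n)$ with $\bar w_0$ the $\Lambda^0$-term of $\mathcal{S}_2$, equation \eqref{todafirst} gives $\pa_{x_k}\varphi=(\mathcal{L}_1^k)_{[0]}$ and $\pa_{y_k}\varphi=-(\mathcal{L}_2^k)_{[0]}$, so with $\pa_{\mathbf{t}_k}=\pa_{x_k}-\pa_{y_k}$,
\begin{align*}
\pa_{\mathbf{t}_k}\varphi(n)=(\mathcal{L}_1^k)_{[0]}+(\mathcal{L}_2^k)_{[0]}=((L^+)^k)_{[0]}+((L^-)^k)_{[0]}.
\end{align*}

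Next I would compute the right-hand derivative. From \eqref{mTodaeig}, $\pa_{x_k}q=(\mathcal{L}_1^k)_{\geq0}(q)$ and $\pa_{y_k}q=(\mathcal{L}_2^k)_{<0}(q)$. Using the conjugation identities and $\Lambda^i(1)=1$ one gets $(\mathcal{L}_1^k)_{\geq0}(q)=q\,((L^+)^k)_{\geq0}\,q^{-1}(q)=q\cdot((L^+)^k)_{\geq0}(1)=q\cdot((L^+)^k)_{\geq0}|_{\Lambda=1}$, and likewise $(\mathcal{L}_2^k)_{<0}(q)=q\cdot((L^-)^k)_{\leq-1}|_{\Lambda=1}$, whence
\begin{align*}
q(n)^{-1}\pa_{\mathbf{t}_k}q(n)=\big(((L^+)^k)_{\geq0}-((L^-)^k)_{\leq-1}\big)\big|_{\Lambda=1}=:h(n).
\end{align*}
Then $\pa_{\mathbf{t}_k}\log q(n)=h(n)$ and $\pa_{\mathbf{t}_k}\log q(n+1)=\Lambda h(n)$, so
\begin{align*}
\pa_{\mathbf{t}_k}\log\big(b_0(n)^{-1}b_0(n+1)^{-1}\big)=(\Lambda+1)h(n)=(\Lambda+1)\big(((L^+)^k)_{\geq0}-((L^-)^k)_{\leq-1}\big)\big|_{\Lambda=1}.
\end{align*}

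Finally I would invoke Lemma \ref{operatorequvalent} with $A=(L^+)^k$ and $B=(L^-)^k$. These satisfy the hypothesis $A^*(\Lambda-\Lambda^{-1})=(\Lambda-\Lambda^{-1})B$ because the large BKP constraint $(L^+)^*(\Lambda-\Lambda^{-1})=(\Lambda-\Lambda^{-1})L^-$ is stable under taking $k$-th powers, exactly as in the opening of the proof of Theorem \ref{Propo:Bkdefine}, and the degree bounds $A=\sum_{j\geq-k}a_j\Lambda^{-j}$, $B=\sum_{j\geq-k}b_j\Lambda^{j}$ match. The conclusion \eqref{operatorarelation} then reads $((L^+)^k)_{[0]}+((L^-)^k)_{[0]}=(\Lambda+1)\big(((L^+)^k)_{\geq0}-((L^-)^k)_{\leq-1}\big)|_{\Lambda=1}$, which is precisely $\pa_{\mathbf{t}_k}\varphi(n)=\pa_{\mathbf{t}_k}\log(b_0(n)^{-1}b_0(n+1)^{-1})$, completing the argument. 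I expect the only genuine obstacle to be the bookkeeping in the third paragraph: correctly transporting the $\geq0$ and $<0$ truncations through the conjugation by $q$, and tracking the shift $n\mapsto n+1$ that generates the operator factor $(\Lambda+1)$ so that it aligns exactly with the right-hand side of \eqref{operatorarelation}. Everything else is routine once these identifications are in place.
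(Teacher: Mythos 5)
Your argument is correct and is exactly the paper's intended proof: the paper dispatches this proposition with the one-line remark that it follows from \eqref{mTodaeig}, \eqref{todafirst} and Lemma \ref{operatorequvalent}, and your three paragraphs are precisely the expansion of that remark — \eqref{todafirst} gives $\pa_{\mathbf{t}_k}\varphi=(\mathcal{L}_1^k)_{[0]}+(\mathcal{L}_2^k)_{[0]}$, \eqref{mTodaeig} gives the logarithmic derivative of $b_0(n)^{-1}b_0(n+1)^{-1}$ as $(\Lambda+1)\big(((L^+)^k)_{\geq0}-((L^-)^k)_{\leq-1}\big)|_{\Lambda=1}$, and Lemma \ref{operatorequvalent} applied to $A=(L^+)^k$, $B=(L^-)^k$ equates the two. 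The bookkeeping you flag (conjugation by the multiplication operator $q$ preserving the $\Lambda$-grading, and the shift producing the factor $\Lambda+1$) is handled correctly.
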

\begin{remark}\label{miura1remark}
In \cite{Krichever2023}, large BKP hierarchy is obtained  from Toda hierarchy.  In fact given Toda Lax operators
 $$\mathcal{L}_1=\Lambda +\sum_{i=0}^{\infty}u_i(n)\Lambda^{-i},\quad{\mathcal{L}_2}= \overline{u}_{-1}(n)\Lambda^{-1} +\sum_{i=0}^{\infty}  \overline{u}_i(n)\Lambda^{i},$$
 with $\overline{u}_{-1}(n)=e^{\varphi(n)-\varphi(n-1)}$, if denote
  \begin{align}
 L^+=q^{-1}\mathcal{L}_1q=\sum_{j=-1}^\infty u^+_j\Lambda^{-j},\quad
L^-=q^{-1}\mathcal{L}_2q=\sum_{j=-1}^\infty u^-_j\Lambda^{j},\label{largeBKPlaxop}
  \end{align}
  and impose large BKP constraint  $
  (L^{+})^*{\left(\Lambda-\Lambda^{-1}\right)}=(\Lambda-\Lambda^{-1}) L^{-},$
then  we can get  $\pa_{\mathbf{t}_k}L^{\pm}=[B_{k},L^{\pm}],$
   where $B_k$ is given by
   \begin{align*}
B_k&=
q^{-1}( \mathcal{L}_1^k)_{\geq0}q-q^{-1}( \mathcal{L}_2^k)_{<0}q-\pa_{\mathbf{t}_k}{\rm log}q
=({L}^+)^k _{\geq0}-({L}^-)^k _{<0}-\pa_{\mathbf{t}_k}{\rm log}q.
\end{align*}
Note that the relation of $B_k$ with $L^{\pm}$ is not clear. In fact from large BKP constraint, we can know $u^+_{-1}=u^-_{-1}$ implying  $$e^{\varphi(n)}=q(n)q(n+1),$$ that is the balance condition I. Further by $\pa_{\mathbf{t}_k}\varphi(n)=(\mathcal{L}_1^k)_{[0]}+(\mathcal{L}_2^k)_{[0]}
=(L^+)^k_{[0]}+(L^-)^k_{[0]}$ {\rm(see \eqref{todafirst})} and  Lemma \ref{operatorequvalent}, we can find  \begin{align*}
\pa_{\mathbf{t}_k}{\rm log}q=q^{-1}\left((\mathcal{L}_1^k) _{\geq0}-(\mathcal{L}_2^k) _{<0}\right)(q)
=\left(({L}^+)^k _{\geq0}-({L}^-)^k _{<0}\right)\big{|}_{\Lambda=1}.
\end{align*}
Therefore $\pa_{\mathbf{t}_k}q=\left((\mathcal{L}_1^k) _{\geq0}-(\mathcal{L}_2^k) _{<0}\right)(q)$ and $B_{k}=(L^{+})^k_{\geq 0}-(L^{-})^k_{\leq -1}-\left((L^{+})^k_{\geq 0}-(L^{-})^k_{\leq -1}\right)\big{|}_{\Lambda=1}$, which is just \eqref{b1kandlax}. Further by Remark \ref{BkdefineDelta}, we have $B_k=(L^+)^k_{\Delta,\geq1}-(L^-)^k_{\Delta^*,\geq1}$.
So this special transformation \eqref{largeBKPlaxop} is just    anti--Miura transformation $\mathcal{T}_1=q^{-1}$.
\end{remark}
 By \eqref{largeBKPandmToda}, \eqref{largeBKPandmToda2} and \eqref{Todawavefun}--\eqref{wavebilinearcomponent}, we can get
 \begin{align*}
&\Psi_1(n,\mathbf{t},-\mathbf{t},z)=b_0(n)^{-1}\Psi^+(n,\mathbf{t},z),
\quad\Psi_2(n,\mathbf{t},-\mathbf{t},z)=b_0(n)^{-1}z^{-1}\Psi^-(n,\mathbf{t},z^{-1}),
\\
&\Psi_1(n,\mathbf{t},-\mathbf{t},z)^*=b_0(n)
\Big(\Psi^-(n-1,\mathbf{t},z)-\Psi^-(n+1,\mathbf{t},z)\Big),\\
&\Psi_2(n,\mathbf{t},-\mathbf{t},z)^*
=b_0(n)z^{-1}
\Big(\Psi^+(n+1,\mathbf{t},z^{-1})-\Psi^+(n-1,\mathbf{t},z^{-1})\Big),
\end{align*}
satisfying
\begin{align}
    \oint_{z=\infty}\frac{dz}{2\pi iz}\Psi_1(n;\mathbf{t},-\mathbf{t};z)\Psi_1^*(n';\mathbf{t}',-\mathbf{t}';z)
    =\oint_{z=0}\frac{dz}{2\pi iz}\Psi_2(n;\mathbf{t},-\mathbf{t};z)\Psi_2^*(n';\mathbf{t}',-\mathbf{t}';z),\label{TodaMiura}
\end{align}
which is just the Toda bilinear equation.

Similarly by Proposition \ref{muratras}, there also exists another  Miura transformation ${T}_2=b_0(n+1)^{-1}\Delta$ such that
\begin{align}
   \mathcal{L}_1=T_2\cdot L^+\cdot \iota_{\Lambda^{-1}}T_2^{-1},\quad
\mathcal{L}_2=T_2\cdot L^-\cdot \iota_{\Lambda}T_2^{-1},\quad
\mathcal{S}_1={T}_2\cdot {S}^+\Lambda^{-1},\quad
\mathcal{S}_2=-{T}_2\cdot ({S}^-)^*,\label{laegeBKPmiura2}
\end{align}
satisfying
\begin{align}
\mathcal{S}_1\mathcal{S}_2^*
=-T_2\cdot\Big(\iota_{\Lambda^{-1}}(\Lambda-\Lambda^{-1})^{-1}\Big)\cdot T_2^*,\quad
T_2\cdot\iota_{\Lambda}(\Lambda-\Lambda^{-1})^{-1}\cdot T_2^*\cdot{ \mathcal{L}_1}^*
= { \mathcal{L}_2}\cdot T_2\cdot\iota_{\Lambda}(\Lambda-\Lambda^{-1})^{-1}\cdot T_2^*.\label{Todalaxmiura2}
\end{align}
According to  \eqref{laegeBKPmiura2} and  comparing  $\Lambda^{-1}$--terms in $S^{+}\Lambda^{-1} S^{-}=\iota_{\Lambda^{-1}}(\Lambda-\Lambda^{-1})^{-1}$, we can obtain the {\bf balance condition II}
\begin{align}
e^{\varphi(n)}=b_0(n+1)^{-2}.\label{condition2}
\end{align}

\begin{lemma}\label{operatorequvalentDelta}
Given two  operators $A=\sum_{j=-k}^{+\infty}a_j(n)\Lambda^{-j}$ and $B=\sum_{j=-k}^{+\infty}b_j(n)\Lambda^{j}$ $(k\geq1)$ satisfying
\begin{align*}
A^{*}\cdot(\Lambda-\Lambda^{-1})=(\Lambda-\Lambda^{-1})\cdot B,
\end{align*}
then
\begin{align}
\left(\Delta\cdot A \cdot\iota_{\Lambda^{-1}}\Delta^{-1}\right)_{[0]}+\left(\Delta\cdot B \cdot\iota_{\Lambda}\Delta^{-1}\right)_{[0]}=
2\left(((\Delta\cdot A \cdot\iota_{\Lambda^{-1}}\Delta^{-1})_{\geq0})^*-((\Delta \cdot B \cdot \iota_{\Lambda}\Delta^{-1})_{<0})^*\right)\big{|}_{\Lambda=1}.\label{operatorarelation2}
\end{align}
\end{lemma}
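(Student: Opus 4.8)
The plan is to follow the proof of Lemma \ref{operatorequvalent}, factoring everything through $\Lambda-\Lambda^{-1}$. First I would set $A=\tilde A(\Lambda-\Lambda^{-1})$ and $B=\tilde B(\Lambda^{-1}-\Lambda)$; exactly as in Lemma \ref{Lemma:OperatorB}, the hypothesis $A^{*}(\Lambda-\Lambda^{-1})=(\Lambda-\Lambda^{-1})B$ forces $\tilde A^{*}=\tilde B$. Writing $\tilde A=\sum_i\tilde A_i(n)\Lambda^i$, this yields the coefficient dictionary $\tilde B_j(n)=\tilde A_{-j}(n+j)$, which is the only point at which the constraint is used.

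The decisive simplification is the constant-coefficient identity
\[
(\Lambda-\Lambda^{-1})\,\iota_{\Lambda^{\pm1}}\Delta^{-1}=\Lambda^{-1}(\Lambda+1)=1+\Lambda^{-1},
\]
which holds for either expansion direction because the product is a genuine Laurent polynomial. Since $\Lambda-\Lambda^{-1}$ and $\iota_{\Lambda^{\pm1}}\Delta^{-1}$ have constant coefficients, I may collect them to the right of $\tilde A$ and $\tilde B$ and obtain the clean forms
\[
\Delta\cdot A\cdot\iota_{\Lambda^{-1}}\Delta^{-1}=\Delta\,\tilde A\,(1+\Lambda^{-1}),\qquad \Delta\cdot B\cdot\iota_{\Lambda}\Delta^{-1}=-\Delta\,\tilde B\,(1+\Lambda^{-1}).
\]
Both conjugated operators are now $(\Lambda-1)$ applied to a smoothing of $\tilde A$ (resp.\ $\tilde B$), after which only elementary coefficient extraction remains.

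Next I would extract the two sides. Setting $\tilde A(1+\Lambda^{-1})=\sum_i u_i(n)\Lambda^i$ with $u_i=\tilde A_i+\tilde A_{i+1}$, and likewise $\tilde B(1+\Lambda^{-1})=\sum_i w_i(n)\Lambda^i$, a short computation of the $\Lambda^0$-coefficients gives, via the dictionary $\tilde B_j(n)=\tilde A_{-j}(n+j)$, that their sum equals $2\big(\tilde A_{-1}(n+1)-\tilde A_1(n)\big)$, the $\tilde A_0$-contributions cancelling; thus the left-hand side of \eqref{operatorarelation2} equals $2\big(\tilde A_{-1}(n+1)-\tilde A_1(n)\big)$. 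For the right-hand side, $((\Delta A\,\iota_{\Lambda^{-1}}\Delta^{-1})_{\geq0})^{*}|_{\Lambda=1}$ and $((\Delta B\,\iota_{\Lambda}\Delta^{-1})_{<0})^{*}|_{\Lambda=1}$ become telescoping sums that collapse (because $\tilde A$ has finite top degree, so only a boundary term survives) to the single values $u_{-1}(n+1)=\tilde A_{-1}(n+1)+\tilde A_0(n+1)$ and $w_{-1}(n+1)=\tilde A_1(n)+\tilde A_0(n+1)$, whose difference is again $\tilde A_{-1}(n+1)-\tilde A_1(n)$. Multiplying by $2$ matches the left-hand side and proves \eqref{operatorarelation2}.

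The main obstacle I expect is the bookkeeping of the adjoint together with the substitution $\Lambda=1$: since $(\sum_j f_j(n)\Lambda^j)^{*}|_{\Lambda=1}=\sum_j f_j(n-j)$ rather than $\sum_j f_j(n)$, the argument shift must be tracked through the telescoping, and it is exactly this shift that places both boundary terms at $n+1$ and produces the overall factor $2$ (in contrast to the operator $\Lambda+1$ that appears in Lemma \ref{operatorequvalent}). I would also be careful to record that the identity $(\Lambda-\Lambda^{-1})\iota_{\Lambda^{\pm1}}\Delta^{-1}=1+\Lambda^{-1}$ is expansion-independent, since this is what allows the two differently expanded factors $\iota_{\Lambda^{-1}}\Delta^{-1}$ and $\iota_{\Lambda}\Delta^{-1}$ to be handled uniformly.
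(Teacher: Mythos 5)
Your proposal is correct and follows essentially the same route as the paper: both factor $A=\tilde A(\Lambda-\Lambda^{-1})$, $B=\tilde B(\Lambda^{-1}-\Lambda)$ with $\tilde A^*=\tilde B$ and then extract the relevant coefficients, arriving at $2\big(\tilde A_{-1}(n+1)-\tilde A_{1}(n)\big)$ on both sides. Your intermediate identity $(\Lambda-\Lambda^{-1})\iota_{\Lambda^{\pm1}}\Delta^{-1}=1+\Lambda^{-1}$ and the explicit telescoping merely spell out the computation that the paper states without detail.
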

\begin{proof}
Firstly similar to Lemma \ref{operatorequvalent}, if set $A=\tilde{A}(\Lambda-\Lambda^{-1})$ and $B=\tilde{B}(\Lambda^{-1}-\Lambda)$, then  $\tilde{A}^*=\tilde{B}$.  So if denote $\tilde{A}=\sum_{i}\tilde{A}_i(n)\Lambda^i$,  we can obtain
\begin{align*}
&\left(\Delta\cdot A \cdot\iota_{\Lambda^{-1}}\Delta^{-1}\right)_{[0]}+
\left(\Delta \cdot B \cdot\iota_{\Lambda}\Delta^{-1}\right)_{[0]}
=2\left(\tilde{A}_{-1}(n+1)
-\tilde{A}_{1}(n)\right).
\end{align*}
While
\begin{align*}
&\left((\Delta \cdot A \cdot \iota_{\Lambda^{-1}}\Delta^{-1})_{\geq0}\right)^*\big{|}_{\Lambda=1}
=\tilde{A}_{-1}(n+1)+\tilde{A}_0(n+1),\quad
&\left((\Delta\cdot B \cdot\iota_{\Lambda^{-1}}\Delta^{-1})_{<0}\right)^*\big{|}_{\Lambda=1}
=\tilde{A}_{1}(n)+\tilde{A}_0(n+1).
\end{align*}
Therefore we can get \eqref{operatorarelation2}.
\end{proof}
Similarly  by \eqref{todafirst}, \eqref{adjointeg} and Lemma \ref{operatorequvalentDelta}, we can get the following proposition.
\begin{proposition}\label{balancecondition2}
The balance condition II \eqref{condition2} is invariant under the flows $\pa_{\mathbf{t}_k}$,
$$\pa_{\mathbf{t}_k}\Big(e^{\varphi(n)}-b_0(n+1)^{-2}\Big)=0.$$
\end{proposition}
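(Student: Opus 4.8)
The plan is to reduce the statement to the single pointwise identity
\[
\pa_{\mathbf{t}_k}\varphi(n)=\pa_{\mathbf{t}_k}\log b_0(n+1)^{-2},
\]
which I expect to hold identically as a consequence of the flow equations and the large BKP constraint \eqref{laxcons}. Granting it, the proposition is immediate: using $\pa_{\mathbf{t}_k}e^{\varphi(n)}=e^{\varphi(n)}\pa_{\mathbf{t}_k}\varphi(n)$ and $\pa_{\mathbf{t}_k}b_0(n+1)^{-2}=b_0(n+1)^{-2}\,\pa_{\mathbf{t}_k}\log b_0(n+1)^{-2}$, the identity yields $\pa_{\mathbf{t}_k}\big(e^{\varphi(n)}-b_0(n+1)^{-2}\big)=\big(e^{\varphi(n)}-b_0(n+1)^{-2}\big)\pa_{\mathbf{t}_k}\varphi(n)$, which vanishes once the balance condition II \eqref{condition2} is imposed.

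To obtain the two sides of the pointwise identity I would first record the flow laws attached to the second Miura map $T_2=b_0(n+1)^{-1}\Delta$. Here $e^{\varphi(n)}$ is the $\Lambda^0$--term $\bar w_0(n)$ of the Toda wave operator $\mathcal{S}_2$, while by \eqref{adjointeg} the function $b_0(n+1)$ is the Toda adjoint eigenfunction for the pair $(\mathcal{L}_1,\mathcal{L}_2)$ of \eqref{laegeBKPmiura2}. Setting $\pa_{\mathbf{t}_k}=\pa_{x_k}-\pa_{y_k}$ in \eqref{todafirst} and \eqref{adjointeg} then yields
\[
\pa_{\mathbf{t}_k}\varphi(n)=(\mathcal{L}_1^k)_{[0]}+(\mathcal{L}_2^k)_{[0]},
\]
\[
\pa_{\mathbf{t}_k}b_0(n+1)=-\big((\mathcal{L}_1^k)_{\geq0}\big)^*\big(b_0(n+1)\big)+\big((\mathcal{L}_2^k)_{<0}\big)^*\big(b_0(n+1)\big).
\]

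The core step is to rewrite both lines through $A=(L^+)^k$ and $B=(L^-)^k$, which satisfy $A^*(\Lambda-\Lambda^{-1})=(\Lambda-\Lambda^{-1})B$ by iterating \eqref{laxcons}. Writing $\tilde{b}=b_0(n+1)$ and using $\iota_{\Lambda^{\mp1}}\Delta^{-1}\cdot\Delta=1$, the transformation \eqref{laegeBKPmiura2} telescopes to $\mathcal{L}_1^k=\tilde{b}^{-1}\big(\Delta A\,\iota_{\Lambda^{-1}}\Delta^{-1}\big)\tilde{b}$ and $\mathcal{L}_2^k=\tilde{b}^{-1}\big(\Delta B\,\iota_{\Lambda}\Delta^{-1}\big)\tilde{b}$. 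Since conjugation by the multiplication operator $\tilde{b}$ preserves the grading in powers of $\Lambda$, it fixes the $\Lambda^0$--term and commutes with the $(\cdot)_{\geq0}$, $(\cdot)_{<0}$ truncations; hence $(\mathcal{L}_1^k)_{[0]}+(\mathcal{L}_2^k)_{[0]}=(\Delta A\,\iota_{\Lambda^{-1}}\Delta^{-1})_{[0]}+(\Delta B\,\iota_{\Lambda}\Delta^{-1})_{[0]}$, and moreover $\big((\mathcal{L}_1^k)_{\geq0}\big)^*(\tilde{b})=\tilde{b}\cdot\big((\Delta A\,\iota_{\Lambda^{-1}}\Delta^{-1})_{\geq0}\big)^*|_{\Lambda=1}$, because applying an operator's adjoint to the constant $1$ evaluates it at $\Lambda=1$; the $\mathcal{L}_2$--term transforms the same way. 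Dividing the second flow law by $\tilde{b}$ therefore gives $-2\pa_{\mathbf{t}_k}\log b_0(n+1)=2\big(((\Delta A\,\iota_{\Lambda^{-1}}\Delta^{-1})_{\geq0})^*-((\Delta B\,\iota_{\Lambda}\Delta^{-1})_{<0})^*\big)|_{\Lambda=1}$, which is exactly the right--hand side of Lemma \ref{operatorequvalentDelta}; that lemma identifies it with $(\Delta A\,\iota_{\Lambda^{-1}}\Delta^{-1})_{[0]}+(\Delta B\,\iota_{\Lambda}\Delta^{-1})_{[0]}=\pa_{\mathbf{t}_k}\varphi(n)$, establishing the pointwise identity.

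The main obstacle is the bookkeeping in this last paragraph: one must be confident that conjugation by the scalar part of $T_2$ commutes with truncation and adjunction, and that the extra factor $\Delta$ in $T_2$ is precisely what converts the $\Lambda^0$--terms into the ``value at $\Lambda=1$ of a truncated adjoint'' shape demanded by Lemma \ref{operatorequvalentDelta}. This is the structural difference from Proposition \ref{balancecondition1}, where $T_1$ carries no $\Delta$ and Lemma \ref{operatorequvalent} instead produces the $(\Lambda+1)(\cdots)|_{\Lambda=1}$ form matching $b_0(n)^{-1}b_0(n+1)^{-1}$; once these operator identities are verified the computation closes with no further input.
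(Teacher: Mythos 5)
Your proposal is correct and follows exactly the route the paper intends: the paper's proof of Proposition \ref{balancecondition2} is the one-line remark ``by \eqref{todafirst}, \eqref{adjointeg} and Lemma \ref{operatorequvalentDelta}'', and you have supplied precisely the missing bookkeeping — reducing to $\pa_{\mathbf{t}_k}\varphi(n)=-2\pa_{\mathbf{t}_k}\log b_0(n+1)$, conjugating by the scalar part of $T_2$ so that the $\Lambda^0$--terms and truncated adjoints evaluated at $\Lambda=1$ match the two sides of Lemma \ref{operatorequvalentDelta}. The sign and factor-of-two checks all go through, so no gap remains.
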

\begin{remark}
Given Toda Lax operators  $(\mathcal{L}_1,\mathcal{L}_2)$, if let
  \begin{align}
 L^+=\iota_{\Lambda^{-1}}\Delta^{-1}\cdot r\cdot\mathcal{L}_1 \cdot r^{-1}\cdot\Delta,\quad
L^-=\iota_{\Lambda}\Delta^{-1}\cdot r\cdot\mathcal{L}_2 \cdot r^{-1}\cdot \Delta,\label{largeBKPlaxop2}
  \end{align}
  satisfy large BKP constraints $(L^{+})^*{\left(\Lambda-\Lambda^{-1}\right)}=(\Lambda-\Lambda^{-1}) L^{-},$
 then  we can get $\pa_{\mathbf{t}_k}L^{\pm}=[B_{k},L^{\pm}],$
   where
 $$B_k=\iota_{\Lambda^{-1}}\Delta^{-1}\cdot r \cdot(\mathcal{L}_1^k)_{\geq0}\cdot r^{-1}\cdot\Delta
 -\iota_{\Lambda}\Delta^{-1}\cdot r \cdot(\mathcal{L}_2^k)_{<0}\cdot r^{-1}\cdot \Delta
 +\iota_{\Lambda^{-1}}\Delta^{-1}\cdot \pa_{{x}_k}r\cdot r^{-1} \cdot\Delta-\iota_{\Lambda}\Delta^{-1}\cdot \pa_{{y}_k}r \cdot r^{-1}\cdot \Delta.$$
Similarly by $u^+_{-1}=u^-_{-1}$, we can get balance condition II:
$e^{\varphi(n)}=r(n)^{-2}$. Next by Lemma \ref{operatorequvalentDelta} and $\pa_{\mathbf{t}_k}\varphi(n)=(\mathcal{L}_1^k)_{[0]}+(\mathcal{L}_2^k)_{[0]}$ {\rm(see \eqref{todafirst})},
 we can find
 $\pa_{\mathbf{t}_k}r=\left(-((\mathcal{L}_1)^k_{\geq0})^*
+((\mathcal{L}_2)^k_{<0})^*\right)(r)$.
Finally by using for operator $A$ and function $f$ \cite{Cheng12019}
 \begin{align*}
 &(\iota_{\Lambda^{-1}}\Delta^{-1} \cdot f \cdot A \cdot f^{-1}\cdot\Delta)_{\Delta,\geq 1}=\iota_{\Lambda^{-1}}\Delta^{-1} \cdot f \cdot A_{\geq 0} \cdot f^{-1} \cdot \Delta-\iota_{\Lambda^{-1}} \cdot \Delta^{-1} \cdot A^*_{\geq 0}(f) \cdot f^{-1} \cdot \Delta,\\
&(\iota_{\Lambda} \Delta^{-1} \cdot f \cdot A \cdot f^{-1} \cdot \Delta)_{\Delta^*,\geq 1}=\iota_{\Lambda}\Delta^{-1} \cdot f \cdot A_{< 0} \cdot f^{-1} \cdot \Delta-\iota_{\Lambda}\Delta^{-1} \cdot A^*_{< 0}(f) \cdot f^{-1} \cdot \Delta,
 \end{align*}
 we can derive $B_k=(L^+)^k_{\Delta,\geq1}-(L^-)^k_{\Delta^*,\geq1}.$
 Therefore this transformation \eqref{largeBKPlaxop2} is just anti--Miura transformation $\mathcal{T}_2=\Delta^{-1}r$.
 \end{remark}
By \eqref{largeBKPandmToda}, \eqref{largeBKPandmToda2} and Proposition \ref{muratras}, we can find
\begin{align*}
&\Psi_1(n,\mathbf{t},-\mathbf{t},z)=z^{-1}b_0^{-1}(n+1)
\Big(\Psi^+(n+1,\mathbf{t},z)-\Psi^+(n,\mathbf{t},z)\Big),\\
&\Psi_2(n,\mathbf{t},-\mathbf{t},z)=-z^{-1}b_0^{-1}(n+1)
\Big({\Psi^-}(n+1,\mathbf{t},z^{-1})-{\Psi^-}(n,\mathbf{t},z^{-1})\Big),\\
&\Psi_1^*(n,\mathbf{t},-\mathbf{t},z)=
zb_0(n+1)\Big(\Psi^-(n,\mathbf{t},z)+\Psi^-(n+1,\mathbf{t},z)\Big),\\
&\Psi_2^*(n,\mathbf{t},-\mathbf{t},z)=z^{-1}b_0(n+1)
\Big(\Psi^+(n,\mathbf{t},z^{-1})+\Psi^+(n+1,\mathbf{t},z^{-1})\Big),
\end{align*}
 satisfy Toda bilinear equation \eqref{TodaMiura}.

Let us summarize above results in the following proposition.
\begin{proposition}\label{Pro:miura2}
Given the large BKP objects: Lax operators $L^{\pm}$, wave operators ${S}^{\pm}$, wave functions $\Psi^{\pm}$, if set $\mathcal{L}_i$, $\mathcal{S}_i$, $\Psi_i$ and $\Psi_i^*$ in the form of Table III
\begin{center}
\begin{tabular}{llll}
\multicolumn{4}{c}{Table III. Miura transformations: large BKP hierarchy $\rightarrow$ Toda hierarchy}\\
\hline \hline
 & &$T_1=b_0(n)^{-1}$ &$T_2=b_0(n+1)^{-1}\Delta$  \\
\hline
{\rm Wave operator}&$\mathcal{S}_1=$ &${T}_1\cdot{S}^+$& ${T}_2\cdot{S}^+\Lambda^{-1}$\\
&$\mathcal{S}_2=$&${T}_1\cdot({S}^-)^*$& $-{T}_2\cdot({S}^-)^*$\\
{\rm Lax operator} &$\mathcal{L}_1=$ &
${T}_1\cdot{L}^+\cdot{T}_1^{-1}$& $T_2\cdot L_1\cdot\iota_{\Lambda^{-1}}T_2^{-1}$
 \\
 &$\mathcal{L}_2=$&$
{T}_1\cdot{L}^-\cdot{T}_1^{-1}$ & $ T_2\cdot L_2\cdot\iota_{\Lambda}T_2^{-1}$
 \\
 {\rm Wave function}&$\Psi_1(n,\mathbf{t},-\mathbf{t},z)=$&${T}_1\Psi^+(n,\mathbf{t},z)$&
$z^{-1}T_2\Big(\Psi^+(n,\mathbf{t},z)\Big)$\\
 &$\Psi_2(n,\mathbf{t},-\mathbf{t},z)=$&$z^{-1}{T}_1\Psi^-(n,\mathbf{t},z^{-1})$
 &$-z^{-1}T_2\Big({\Psi^-}(n,\mathbf{t},z^{-1})\Big)$\\
 {\rm Adjoint-- }&
$\Psi_1(n,\mathbf{t},-\mathbf{t},z)^*$=&${T}_1^{-1}
(\Lambda^{-1}-\Lambda)\Big(\Psi^-(n,\mathbf{t},z)\Big)$
&
$zb_0(n+1)(\Lambda+1)\Big(\Psi^-(n,\mathbf{t},z)\Big)$\\
{\rm wave function}&$\Psi_2(n,\mathbf{t},-\mathbf{t},z)^*=$&$
-z^{-1}{T}_1^{-1}(\Lambda^{-1}-\Lambda)\Big(\Psi^+(n,\mathbf{t},z^{-1})\Big)$
&$z^{-1}b_0(n+1)
(\Lambda+1)\Big(\Psi^+(n,\mathbf{t},z^{-1})\Big)$\\
\hline
\end{tabular}
\end{center}
\noindent then $\Psi_i$ and $\Psi_i^*$ satisfy \eqref{TodaMiura}, and $(\mathcal{L}_i, \mathcal{S}_i)$ satisfies Toda  evolution equations  \eqref{muiralax1}, \eqref{muiralax2} and \eqref{Todalaxmiura2}.
\end{proposition}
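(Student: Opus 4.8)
The plan is to recognize Proposition \ref{Pro:miura2} as the restriction of the general mToda-to-Toda Miura picture of Proposition \ref{muratras} to the large BKP sub-hierarchy set up in Section \ref{largeBKPsubmToda}. First I would record the embedding explicitly: under the constraint \eqref{constraint1} together with the substitution $x_k=\mathbf{t}_k$, $y_k=-\mathbf{t}_k$ (so that $\pa_{\mathbf{t}_k}=\pa_{x_k}-\pa_{y_k}$ and $\bar{\mathbf{t}}=0$), the large BKP data are genuine mToda data via $L^+=L_1$, $L^-=L_2$, $S^+=S_1$ and $(S^-)^*=S_2$ as in \eqref{Wtrans1}. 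In particular the $\Lambda^0$-coefficient $b_0(n)$ of $S^+$ coincides with the coefficient $c_0(n)$ appearing in $T_1$ of Table I, so the two maps $T_1=b_0(n)^{-1}$ and $T_2=b_0(n+1)^{-1}\Delta$ are exactly the specializations of the Miura transformations in Table I.

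Given this identification, the Toda evolution equations and the Toda bilinear equation come for free. Proposition \ref{muratras} already guarantees that the $(\mathcal{L}_i,\mathcal{S}_i)$ obtained by either $T_1$ or $T_2$ satisfy \eqref{todalax}--\eqref{todalax3} and that the associated Toda wave functions obey the bilinear equation \eqref{wavebilinearcomponent}; restricting to the anti-diagonal $y=-x$ and writing $\pa_{\mathbf{t}_k}=\pa_{x_k}-\pa_{y_k}$ turns \eqref{todalax} into \eqref{muiralax1} with $\mathcal{B}_k=(\mathcal{L}_1^k)_{\geq0}-(\mathcal{L}_2^k)_{<0}$ and collapses \eqref{wavebilinearcomponent} to \eqref{TodaMiura}. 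To produce the wave-function entries of Table III I would simply substitute the large-BKP-to-mToda dictionary \eqref{largeBKPandmToda}--\eqref{largeBKPandmToda2} into the wave-function rows of Table I; each entry is then a finite combination of shifts of $\Psi^{\pm}$, matching the formulas displayed immediately above the proposition.

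The real content, and the step I expect to be the main obstacle, is deriving the two constraints \eqref{muiralax2} and \eqref{Todalaxmiura2} on the Toda Lax operators, since these do not follow from Proposition \ref{muratras} alone. They are obtained by conjugating the large BKP constraint $(L^+)^*(\Lambda-\Lambda^{-1})=(\Lambda-\Lambda^{-1})L^-$ through the Miura maps. For $T_1=b_0(n)^{-1}$, which is a multiplication operator with $T_1^*=T_1$, I would write $L^+=T_1^{-1}\mathcal{L}_1 T_1$ and $L^-=T_1^{-1}\mathcal{L}_2 T_1$, substitute into the constraint, and simplify $T_1^{-1}(\Lambda-\Lambda^{-1})T_1^{-1}=b_0(n)b_0(n+1)\Lambda-b_0(n-1)b_0(n)\Lambda^{-1}$; balance condition I \eqref{balancecond}, $e^{\varphi(n)}=b_0(n)^{-1}b_0(n+1)^{-1}$, rewrites this as $e^{-\varphi(n)}\Lambda-e^{-\varphi(n-1)}\Lambda^{-1}=\Gamma_1-\Gamma_1^*$ with $\Gamma_1=e^{-\varphi(n)}\Lambda$, yielding \eqref{muiralax2}. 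For $T_2=b_0(n+1)^{-1}\Delta$ the same idea applies, but now $T_2$ carries a $\Delta$-factor and the conjugation involves the two expansions $\iota_{\Lambda^{\pm1}}T_2^{-1}$; one must track how $\iota_{\Lambda}(\Lambda-\Lambda^{-1})^{-1}$ transforms under the adjoint and invoke balance condition II \eqref{condition2}, $e^{\varphi(n)}=b_0(n+1)^{-2}$, to recombine the pieces into the symmetric intertwiner $T_2\cdot\iota_{\Lambda}(\Lambda-\Lambda^{-1})^{-1}\cdot T_2^*$ of \eqref{Todalaxmiura2}. This bookkeeping with the $\iota$-expansions and the $\Delta$-factor is the delicate part.

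Finally I would observe that both balance conditions are compatible with the dynamics: Propositions \ref{balancecondition1} and \ref{balancecondition2}, via Lemmas \ref{operatorequvalent} and \ref{operatorequvalentDelta}, show that $\pa_{\mathbf{t}_k}$ annihilates $e^{\varphi(n)}-b_0(n)^{-1}b_0(n+1)^{-1}$ and $e^{\varphi(n)}-b_0(n+1)^{-2}$ respectively, so that $\Gamma_1$ and the intertwiner in \eqref{Todalaxmiura2} are consistently defined along the flow. Assembling the evolution equations \eqref{muiralax1}, the constraints \eqref{muiralax2} and \eqref{Todalaxmiura2}, the bilinear equation \eqref{TodaMiura}, and the Table III wave-function formulas then completes the proof.
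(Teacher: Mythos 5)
Your proposal is correct and follows essentially the same route as the paper, which states Proposition \ref{Pro:miura2} as a summary of the derivation immediately preceding it: embedding large BKP into mToda via \eqref{Wtrans1}, specializing the Miura maps of Proposition \ref{muratras} to the anti-diagonal $y=-x$, obtaining the constraints \eqref{muiralax2} and \eqref{Todalaxmiura2} from the large BKP constraint together with balance conditions I and II, and reading off the wave-function entries from the dictionary \eqref{largeBKPandmToda}--\eqref{largeBKPandmToda2}. Your explicit conjugation computation $T_1^{-1}(\Lambda-\Lambda^{-1})T_1^{-1}=\Gamma_1-\Gamma_1^*$ is a correct spelling-out of a step the paper leaves implicit.
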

Next  the relations of tau functions for   large BKP and  Toda  hierarchies are given in the following theorem.
\begin{theorem}\label{taurelation}
Assume   $\tau^{{\rm Toda}}_n(\mathbf{t},-\mathbf{t})$ to be the Toda tau function derived from large BKP by Miura transformations $T_i$ $(i=1,2)$, then
\begin{align*}
\tau^{{\rm Toda}}_n(\mathbf{t},-\mathbf{t})=\left\{\begin{matrix}
{\rm const}\cdot \tau_n(\mathbf{t})\tau_{n-1}(\mathbf{t}),  &  {\rm in}\quad T_1,\\
{\rm const}\cdot  \tau^2_n(\mathbf{t}),& {\rm in}\quad T_2,
\end{matrix}\right.
\end{align*}
where $\tau_n(\mathbf{t})$ is large BKP  tau function.
\end{theorem}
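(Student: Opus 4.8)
The plan is to reduce the statement to two identities between the mToda tau functions $(\tau_{0,n},\tau_{1,n})$ of the hierarchy into which large BKP embeds (Section \ref{largeBKPsubmToda}), restricted to the locus $(x,y)=(\mathbf{t},-\mathbf{t})$, and the large BKP tau function $\tau_n$, namely
\[
\tau_{0,n}(\mathbf{t},-\mathbf{t})={\rm const}\cdot\tau_n(\mathbf{t})\tau_{n-1}(\mathbf{t}),\qquad
\tau_{1,n}(\mathbf{t},-\mathbf{t})={\rm const}\cdot\tau_n^2(\mathbf{t}).
\]
Indeed, by Proposition \ref{muratras} the Toda tau function obtained through $T_1$ satisfies $\tau_n^{\rm Toda}={\rm const}\cdot\tau_{0,n}$, while the one obtained through $T_2$ satisfies $\tau_n^{\rm Toda}={\rm const}\cdot\tau_{1,n}$, so granting the two displayed identities proves the theorem. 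Throughout I would use that the leading coefficient of $S^+$ is $b_0(n)=\tau_{n-1}(\mathbf{t})/\tau_n(\mathbf{t})$, obtained by letting $z\to\infty$ in the tau representation of $\Psi^+$ from Theorem \ref{Theorem:taufuncyion}.

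First I would fix the $n$-dependence. Since the $T_1$-Toda hierarchy has $\bar w_0(n)=\tau_{0,n+1}/\tau_{0,n}$, balance condition I \eqref{balancecond} gives $\tau_{0,n+1}/\tau_{0,n}=b_0(n)^{-1}b_0(n+1)^{-1}=\tau_{n+1}/\tau_{n-1}$, whence $\tau_{0,n}(\mathbf{t},-\mathbf{t})/(\tau_n\tau_{n-1})$ is independent of $n$; likewise balance condition II \eqref{condition2} makes $\tau_{1,n}(\mathbf{t},-\mathbf{t})/\tau_n^2$ independent of $n$. Moreover, letting $z\to\infty$ in the relation $\Phi_1(n,\mathbf{t},-\mathbf{t},z)=\Psi^+(n,\mathbf{t},z)$ of \eqref{largeBKPandmToda} yields the purely algebraic identity $\tau_{0,n}(\mathbf{t},-\mathbf{t})\,\tau_n=\tau_{1,n}(\mathbf{t},-\mathbf{t})\,\tau_{n-1}$, so it suffices to treat $\tau_{0,n}$: the factor for $\tau_{1,n}$ is then forced to be the same.

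The main obstacle is to show that the remaining ($n$-independent) factor $c(\mathbf{t}):=\tau_{0,n}(\mathbf{t},-\mathbf{t})/(\tau_n\tau_{n-1})$ is genuinely constant, i.e. $\partial_{\mathbf{t}_k}\log c=0$. For this I would read off the diagonal flows of $\log\tau_{0,n}$ from \eqref{largeBKPandmToda} and \eqref{largeBKPandmToda2}. Comparing tau representations in $\Phi_1(n,\mathbf{t},-\mathbf{t},z)=\Psi^+(n,\mathbf{t},z)$ and cancelling the common exponential gives
\[
\log\tau_{0,n}(\mathbf{t}-[z^{-1}],-\mathbf{t})-\log\tau_{0,n}(\mathbf{t},-\mathbf{t})=\log\tau_{n-1}(\mathbf{t}-[z^{-1}])-\log\tau_{n-1}(\mathbf{t}),
\]
and comparing in $\Phi_2(n,\mathbf{t},-\mathbf{t},z)=z^{-1}\Psi^-(n,\mathbf{t},z^{-1})$, after the index shift $n\to n-1$, gives
\[
\log\tau_{0,n}(\mathbf{t},-\mathbf{t}-[z])-\log\tau_{0,n}(\mathbf{t},-\mathbf{t})=\log\tau_n(\mathbf{t}+[z])-\log\tau_n(\mathbf{t}).
\]
Expanding both in powers of $z$ recovers $\partial_{x_k}\log\tau_{0,n}|_{\rm diag}$ and $\partial_{y_k}\log\tau_{0,n}|_{\rm diag}$. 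At first order these are clean, $\partial_{x_1}\log\tau_{0,n}|_{\rm diag}=\partial_{\mathbf{t}_1}\log\tau_{n-1}$ and $\partial_{y_1}\log\tau_{0,n}|_{\rm diag}=-\partial_{\mathbf{t}_1}\log\tau_n$; for $k\ge 2$ each separately carries a second-order correction, but these corrections are symmetric mixed partials $\partial_{x_i}\partial_{y_j}\log\tau_{0,n}|_{\rm diag}$ that cancel in the antisymmetric combination $\partial_{\mathbf{t}_k}=\partial_{x_k}-\partial_{y_k}$ because mixed partials commute (I checked this explicitly at $k=2$). Carrying this out by induction on $k$ would give
\[
\partial_{\mathbf{t}_k}\log\tau_{0,n}(\mathbf{t},-\mathbf{t})=\partial_{\mathbf{t}_k}\log\tau_{n-1}+\partial_{\mathbf{t}_k}\log\tau_n=\partial_{\mathbf{t}_k}\log\big(\tau_n\tau_{n-1}\big),
\]
so $c(\mathbf{t})$ is constant. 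I expect this cancellation of the symmetric corrections to be the delicate point, everything else being bookkeeping; in particular the naive termwise matching of the $z$-expansions does \emph{not} give $\partial_{x_k}\log\tau_{0,n}|_{\rm diag}=\partial_{\mathbf{t}_k}\log\tau_{n-1}$ for $k\ge2$, and it is only the difference $\partial_{x_k}-\partial_{y_k}$ that is clean. Combining the three steps then yields $\tau_{0,n}(\mathbf{t},-\mathbf{t})={\rm const}\cdot\tau_n\tau_{n-1}$, and via the algebraic identity $\tau_{1,n}(\mathbf{t},-\mathbf{t})={\rm const}\cdot\tau_n^2$, which through Proposition \ref{muratras} are exactly the two asserted relations for $T_1$ and $T_2$.
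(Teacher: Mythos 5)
Your overall architecture is sound and genuinely different from the paper's: you route both Miura transformations through the mToda pair $(\tau_{0,n},\tau_{1,n})$ via Proposition \ref{muratras}, fix the $n$-dependence with the two balance conditions, and tie the two cases together with the clean algebraic identity $\tau_{0,n}\tau_n=\tau_{1,n}\tau_{n-1}$ obtained from $z\to\infty$ in $\Phi_1=\Psi^+$ — so you only have to control one scalar ratio $c(\mathbf{t})$. Both of your functional equations are correct (they are the same data as the paper's \eqref{tau1relation}--\eqref{tau2relation} after dividing by their $z\to\infty$, resp.\ $z\to 0$, limits). The gap is in the one step you yourself flag as delicate: the claim that the second- and higher-order corrections in the $z$-expansions ``are symmetric mixed partials that cancel in $\partial_{x_k}-\partial_{y_k}$ because mixed partials commute,'' established only at $k=2$ and then asserted ``by induction.'' Already at $k=3$ the mechanism is not what you describe: the order-$z^{-3}$ coefficient of the $x$-equation involves $\partial_{x_1}\partial_{x_2}F$ and $\partial_{x_1}^3F$ on the diagonal, and matching it against the $y$-side requires, after eliminating knowns, the relation $\partial_{y_1}\partial_{x_2}F-\partial_{x_1}\partial_{y_2}F=\partial_{x_1}\partial_{y_1}^2F-\partial_{x_1}^2\partial_{y_1}F$ on the diagonal. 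This does not follow from commutativity of mixed partials; it follows only if you also bring in the $\mathbf{t}_2$-derivative of the order-$z^{-1}$ identities (giving $\partial_{x_1}\partial_{x_2}F=\partial_{x_1}\partial_{y_2}F$ and $\partial_{y_1}\partial_{x_2}F=\partial_{y_1}\partial_{y_2}F$ on the diagonal). So the induction must track $\mathbf{t}$-derivatives of \emph{all} lower-order coefficient identities simultaneously, and as written the inductive step is neither formulated nor proved; since this is the crux of the theorem, the proof is incomplete.

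The good news is that your two functional equations already close the gap without any expansion or induction — this is essentially what the paper does. From your second equation at the shifted point $\mathbf{t}-[z^{-1}]$ with $[w]=[z^{-1}]$ you get $\log\tau_{0,n}(\mathbf{t}-[z^{-1}],-\mathbf{t})-\log\tau_{0,n}(\mathbf{t}-[z^{-1}],-\mathbf{t}+[z^{-1}])=\log\tau_n(\mathbf{t})-\log\tau_n(\mathbf{t}-[z^{-1}])$; combining this with your first equation eliminates the off-diagonal value $\log\tau_{0,n}(\mathbf{t}-[z^{-1}],-\mathbf{t})$ and yields $h(\mathbf{t})=h(\mathbf{t}-[z^{-1}])$ for all $z$, where $h(\mathbf{t})=\log\tau_{0,n}(\mathbf{t},-\mathbf{t})-\log\tau_n(\mathbf{t})-\log\tau_{n-1}(\mathbf{t})$, whence $h$ is constant by the standard Miwa-shift argument. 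This is precisely the paper's manipulation (substitute $n-1\to n$, $\mathbf{t}-[z^{-1}]\to\mathbf{t}$, $z^{-1}\to z$ in one relation and subtract from the other, then invoke that $(\Lambda e^{\xi(\tilde{\pa},z)}-1)h=0$ forces $h$ constant), except that the paper runs it separately for $T_1$ and $T_2$ directly on $\tau_n^{\rm Toda}$, whereas your reduction lets you do it once and read off both cases from $\tau_{0,n}\tau_n=\tau_{1,n}\tau_{n-1}$. With that substitution in place of the expansion argument, your proof is complete and somewhat more economical than the paper's.
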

\begin{proof}
Firstly by   Theorem  \ref{Theorem:taufuncyion}, we can get
$b_0(n)=\frac{\tau_{n-1}(\mathbf{t})
}{\tau_{n}(\mathbf{t})}$.
Then for Miura transformation $T_1$,   we can know by  Proposition \ref{Pro:miura2},
\begin{align*}
&\Psi_1(n,\mathbf{t},-\mathbf{t},z)=b_0(n)^{-1}\Psi^+(n,\mathbf{t},z),
\quad\Psi_2(n,\mathbf{t},-\mathbf{t},z)=z^{-1}b_0(n)^{-1}\Psi^-(n,\mathbf{t},z^{-1}).
\end{align*}
By \eqref{TLtau}, it can be found that
\begin{align}
&\frac{\tau^{{\rm Toda}}_n(\mathbf{t}-[z^{-1}],-{\mathbf{t}})}
{\tau^{{\rm Toda}}_n(\mathbf{t},-{\mathbf{t}})}
=\frac{\tau_{n-1}(\mathbf{t}-[z^{-1}])}{\tau_{n-1}(\mathbf{t})},\label{tau1relation}\\
&\frac{\tau^{{\rm Toda}}_{n+1}(\mathbf{t},-{\mathbf{t}}-[z])}
{\tau^{{\rm Toda}}_n(\mathbf{t},-{\mathbf{t}})}
=\frac{\tau_{n+1}(\mathbf{t}+[z])}{\tau_{n-1}(\mathbf{t})}.\label{tau2relation}
\end{align}

Next if let $f(n,x,y)={\rm log}\tau^{{\rm Toda}}_n(x,y),$
$g(n,x)={\rm log}\tau_n(x),$
then \eqref{tau1relation} and \eqref{tau2relation}  become
\begin{align}
&g(n-1,\mathbf{t}-[z^{-1}])-g(n-1,\mathbf{t})
=f(n,\mathbf{t}-[z^{-1}],-\mathbf{t})-f(n,\mathbf{t},-\mathbf{t}),\label{tau3relation}\\
&g(n+1,\mathbf{t}+[z])-g(n-1,\mathbf{t})
=f(n+1,\mathbf{t},-\mathbf{t}-[z])-f(n,\mathbf{t},-\mathbf{t}).\label{tau4relation}
\end{align}
Further if  set  $n-1\rightarrow n$, $\mathbf{t}-[z^{-1}]\rightarrow\mathbf{t}$ and $z^{-1}\rightarrow z$ in \eqref{tau3relation}, then we can get
\begin{align*}
&g(n,\mathbf{t})-g(n,\mathbf{t}+[z])
=f(n+1,\mathbf{t},-\mathbf{t}-[z])-f(n+1,\mathbf{t}+[z],-\mathbf{t}-[z]).
\end{align*}

If subtract the above equation from \eqref{tau4relation}, we can  get
\begin{align*}
&\Big(g(n+1,\mathbf{t}+[z])+g(n,\mathbf{t}+[z])\Big)
-\Big(g(n,\mathbf{t})+g(n-1,\mathbf{t})\Big)
=f(n+1,\mathbf{t}+[z],-\mathbf{t}-[z])-f(n,\mathbf{t},-\mathbf{t}).
\end{align*}
Further by the fact that  if function $h(n,\mathbf{t})$ satisfy $(\Lambda e^{\xi(\tilde{\pa},z)}-1)h(n,\mathbf{t})=0$, then $h(n,\mathbf{t})$ is a constant, therefore we can obtain
$$
g(n,\mathbf{t})+g(n-1,\mathbf{t})=f(n,\mathbf{t},-\mathbf{t})+{\rm const},$$
which implies
\begin{align*}
\tau^{{\rm Toda}}_{n}(\mathbf{t},-\mathbf{t})={\rm const}\cdot\tau_n(\mathbf{t})\tau_{n-1}(\mathbf{t}).
\end{align*}

While for Miura transformations $T_2$,  we can get by Proposition \ref{Pro:miura2}
\begin{align*}
&\Psi_1(n,\mathbf{t},-\mathbf{t},z)=z^{-1}b_0^{-1}(n+1)
\Big(\Psi^+(n+1,\mathbf{t},z)-\Psi^+(n,\mathbf{t},z)\Big),\\
&\Psi_2(n,\mathbf{t},-\mathbf{t},z)=-z^{-1}b_0^{-1}(n+1)
\Big({\Psi^-}(n+1,\mathbf{t},z^{-1})-{\Psi^-}(n,\mathbf{t},z^{-1})\Big).
\end{align*}
By \eqref{TLtau}, we can get
\begin{align}
&\frac{\tau^{{\rm Toda}}_n(\mathbf{t}-[z^{-1}],-{\mathbf{t}})}
{\tau^{{\rm Toda}}_n(\mathbf{t},-{\mathbf{t}})}
=\frac{\tau_{n}(\mathbf{t}-[z^{-1}])\tau_{n}(\mathbf{t})}{\tau_{n}^2(\mathbf{t})}
-\frac{\tau_{n+1}(\mathbf{t})\tau_{n-1}(\mathbf{t}-[z^{-1}])}{\tau_{n}^2(\mathbf{t})}z^{-1},\label{2tau1relation}\\
&\frac{\tau^{{\rm Toda}}_{n+1}(\mathbf{t},-{\mathbf{t}}-[z])}
{\tau^{{\rm Toda}}_n(\mathbf{t},-{\mathbf{t}})}
=\frac{\tau_{n+1}(\mathbf{t}+[z])\tau_{n+1}(\mathbf{t})}{\tau_{n}^2(\mathbf{t})}
-\frac{\tau_{n+2}(\mathbf{t}+[z])\tau_{n}(\mathbf{t})}{\tau_{n}^2(\mathbf{t})}z.\label{2tau2relation}
\end{align}

Further if  set  $n-1\rightarrow n$, $\mathbf{t}-[z^{-1}]\rightarrow\mathbf{t}$ and $z^{-1}\rightarrow z$  in \eqref{2tau1relation},
\begin{align*}
\frac{\tau^{{\rm Toda}}_{n+1}(\mathbf{t},-{\mathbf{t}}-[z])}
{\tau^{{\rm Toda}}_{n+1}(\mathbf{t}+[z],-{\mathbf{t}}-[z])}
=\frac{\tau_{n+1}(\mathbf{t}+[z])\tau_{n+1}(\mathbf{t})}{\tau_{n+1}^2(\mathbf{t}+[z])}
-\frac{\tau_{n+2}(\mathbf{t}+[z])\tau_{n}(\mathbf{t})}{\tau_{n+1}^2(\mathbf{t}+[z])}z.\label{2tau3relation}
\end{align*}
Next according to  \eqref{2tau2relation},
\begin{align*}
\frac{\tau^{{\rm Toda}}_{n}(\mathbf{t},-{\mathbf{t}})}{\tau^{{\rm Toda}}_{n+1}(\mathbf{t}+[z],-{\mathbf{t}}-[z])}=\frac{\tau_n^2(\mathbf{t})}
{\tau_{n+1}^2(\mathbf{t}+[z])},
\end{align*}
which implies $$
f(n,\mathbf{t},-\mathbf{t})-f(n+1,\mathbf{t}+[z],-\mathbf{t}-[z])=2 \Big(g(n,\mathbf{t})-g(n+1,\mathbf{t}+[z])\Big).$$
Therefore
$$f(n,\mathbf{t},-\mathbf{t})=2g(n,\mathbf{t})+{\rm const},$$
which is just
$\tau^{{\rm Toda}}_{n}(\mathbf{t},-{\mathbf{t}})={\rm const}\cdot\tau_n^2(\mathbf{t}).$
\end{proof}
Further according to Remark \ref{taufunctionmToda}, we have the corollary below.
\begin{corollary}\label{mTodatauToda}
Given large BKP tau function  $\tau_n(\mathbf{t})$, we can find $\tau_n(\mathbf{t})\tau_{n-1}(\mathbf{t})$ and $\tau_n^2(\mathbf{t})$ are different Toda tau functions, if setting ${\tau}^{\rm Toda}_{n}(\mathbf{t},-\mathbf{t})=\tau_n(\mathbf{t})\tau_{n-1}(\mathbf{t})$ and $\tilde{\tau}_{n}^{\rm Toda}(\mathbf{t},-\mathbf{t})=\tau_n^2(\mathbf{t})$, then $\big(\tau_{n}^{\rm Toda}(\mathbf{t},-\mathbf{t}),\tilde{\tau}_{n}^{\rm Toda}(\mathbf{t},-\mathbf{t})\big)$ satisfies bilinear equation of mToda hierarchy, that is
\begin{align*}
&{\rm Res}_{z}\left(z^{n-n'-1}\tau_{n}^{\rm Toda}(\mathbf{t}-[z^{-1}],-\mathbf{t})
\cdot\tilde{\tau}_{n'}^{\rm Toda}(\mathbf{t}'+{[z^{-1}]},-\mathbf{t})e^{\xi(\mathbf{t}-\mathbf{t}',z)}\right.\nonumber\\
&\quad\quad\quad\left.
+z^{n'-n-2}\tau_{n+1}^{\rm Toda}(\mathbf{t},-\mathbf{t}-{[z^{-1}]})
\cdot\tilde{\tau}_{n'-1}^{\rm Toda}(\mathbf{t}',-\mathbf{t}'+[z^{-1}])e^{-\xi(\mathbf{t}-\mathbf{t}',z)}\right)=\tilde{\tau}_{n}^{\rm Toda}
(\mathbf{t},-\mathbf{t})\tau_{n'}^{\rm Toda}(\mathbf{t}',-\mathbf{t}').
\end{align*}

\end{corollary}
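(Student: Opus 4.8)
The plan is to recognize the claimed identity as the mToda bilinear equation of Remark \ref{taufunctionmToda}, specialized to the large BKP sub-locus. First I would use the embedding of Section \ref{largeBKPsubmToda}: the given large BKP tau function $\tau_n(\mathbf{t})$ determines an mToda solution with tau pair $(\tau_{0,n}(x,y),\tau_{1,n}(x,y))$, restricted to $x=\mathbf{t}$, $y=-\mathbf{t}$. Applying the two Miura transformations to this single mToda pair, Proposition \ref{muratras} and Theorem \ref{taurelation} give two Toda tau functions: $T_1$ (acting on $\tau_{0,n}$) produces $\tau_n^{\rm Toda}$ with $\tau_n^{\rm Toda}(\mathbf{t},-\mathbf{t})={\rm const}\cdot\tau_n(\mathbf{t})\tau_{n-1}(\mathbf{t})$, while $T_2$ (acting on $\tau_{1,n}$) produces $\tilde{\tau}_n^{\rm Toda}$ with $\tilde{\tau}_n^{\rm Toda}(\mathbf{t},-\mathbf{t})={\rm const}\cdot\tau_n^2(\mathbf{t})$. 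This is precisely the assignment appearing in the statement of the Corollary.

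Next I would invoke Remark \ref{taufunctionmToda}, which asserts that whenever two Toda tau functions arise from one mToda tau pair through $T_1$ and $T_2$ respectively, the pair $(\tau_n^{\rm Toda},\tilde{\tau}_n^{\rm Toda})$ satisfies the mToda bilinear equation in the full times $(x,y)$, $(x',y')$. Since that equation is an identity in all four groups of variables, I would then specialize it at $(x,y)=(\mathbf{t},-\mathbf{t})$ and $(x',y')=(\mathbf{t}',-\mathbf{t}')$. A brief check confirms that every factor lands on the target: the shifts $x-[z^{-1}]$, $x'+[z^{-1}]$, $y-[z^{-1}]$, $y'+[z^{-1}]$ become exactly those in the claim, and the second exponential acquires its sign from $e^{\xi(y-y',z)}=e^{\xi(\mathbf{t}'-\mathbf{t},z)}=e^{-\xi(\mathbf{t}-\mathbf{t}',z)}$. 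This produces the stated identity.

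I expect the main difficulty to be conceptual rather than computational: one must be certain that the two Toda tau functions are obtained from \emph{the same} mToda solution, for only then does Remark \ref{taufunctionmToda} apply. This is guaranteed because both $T_1$ and $T_2$ act on the single pair $(\tau_{0,n},\tau_{1,n})$ attached to the fixed large BKP tau function, and the ${\rm const}$ factors of Theorem \ref{taurelation}, being compatible with the bilinearity of the equation, are absorbed into the standard multiplicative freedom of tau functions. A final point to verify is that the bilinear equation of Remark \ref{taufunctionmToda} holds off the diagonal $y=-x$, so that the unequal shifts of the first and second arguments are meaningful; this holds because the equation is stated for arbitrary mToda times before the large BKP restriction is imposed.
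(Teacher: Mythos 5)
Your proposal is correct and follows exactly the paper's route: the paper derives the corollary by combining Theorem \ref{taurelation} (identifying the $T_1$- and $T_2$-images of the large BKP solution with $\tau_n\tau_{n-1}$ and $\tau_n^2$) with Remark \ref{taufunctionmToda} (two Toda tau functions arising from one mToda solution via $T_1$ and $T_2$ form an mToda tau pair), then specializing to $(x,y)=(\mathbf{t},-\mathbf{t})$. Your explicit checks of the shifts, the sign in the second exponential, and the requirement that both Miura transformations act on the same underlying mToda solution are the right points to verify and match the paper's (largely implicit) argument.
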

\begin{remark}
The results in Theorem \ref{taurelation} and Corollary \ref{mTodatauToda} should be the generalization of relations of tau functions for modified BKP and modified KP hierarchies (see Theorem 24 in \cite{Guan2022}).
\end{remark}

\section{Conclusion and discussion}\label{section6}
In this paper, the Lax structure and tau function for the large BKP hierarchy are discussed. Firstly from bilinear equation, we obtain the corresponding Lax equation as follows,
\begin{align*}
&\pa_{\mathbf{t}_k}L^{\pm}=[B_{k},L^{\pm}],\quad (L^+)^*=(\Lambda-\Lambda^{-1})\cdot L^{-}\cdot\iota_{\Lambda}{\left(\Lambda-\Lambda^{-1}\right)^{-1}},\\
&B_{k}=(L^{+})^k_{\geq 1}-(L^{-})^k_{\leq -1}-\left((L^{+})^k_{\geq 1}-(L^{-})^k_{\leq -1}\right)|_{\Lambda=1}=(L^+)^k_{\Delta,\geq1}-(L^-)^k_{\Delta^*,\geq1},
\end{align*}
where in particular the relation of flow generator $B_k$ with Lax operators $L^{\pm}$ is obtained.  Also starting from Lax equation, the corresponding bilinear equation and existence of tau function are discussed, that is
 \begin{align*}
{\rm Res}_{z}{z^{-1}}\left(\Psi^+(n',\mathbf{t'},z)
\Psi^-(n'',\mathbf{t}'',z)+\Psi^-(n',\mathbf{t'},z)
\Psi^+(n'',\mathbf{t}'',z)\right)=\frac{1}{2}\left(1-(-1)^{n'-n''}\right),
\end{align*}
and
\begin{align*}
\Psi^{+}(n,\mathbf{t},z)
=\frac{\tau_{n-1}(\mathbf{t}-[z^{-1}])}{\tau_n(\mathbf{t})}
e^{\xi(\mathbf{t},z)}z^{n},\quad
\Psi^{-}(n,\mathbf{t},z)
=\frac{\tau_{n+1}(\mathbf{t}+[z^{-1}])}{\tau_n(\mathbf{t})}
e^{-\xi(\mathbf{t},z)}z^{-n-1}.
\end{align*}
After that, large BKP hierarchy is viewed as sub--hierarchy of mToda hierarchy. And based upon this, we understand two typical relations of tau functions between large BKP and Toda hierarchies by using two basic Miura transformations from mToda to Toda, that is,
\begin{align*}
\tau^{{\rm Toda}}_n(\mathbf{t},-\mathbf{t})=\left\{\begin{matrix}
{\rm const}\cdot \tau_n(\mathbf{t})\tau_{n-1}(\mathbf{t}),  &  {\rm in}\quad T_1=b_0(n)^{-1},\\
{\rm const}\cdot  \tau^2_n(\mathbf{t}),& {\rm in}\quad T_2=b_0(n+1)^{-1}\Delta.
\end{matrix}\right.
\end{align*}
And $(\tau_n(\mathbf{t})\tau_{n-1}(\mathbf{t}),\tau_n^2(\mathbf{t}))$ satisfies bilinear equation of the mToda hierarchy.

In terms of Lax structure, we can do many works for large BKP hierarchy, such as Darboux transformations, reductions, additional symmetries and Hamiltonian structures. Besides the basic relations above for tau functions between large BKP and Toda, we believe there are also other relations. Notice that ${\rm Ad}\Lambda^i$ is the
automorphism of Toda and mToda hierarchies, then by the following diagram\\

\begin{center}

\tikzset{every picture/.style={line width=0.75pt}} 

\begin{tikzpicture}[x=0.75pt,y=0.75pt,yscale=-1,xscale=1]

\draw   (101,120.6) .. controls (101,115.85) and (104.85,112) .. (109.6,112) -- (172.4,112) .. controls (177.15,112) and (181,115.85) .. (181,120.6) -- (181,146.4) .. controls (181,151.15) and (177.15,155) .. (172.4,155) -- (109.6,155) .. controls (104.85,155) and (101,151.15) .. (101,146.4) -- cycle ;
\draw    (182,135) -- (233,135) ;
\draw [shift={(235,135)}, rotate = 180] [color={rgb, 255:red, 0; green, 0; blue, 0 }  ][line width=0.75]    (10.93,-3.29) .. controls (6.95,-1.4) and (3.31,-0.3) .. (0,0) .. controls (3.31,0.3) and (6.95,1.4) .. (10.93,3.29)   ;
\draw   (238,122) .. controls (238,117.58) and (241.58,114) .. (246,114) -- (315.03,114) .. controls (319.45,114) and (323.03,117.58) .. (323.03,122) -- (323.03,146) .. controls (323.03,150.42) and (319.45,154) .. (315.03,154) -- (246,154) .. controls (241.58,154) and (238,150.42) .. (238,146) -- cycle ;
\draw    (324,133) -- (339.03,133.26) -- (379,133.96) ;
\draw [shift={(381,134)}, rotate = 181.01] [color={rgb, 255:red, 0; green, 0; blue, 0 }  ][line width=0.75]    (10.93,-3.29) .. controls (6.95,-1.4) and (3.31,-0.3) .. (0,0) .. controls (3.31,0.3) and (6.95,1.4) .. (10.93,3.29)   ;
\draw   (383,122) .. controls (383,117.58) and (386.58,114) .. (391,114) -- (452,114) .. controls (456.42,114) and (460,117.58) .. (460,122) -- (460,146) .. controls (460,150.42) and (456.42,154) .. (452,154) -- (391,154) .. controls (386.58,154) and (383,150.42) .. (383,146) -- cycle ;

\draw (105,115) node [anchor=north west][inner sep=0.75pt]   [align=left] {large BKP\\ \ \ \ \ \ $\tau_n(\mathbf{t})$\\ \ \ \ \ \ };
\draw (193,116) node [anchor=north west][inner sep=0.75pt]   [align=left] {$T_2$ $(T_1)$};
\draw (333,116) node [anchor=north west][inner sep=0.75pt]   [align=left] {${\rm Ad \Lambda^i}$};
\draw (241,114) node [anchor=north west][inner sep=0.75pt]   [align=left] { \ \ \ Toda \ \ \\ \ $\tau_n^{\rm Toda}(\mathbf{t},-\mathbf{t})$};
\draw (385,114) node [anchor=north west][inner sep=0.75pt]   [align=left] { \ \ \ Toda \ \ \\ \ $\tilde{\tau}_n^{\rm Toda}(\mathbf{t},-\mathbf{t})$};

\end{tikzpicture}
\end{center}
we can find
\begin{align*}
\tilde{\tau}_n^{\rm Toda}(\mathbf{t},-\mathbf{t})=\tau_{n+i}^{{\rm Toda}}(\mathbf{t},-\mathbf{t})=\left\{\begin{matrix}
{\rm const}\cdot \tau_{n+i}(\mathbf{t})\tau_{n+i-1}(\mathbf{t}),  &  {\rm in}\quad T_1=b_0(n)^{-1},\\
{\rm const}\cdot\tau_{n+i}^2(\mathbf{t}),& {\rm in}\quad T_2=b_0(n+1)^{-1}\Delta.
\end{matrix}\right.
\end{align*}
Similarly according to the diagram below,\\
\begin{center}
\tikzset{every picture/.style={line width=0.75pt}} 

\begin{tikzpicture}[x=0.75pt,y=0.75pt,yscale=-1,xscale=1]

\draw   (101,120.6) .. controls (101,115.85) and (104.85,112) .. (109.6,112) -- (172.4,112) .. controls (177.15,112) and (181,115.85) .. (181,120.6) -- (181,146.4) .. controls (181,151.15) and (177.15,155) .. (172.4,155) -- (109.6,155) .. controls (104.85,155) and (101,151.15) .. (101,146.4) -- cycle ;
\draw    (182,135) -- (233,135) ;
\draw [shift={(235,135)}, rotate = 180] [color={rgb, 255:red, 0; green, 0; blue, 0 }  ][line width=0.75]    (10.93,-3.29) .. controls (6.95,-1.4) and (3.31,-0.3) .. (0,0) .. controls (3.31,0.3) and (6.95,1.4) .. (10.93,3.29)   ;
\draw   (238,122) .. controls (238,117.58) and (241.58,114) .. (246,114) -- (315.03,114) .. controls (319.45,114) and (323.03,117.58) .. (323.03,122) -- (323.03,146) .. controls (323.03,150.42) and (319.45,154) .. (315.03,154) -- (246,154) .. controls (241.58,154) and (238,150.42) .. (238,146) -- cycle ;
\draw    (325,133) -- (340.03,133.26) -- (371,133.02) ;
\draw [shift={(373,133)}, rotate = 179.54] [color={rgb, 255:red, 0; green, 0; blue, 0 }  ][line width=0.75]    (10.93,-3.29) .. controls (6.95,-1.4) and (3.31,-0.3) .. (0,0) .. controls (3.31,0.3) and (6.95,1.4) .. (10.93,3.29)   ;
\draw   (376,121) .. controls (376,116.58) and (379.58,113) .. (384,113) -- (445,113) .. controls (449.42,113) and (453,116.58) .. (453,121) -- (453,145) .. controls (453,149.42) and (449.42,153) .. (445,153) -- (384,153) .. controls (379.58,153) and (376,149.42) .. (376,145) -- cycle ;
\draw   (504,119.2) .. controls (504,114.67) and (507.67,111) .. (512.2,111) -- (573.8,111) .. controls (578.33,111) and (582,114.67) .. (582,119.2) -- (582,143.8) .. controls (582,148.33) and (578.33,152) .. (573.8,152) -- (512.2,152) .. controls (507.67,152) and (504,148.33) .. (504,143.8) -- cycle ;
\draw    (455,133) -- (502,132.04) ;
\draw [shift={(504,132)}, rotate = 178.83] [color={rgb, 255:red, 0; green, 0; blue, 0 }  ][line width=0.75]    (10.93,-3.29) .. controls (6.95,-1.4) and (3.31,-0.3) .. (0,0) .. controls (3.31,0.3) and (6.95,1.4) .. (10.93,3.29)   ;

\draw (104,112) node [anchor=north west][inner sep=0.75pt]   [align=left] {large BKP\\ \ \ \ \ \ $\tau_n({\mathbf{t}})$\\ \ \ \ \ \ };
\draw (192,116) node [anchor=north west][inner sep=0.75pt]   [align=left] {$T_2$};
\draw (333,116) node [anchor=north west][inner sep=0.75pt]   [align=left] {$\mathcal{T}_1$};
\draw (246,114) node [anchor=north west][inner sep=0.75pt]   [align=left] { \ \ \ Toda \ \ \\ \ $\tau_n^{\rm Toda}(\mathbf{t},-\mathbf{t})$};
\draw (380,114) node [anchor=north west][inner sep=0.75pt]   [align=left] {large BKP\\ \ \ \ \ \ \ $\tilde{\tau}_n({\mathbf{t}})$};
\draw (464,115) node [anchor=north west][inner sep=0.75pt]   [align=left] {$T_2$};
\draw (505,113) node [anchor=north west][inner sep=0.75pt]   [align=left] { \ \ \ Toda \ \ \\ \ $\tilde{\tau}_n^{\rm Toda}(\mathbf{t},-\mathbf{t})$};

\end{tikzpicture}
\end{center}
we can know
$$\tilde{\tau}_{n}^{{\rm Toda}}(\mathbf{t},-\mathbf{t})\tilde{\tau}_{n-1}^{{\rm Toda}}(\mathbf{t},-\mathbf{t})={\rm const}\cdot\tau_{n}^4(\mathbf{t}).$$\\
\noindent{\bf Acknowledgements}: { Many thanks to Dr. Prokofev for useful information and comments. This work is supported by National Natural Science Foundations of China (Nos. 12171472 and 12261072) and ``Qinglan Project" of Jiangsu Universities. }\\

\noindent{\bf Conflict of Interest}:
The author have no conflicts to disclose.\\

\noindent{\bf Data availability}: Date sharing is not applicable to this article as no new data were created or analyzed in this study.\\

\end{document}